\author{%
  Holger Drees\\
  \small University of Hamburg\\
  \small Department of Mathematics\\
  \small Bundesstra{\ss}e 55, 20146 Hamburg, Germany\\
  \small \texttt{holger.drees@math.uni-hamburg.de}
  \and
  Johan Segers \qquad
  Micha\l~Warcho\l\\
  \small Universit\'e catholique de Louvain\\
  \small Institut de Statistique, Biostatistique et Sciences Actuarielles\\
  \small Voie du Roman Pays 20, B-1348 Louvain-la-Neuve, Belgium\\
  \small \texttt{johan.segers@uclouvain.be}, \texttt{michal.warchol@uclouvain.be}
}
\title{Statistics for Tail Processes of Markov Chains}
\date{\today}
\newcommand{\FfA}[1]{\ensuremath{\hat{\bar{F}}^{(\text{f},A_1)}_{n}\left({#1}\right)}}
\newcommand{\FbA}[1]{\ensuremath{\hat{\bar{F}}^{(\text{b},A_1)}_{n}\left({#1}\right)}}
\newcommand{\CDFfA}[1]{\ensuremath{\hat{F}^{(\text{f},A_1)}_{n}\left({#1}\right)}}
\newcommand{\CDFbA}[1]{\ensuremath{\hat{F}^{(\text{b},A_1)}_{n}\left({#1}\right)}}
\newcommand{\CDFbAe}[1]{\ensuremath{\hat{F}^{(\hat b,A_1)}_{n}\left({#1}\right)}}
\newcommand{\CDFmA}[1]{\ensuremath{\hat{F}^{(\text{m},A_1)}_{n}\left({#1}\right)}}
\newcommand{\CDFfB}[1]{\ensuremath{\hat{F}^{(\text{f},B_1)}_{n}\left({#1}\right)}}
\newcommand{\CDFbB}[1]{\ensuremath{\hat{F}^{(\text{b},B_1)}_{n}\left({#1}\right)}}
\newcommand{\Bias}{\operatorname{Bias}}
\newcommand{\SD}{\operatorname{SD}}
\newcommand{\RMSE}{\operatorname{RMSE}}
\newcommand{\abs}[1]{\lvert{#1}\rvert}
\newcommand{\dto}{\rightsquigarrow}
\newcommand{\expec}[1]{\operatorname{E}\left[{#1}\right]}
\newcommand{\texpec}[1]{\operatorname{E}[{#1}]}
\newcommand{\prob}[1]{\operatorname{P}\left[{#1}\right]}
\newcommand{\var}{\operatorname{var}}
\newcommand{\cov}[1]{\operatorname{cov}\left({#1}\right)}
\newcommand{\diff}{\mathrm{d}}
\newcommand{\1}{\bm{1}}
\newcommand{\differ}{\mathrm{d}}
\newcommand{\sign}{\operatorname{sign}}
\newcommand{\ZZ}{\mathbb{Z}}
\newcommand{\RR}{\mathbb{R}}
\newcommand{\NN}{\mathbb{N}}
\newcommand{\reals}{\RR}
\newcommand{\law}[1]{\mathcal{L}\left({#1}\right)}
\newtheorem{theorem}{Theorem}[section]
\newtheorem{lemma}[theorem]{Lemma}
\newtheorem{proposition}[theorem]{Proposition}
\theoremstyle{remark}
\newtheorem{example}[theorem]{Example}
\newtheorem{remark}[theorem]{Remark}
\numberwithin{equation}{section}
\numberwithin{theorem}{section}
\newcommand{\michal}[1]{\begingroup\color{blue}#1\endgroup}
\begin{document}

\maketitle

\begin{abstract}
At high levels, the asymptotic distribution of a stationary, regularly varying Markov chain is conveniently given by its tail process. The latter takes the form of a geometric random walk, the increment distribution depending on the sign of the process at the current state and on the flow of time, either forward or backward. Estimation of the tail process provides a nonparametric approach to analyze extreme values. A duality between the distributions of the forward and backward increments provides additional information that can be exploited in the construction of more efficient estimators. The large-sample distribution of such estimators is derived via empirical process theory for cluster functionals. Their finite-sample performance is evaluated via Monte Carlo simulations involving copula-based Markov models and solutions to stochastic recurrence equations. The estimators are applied to stock price data to study the absence or presence of symmetries in the succession of large gains and losses.
\end{abstract}
\paragraph{Keywords:} Heavy--tailed Markov chains; Regular variation; Stationary time series; Tail process; Time reversibility.

\section{Introduction}
\label{introduction}

If serial dependence at high levels is sufficiently strong, extreme values of a stationary time series may arrive in clusters rather than in isolation. This is the case, for instance, for linear time series with heavy-tailed innovations and for solutions of stochastic recurrence equations. If a particular time series model is to be used for prediction at such high levels, it is important to model these clusters well. Think of tail-related risk measures in finance or of return levels in hydrology: a rapid succession of particularly rainy days may be especially dangerous if the capacity of the system to absorp the water is limited.

To judge the quality of fit of a time series model at extreme levels, it is useful to have a benchmark relying on as little model assumptions as possible. A purely nonparametric approach, however, has the drawback that there may be too few data that are sufficiently large. For the purpose of extrapolation, the empirical measure is inadequate.

A solution is to rely on asymptotic theory describing possible limit distributions for the extremes of a time series. If this family of distributions is not too large, one may hope to be able to fit it to actual data.

For extremes of stationary time series, there are several asymptotic frameworks available, all of them more or less equivalent. For the study of short-range extremal dependence, the tail process \citep{basrak:segers:2009} is a convenient choice. It captures the collection of finite-dimensional limit distributions of the series conditionally on the event that, at a particular time instant, the series is far from the origin. For instance, the tail process determines the tail dependence coefficients and the extremal index \citep{Leadbetter1983}. It is also related to other tail-related objects such as the extremogram \citep{davis:mikosch:2009} and the extremal dependence measure \citep{Resnic2012}.

The family of tail processes of regularly varying time series is still too large to permit accurate nonparametric estimation. Additional assumptions serve to render the inference problem more manageable. The choice made in this paper is to focus on stationary univariate Markov chains. The joint distribution of such a chain is determined by its bivariate margins, yielding considerable simplifications. Its tail process takes the form of a geometric random walk, the increments depending both on the sign of the process at the current state and on the direction of time, forward or backward. The random walk representation goes back to \cite{smith:1992} and was developed further in \cite{Perfekt1997}, \cite{bortot2000}, and \cite{yun2000}. The formulation in terms of the tail process stems from \cite{segersmarkov} and \cite{segersjanssen}. By a marginal standardization procedure, the tail process may also be used for time series with light-tailed margins. Such time series arise for instance in environmental applications.

The tail process of a stationary time series is itself not stationary because of the special role played by the time instant figuring in the conditioning event. Still, its finite-dimensional distributions satisfy a collection of identities regarding the effect of a time shift. These equations can be summarized into the so-called time-change formula; see equation~\eqref{eq:timechange} below. Apart from being a probabilistic nicety, the time-change formula is useful from a statistical perspective because it provides additional information on the distribution of the tail process. Exploiting this information can lead to more efficient inference.

Our contribution is to propose and study nonparametric estimators for the tail process of a stationary univariate Markov chain. Large-sample theory and Monte Carlo simulations both confirm that efficiency gains are possible when the time-change formula is incorporated into the estimation procedure. The asymptotic distributions of the estimators are described via functional central limit theorems building on the empirical process theory developed in \cite{drees2010limit}. The finite-sample performance is investigated for solutions of stochastic recurrence equations and for copula-based Markov models \citep{Chen2006}. We focus on the estimation of cumulative distribution functions. Following \cite{bortot2000}, however, one could also use kernel methods to estimate their densities.

The structure of the paper is as follows. Tail processes are reviewed in Section~\ref{sec:prelimin}, with special attention to those of Markov chains. The estimators of the tail process of a regularly Markov chain are described in Section~\ref{sec:estim}. Their asymptotic properties are worked out in Sections~\ref{sec:asym}, whereas their finite-sample performance is evaluated via Monte Carlo simulations in Section~\ref{sec:simul} involving models presented in the Section~\ref{sec:examples}. In Section~\ref{sec:applic}, the estimators are applied to analyze time series of daily log returns of Google and UBS stock prices, revealing interesting patterns regarding the succession of large losses and gains. Proofs and calculations are deferred to Section~\ref{appendix}.

Some notational conventions: the law of a random object is denoted by $\law{\,\cdot\,}$. Weak convergence is denoted by the arrow $\dto$. The indicator variable of the event $E$ is denoted by $\1(E)$. The set of integers is denoted by $\ZZ$, while $\NN = \{h \in \ZZ : h \ge 1\}$.

\section{Tail processes of Markov chains}
\label{sec:prelimin}
A strictly stationary time series $(X_{t})_{t\in \ZZ}$ is said to have a \emph{tail process} $(Y_{t})_{t\in \ZZ}$ if, for all $s,t\in\ZZ$ such that $s\leq t$,
\begin{equation}
\label{eq:tailprocess}
  \law{ u^{-1} X_s,\ldots,u^{-1} X_t \mid \abs{X_0} > u } \dto \law{ Y_s,\ldots,Y_t }, \qquad u \to \infty,
\end{equation}
with the implicit understanding that the law of $\abs{Y_0}$ is non-degenerate.

Specializing equation~\eqref{eq:tailprocess} to $t = 0$ implies that $\prob{ \abs{X_0} > uy } / \prob{ \abs{X_0} > u } \to \prob{ \abs{Y_0} > y }$ as $u \to \infty$ for all continuity points $y$ of the law of $|Y_0|$. Since the law of $\abs{ Y_0 }$ was supposed to be non-degenerate, it follows that the function $u \mapsto \prob{ \abs{X_0} > u }$ is regularly varying at infinity: there exists $\alpha > 0$ such that
\begin{equation}
\label{eq:RV}
  \lim_{u \to \infty} \frac{\prob{ \abs{X_0} > uy }}{\prob{ \abs{X_0} > u }}
  = y^{-\alpha}, \qquad y \in (0, \infty).
\end{equation}
The law of $\abs{Y_0}$ is thus Pareto($\alpha$), i.e., $\prob{ \abs{Y_0} > y } = y^{-\alpha}$ for all $y \ge 1$. More generally, by \citet[Theorem~2.1]{basrak:segers:2009}, the time series $(X_t)_{t \in \ZZ}$ admits a tail process $(Y_t)_{t \in \ZZ}$ with non-degenerate $\abs{Y_0}$ if and only if $(X_t)_{t \in \ZZ}$ is jointly regularly varying with some index $\alpha > 0$, i.e., if for all integers $k \leq l$ the random vector $(X_{k},\ldots,X_{l})$ is multivariate regularly varying with index $\alpha$.

Many time series models are jointly regularly varying and hence admit a tail process. Examples include linear processes with heavy-tailed innovations, solutions to stochastic recurrence equations, and models of the ARCH and GARCH families. Sufficient conditions for such models to be regularly varying can be found in \cite{Davis:Mikosch:Zhao:2013}.

The \emph{spectral tail process} is defined by
\[
  \Theta_t = Y_t / \abs{Y_0}, \qquad t \in \ZZ.
\]
By \eqref{eq:tailprocess} and the continuous mapping theorem, it follows that for all $s,t\in\ZZ$ such that $s\leq t$
\begin{equation}
\label{eq:spectralprocess}
   \mathcal{L}\left(X_s / \abs{X_0},\ldots,X_t / \abs{X_0} \mid \abs{X_0} > u \right)
  \rightarrow  \mathcal{L}\left( \Theta_s,\dots, \Theta_t \right), \qquad u \to \infty.
\end{equation}
The difference between \eqref{eq:tailprocess} and \eqref{eq:spectralprocess} is that in the latter equation, the variables $X_t$ are normalized by $\abs{X_0}$ rather than by the threshold $u$. Such auto-normalization allows the tail process to be decomposed into two stochastically independent components, i.e.,
\begin{equation}\label{eq:decomp}
  Y_t = \abs{Y_0} \, \Theta_t, \qquad t \in \ZZ.
\end{equation}
Independence of $\abs{Y_0}$ and $(\Theta_t)_{t \in \ZZ}$ is stated in \citet[Theorem~3.1]{basrak:segers:2009}. The random variable $\abs{Y_0}$ characterizes the magnitudes of extremes, whereas $(\Theta_t)_{t \in \ZZ}$ captures serial dependence. The spectral tail process at time $t = 0$ yields information on the relative weights of the upper and lower tails of $\abs{X_0}$: since $\Theta_0 = Y_0 / \abs{Y_0} = \sign(Y_0)$, we have
\begin{align}
\label{p_prob}
  p &= \prob{ \Theta_0 = +1 } = \lim_{u \to \infty} \frac{\prob{ X_0 > u }}{\prob{ \abs{X_0} > u }}, &
  1-p &= \prob{ \Theta_0 = -1 }.
\end{align}

The distributions of the \emph{forward} tail process $(Y_t)_{t \ge 0}$ and the \emph{backward} tail process $(Y_t)_{t \le 0}$ mutually determine each other. The precise connection between the forward and backward (spectral) tail processes is captured by Theorem~3.1 in \cite{basrak:segers:2009}. For all $i,s,t\in\ZZ$ with $s\leq 0\leq t$ and for all measurable functions $f:\reals^{t-s+1} \to \reals$ satisfying $f(y_s,\ldots,y_t)=0$ whenever $y_0=0$, we have, provided the expectations exist,
\begin{equation}
\label{eq:timechange}
\expec{ f\left(\Theta_{s-i},\ldots,\Theta_{t-i}\right) } =
\expec{  f\left(\frac{\Theta_{s}}{\vert\Theta_{i}\vert},\ldots,\frac{\Theta_{t}}{\vert\Theta_{i}\vert}\right) \, \vert\Theta_{i}\vert^\alpha \, \1\{ \Theta_i \ne 0 \} }.
\end{equation}
We will refer to \eqref{eq:timechange} as the \emph{time-change formula}. By exploiting the time-change formula, we will be able to improve upon the efficiency of estimators of the tail process.

A common procedure in multivariate extreme value theory is to standardize the margins. For jointly regularly varying time series, such a standardization is possible too, although some care is needed because of the possible presence of both positive and negative extremes.

\begin{lemma}
\label{lem:standardization}
Let $(X_t)_{t \in \ZZ}$ be a stationary time series, jointly regularly varying with index $\alpha > 0$, and having spectral tail process $(\Theta_t)_{t \in \ZZ}$. Put $\bar{F}_{\abs{X_0}}(u) = \prob{ \abs{X_0} > u }$ for $u \ge 0$. Define a stationary time series $(X_t^*)_{t \in \ZZ}$ by
\begin{equation}
\label{eq:standardization}
  X_t^* = \frac{\sign(X_t)}{\bar{F}_{\abs{X_0}}(\abs{X_t})}, \qquad t \in \ZZ.
\end{equation}
Then $(X_t^*)_{t \in \ZZ}$ is jointly regularly varying with index $1$. Its spectral tail process $(\Theta_t^*)_{t \in \ZZ}$ is given by
\begin{equation}
\label{eq:standardization:theta}
  \Theta_t^* = \sign(\Theta_t) \, \abs{\Theta_t}^\alpha, \qquad t \in \ZZ.
\end{equation}
\end{lemma}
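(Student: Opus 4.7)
The plan is to verify the defining convergence \eqref{eq:tailprocess} directly for the standardised series $(X_t^*)$ and thereby identify its tail process explicitly; joint regular variation of $(X_t^*)$ with index $1$ will then follow from the characterisation in \citet[Theorem~2.1]{basrak:segers:2009}.

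First I set up the correct normalising sequence. Because $\bar{F}_{\abs{X_0}}$ is regularly varying at infinity with index $-\alpha$ by \eqref{eq:RV}, its generalised inverse is regularly varying at $0^+$ with index $-1/\alpha$. Writing $v_u := \bar{F}_{\abs{X_0}}^{\leftarrow}(1/u)$, one has $v_u \to \infty$ and $u\,\bar{F}_{\abs{X_0}}(v_u) \to 1$, so that $\prob{\abs{X_0^*} > u} \sim 1/u$ and the events $\{\abs{X_0^*} > u\}$ and $\{\abs{X_0} > v_u\}$ coincide up to asymptotically negligible probability; in particular, $\abs{X_0^*}$ has a Pareto$(1)$ tail.

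Next I compute the conditional weak limits. Fix integers $s \le 0 \le t$. The key algebraic identity is
\[
\frac{X_i^*}{\abs{X_0^*}} \;=\; \sign(X_i) \cdot \frac{\bar{F}_{\abs{X_0}}(\abs{X_0})}{\bar{F}_{\abs{X_0}}(\abs{X_i})}, \qquad s \le i \le t.
\]
Conditional on $\abs{X_0} > v_u$, the vector $(X_s/v_u,\ldots,X_t/v_u)$ converges weakly to $(Y_s,\ldots,Y_t) = \abs{Y_0}(\Theta_s,\ldots,\Theta_t)$ by \eqref{eq:tailprocess}. Combining this with the uniform convergence theorem for regularly varying functions, namely $\bar{F}_{\abs{X_0}}(v_u y)/\bar{F}_{\abs{X_0}}(v_u) \to y^{-\alpha}$ locally uniformly in $y>0$, a continuous-mapping argument yields
\[
\frac{X_i^*}{\abs{X_0^*}} \;\dto\; \sign(\Theta_i)\abs{\Theta_i}^{\alpha},
\]
with the right-hand side interpreted as $0$ on $\{\Theta_i = 0\}$. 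Analogously, $u^{-1}\abs{X_0^*} = \bigl[u\,\bar{F}_{\abs{X_0}}(\abs{X_0})\bigr]^{-1} \dto \abs{Y_0}^{\alpha}$, and since $\abs{Y_0}$ is Pareto$(\alpha)$, this limit is Pareto$(1)$. Joint weak convergence of the magnitude and the ratios is obtained in a single stroke by applying the continuous mapping theorem to $(\abs{X_0}/v_u, X_s/v_u,\ldots,X_t/v_u) \dto (\abs{Y_0},Y_s,\ldots,Y_t)$. Asymptotic independence of $\abs{Y_0^*} := \abs{Y_0}^{\alpha}$ and $\Theta_i^* := \sign(\Theta_i)\abs{\Theta_i}^{\alpha}$ is then inherited from the independence of $\abs{Y_0}$ and $(\Theta_t)$ given by \citet[Theorem~3.1]{basrak:segers:2009}.

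Putting the pieces together, the finite-dimensional distributions of $u^{-1}(X_s^*,\ldots,X_t^*)$ conditional on $\abs{X_0^*} > u$ converge to $\abs{Y_0}^{\alpha}(\Theta_s^*,\ldots,\Theta_t^*)$, exhibiting the tail process of $(X_t^*)$ as $Y_t^* = \sign(Y_t)\abs{Y_t}^{\alpha}$. By \citet[Theorem~2.1]{basrak:segers:2009}, $(X_t^*)$ is jointly regularly varying with index $1$ and its spectral tail process is $(\Theta_t^*)$, which is precisely \eqref{eq:standardization:theta}. I expect the main technical obstacle to be the rigorous treatment of the continuous-mapping step at indices $i$ with $\Theta_i = 0$: there $\abs{X_i}/v_u \to 0$ while $\bar{F}_{\abs{X_0}}(\abs{X_i})$ stays bounded below, so $X_i^*/\abs{X_0^*}$ must be shown to vanish. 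I plan to control this by splitting the sample space into $\{\abs{\Theta_i} > \delta\}$ and its complement, applying the uniform convergence theorem on the former and a simple domination bound $\abs{X_i^*/X_0^*} \le \bar{F}_{\abs{X_0}}(\abs{X_0})/\bar{F}_{\abs{X_0}}(\delta v_u)$ on the latter, then letting $\delta \downarrow 0$.
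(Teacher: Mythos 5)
Your proposal is correct and follows essentially the same route as the paper's proof: both rewrite the standardized series in terms of the original one via the transformed threshold $U(u)=\bar F_{\abs{X_0}}^{\leftarrow}$, use regular variation of $\bar F_{\abs{X_0}}$ together with $u\,\bar F_{\abs{X_0}}(U(u))\to 1$, and conclude by a continuous-mapping argument identifying $Y_t^*=\sign(Y_t)\abs{Y_t}^\alpha$. The only cosmetic difference is that the paper invokes the extended continuous mapping theorem in one stroke (with the factor $\1(Y_i\neq 0)$ absorbing the degenerate coordinates), whereas you handle the $\Theta_i=0$ coordinates by an explicit $\delta$-truncation, which is a valid substitute.
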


In \eqref{eq:standardization:theta}, note that the map $y \mapsto \sign(y) \, \abs{y}^\alpha$ is monotone and symmetric.
The standardized series $(X_t^*)_{t \in \ZZ}$ may be regularly varying even if the original series $(X_t)_{t \in \ZZ}$ is not. In that sense, the standardization procedure in \eqref{eq:standardization} widens the field of possible applications of tail processes. For instance, the marginal distributions of enviromental variables are often light-tailed rather than regularly varying. After standardization as in Lemma~\ref{lem:standardization}, the serial dependence between of extremes of such time series may still be modelled via tail processes.

Some time series models exhibit asymptotic independence of consecutive observations, that is, $\prob{ \abs{X_k} > u \mid \abs{X_0} > u } \to 0$ as $\prob{ \abs{X_0} > u } \to 0$ for all $k\in\ZZ\setminus\{0\}$. Well-known examples are non-degenerate Gaussian time series and classical stochastic volatility models. In such cases, the spectral tail process is noninformative in the sense that $\Theta_k = 0$ almost surely for all $k \ne 0$. More refined approaches to handle tail independence were developed in \cite{ledfordtawn96,ledfordtawn03} and, more recently, in \cite{janssendrees13} and \cite{kuliksoulier13}.

\subsection*{Regularly varying Markov chains}

For the purpose of statistical inference, the class of spectral tail processes is too large to be really useful: without additional modelling assumptions, it is impossible to estimate all limiting finite-dimensional distributions that appear in \eqref{eq:tailprocess} or \eqref{eq:spectralprocess}. Therefore, it is reasonable to consider families of spectral tail processes arising under additional constraints on the underlying time series.

One such family was identified in \cite{segersmarkov} and \cite{segersjanssen} in the context of first-order Markov chains. Let $(\Theta_t)_{t \in \ZZ}$ be the spectral tail process of an $\alpha$-regularly varying, stationary time series $(X_t)_{t \in \ZZ}$, arising as the limit process in \eqref{eq:spectralprocess}. Put $p = \prob{ \Theta_0 = 1 }$ as in \eqref{p_prob}.
 Introduce random variables $A_1, B_1, A_{-1}, B_{-1}$ (or rather their laws) as follows: if $p > 0$, then, as $u \to \infty$,
\begin{align}
\label{A1_prob}
  \mathcal{L}(X_1/X_0 \mid X_0>u) &\dto \mathcal{L}(A_{1})=\law{\Theta_1\mid\Theta_0=+1}, \\
\label{A-1_prob}
  \mathcal{L}(X_{-1}/X_0 \mid X_0>u) &\dto \mathcal{L}(A_{-1})=\law{\Theta_{-1}\mid\Theta_0=+1},
\end{align}
and if $p < 1$, then
\begin{align}
\label{B1_prob}
  \mathcal{L}(X_1/X_0 \mid X_0<-u) &\dto \mathcal{L}(B_{1})=\law{-\Theta_1\mid\Theta_0=-1}, \\
\label{B-1_prob}
  \mathcal{L}(X_{-1}/X_0 \mid X_0<-u) &\dto \mathcal{L}(B_{-1})=\law{-\Theta_{-1}\mid\Theta_0=-1}.
\end{align}
Further, let $\Theta_0,A_1,A_{-1},A_2,A_{-2},\ldots, B_1,B_{-1},B_2,B_{-2}$ be independent random variables such that $\law{A_t} = \law{A_1}$, $\law{A_{-t}} = \law{A_{-1}}$,  $\law{B_t} = \law{B_1}$, and $\law{B_{-t}} = \law{B_{-1}}$ for all  $t \in\NN$. Then the spectral tail process $(\Theta_t)_{t \in \ZZ}$ is said to be a \emph{Markov spectral tail chain} if the following holds: the forward spectral tail process is given recursively by
\begin{equation}
\label{eq:Markov:forward}
  \Theta_t =
  \begin{cases}
  \Theta_{t-1}A_t & \text{if $\Theta_{t-1}>0$}, \\
  0 & \text{if $\Theta_{t-1}=0$}, \\
  \Theta_{t-1}B_t & \text{if $\Theta_{t-1}<0$},
  \end{cases}
  \qquad t\in\NN,
\end{equation}
whereas the backward spectral tail process is given by
\begin{equation}
\label{eq:Markov:backward}
  \Theta_{-t} =
  \begin{cases}
    \Theta_{-t+1}A_{-t} & \text{if $\Theta_{-t+1}>0$}, \\
    0 & \text{if $\Theta_{-t+1}=0$}, \\
    \Theta_{-t+1}B_{-t} & \text{if $\Theta_{-t+1}<0$},
  \end{cases}
  \qquad t\in\NN.
\end{equation}
If $p = 1$, then $\Theta_t \ge 0$ almost surely for all $t \in \ZZ$ and thus the definition of $B_{\pm t}$ is immaterial; similarly if $p = 0$. This can be seen by applying the time-change formula \eqref{eq:timechange}.

The motivation behind the above definition is that such spectral tail processes typically arise when $(X_t)_{t \in \ZZ}$ is a stationary, first-order Markov chain; see Theorem~5.2 in \cite{segersmarkov} and Corollary~5.1 in \cite{segersjanssen}. However, they may as well arise in settings where the underlying process, $(X_t)_{t \in \ZZ}$, is non-Markovian; see Remark~5.1 in \cite{segersjanssen}. The forward and backward spectral tail processes $(\Theta_t)_{t \ge 0}$ and $(\Theta_t)_{t \le 0}$ are Markovian themselves, and, conditionally on $\Theta_0$, they are independent. Their structure is that of a geometric random walk where the distribution of the increment at time $t$ depends on the sign of the process at time $t-1$. The point zero acts as an absorbing state.

For Markov spectral tail chains, the distribution of the forward part $(\Theta_t)_{t \ge 0}$ is determined by $p$, $A_1$, and $B_1$. Given additionally the index of regular variation $\alpha > 0$, the distributions of $A_{-1}$ and $B_{-1}$ and thus of the backward part $(\Theta_t)_{t \le 0}$ can be reconstructed from the time-change formula \eqref{eq:timechange}; see Lemma~\ref{lem:tc:AB} below. It follows that the law of a Markov spectral tail process is determined by $\alpha > 0$, $p \in [0, 1]$, and the laws of $A_1$ and $B_1$. This reduction provides a handle on the spectral tail process that can be exploited for statistical inference.

\section{Estimating Markov spectral tail processes}
\label{sec:estim}

In this section we propose estimators for $p$, $A_1$ and $B_1$. In combination with the index of regular variation $\alpha > 0$, this triplet fully determines the law of a Markov spectral tail process as defined in equations~\eqref{eq:Markov:forward} and \eqref{eq:Markov:backward}, and of the tail processes $(Y_t)_{t\in\ZZ}$.

Replacing population distributions by sampling distributions in the left-hand sides of \eqref{A1_prob} and \eqref{B1_prob} yields forward estimators for the laws of $A_1$ and $B_1$. However, exploiting the time-change formula \eqref{eq:timechange} allows to express the laws of $A_1$ and $B_1$ in terms of $A_{-1}$ and $B_{-1}$ (and $p$ and $\alpha$). These expressions motivate so-called backward estimators for $A_1$ and $B_1$. Convex combinations of forward and backward estimators finally produce mixture estimators. For an appropriate choice of the mixture weights, the mixture estimators may be more efficient than both the forward and the backward estimators separately.

In order to estimate $p = \prob{\Theta_0 = 1}$, we simply take the empirical version of \eqref{p_prob}, yielding
\begin{equation}\label{p_est}
\hat{p}_n = \frac{\sum_{i=1}^{n} \1\left(X_i > u_n\right)}{\sum_{i=1}^{n}\1\left(\abs{X_i} > u_n\right)}.
\end{equation}
For $\hat{p}_n$ to be consistent and asymptotically normal, the threshold sequence $u_n$ should tend to infinity at a certain rate described in detail in condition {\bf (B)} in the next section.

For estimating the cdf, $F^{(A_1)}$, of $A_1$ we propose
\begin{equation}
\label{forward_A1}
  \CDFfA{x}
  =
  \frac{\sum_{i=1}^{n}\1\left(X_{i+1}/X_i \leq x,\; X_i>u_n\right)}{\sum_{i=1}^{n}\1\left( X_i > u_n\right)},
\end{equation}
which we refer to as the \emph{forward estimator} of the cdf of $A_1$. Similarly, for the forward estimator of the cdf of $B_1$ we take
\begin{equation}
\label{forward_B1}
  \CDFfB{x}
  =
  \frac{\sum_{i=1}^{n}\1\left(X_{i+1}/X_i \leq x,\; X_i< - u_n\right)}{\sum_{i=1}^{n}\1\left( X_i < - u_n\right)}.
\end{equation}
The forward estimators of the cdf's of $A_1$ and $B_1$ are empirical versions of the left-hand sides of \eqref{A1_prob} and \eqref{B1_prob}, respectively. Note that one can expect consistency of these estimators only if the target distribution functions are continuous in $x$, because otherwise $\prob{X_1/X_0\le x \mid X_0>u_n}$ need not converge to $\prob{A_1 \le x}$, for instance.

The time-change formula~\eqref{eq:timechange} yields a different representation of $A_1$ and $B_1$, motivating different estimators than the ones above, based on different data points. For  ease of reference, we record the relevant formulas in a lemma, whose proof is given in Appendix~\ref{sec:proofs}.

\begin{lemma}
\label{lem:tc:AB}
Let $(X_t)_{t \in \ZZ}$ be a stationary time series, jointly regularly varying with index $\alpha$ and spectral tail process $(\Theta_t)_{t \in \ZZ}$. Let $A_1, A_{-1}, B_1, B_{-1}$ be given as in \eqref{A1_prob} to \eqref{B-1_prob}. If $p = \prob{ \Theta_0 = 1 } > 0$, then
\begin{align}
\label{eq:A1:tc:pos}
  \prob{ A_1 > x }
  &=
  \expec{
    A_{-1}^{\alpha} \,
    \1\left(1/A_{-1} > x\right)
  }, && x \ge 0, \\
\label{eq:A1:tc:neg}
  \prob{ A_1 \le x }
  &= \frac{1-p}{p}
  \expec{
    \left(-B_{-1}\right)^{\alpha} \,
    \1\left(1/B_{-1} \leq x\right)
  }, && x < 0.
\end{align}
Similarly, if $p < 1$, then
\begin{align}
\label{eq:B1:tc:pos}
  \prob{ B_1 > x }
  &=
  \expec{
    B_{-1}^\alpha \,
    \1(1/B_{-1} > x)
  }, && x \ge 0, \\
\label{eq:B1:tc:neg}
  \prob{ B_1 \le x }
  &= \frac{p}{1-p}
  \expec{
    (-A_{-1})^\alpha \,
    \1(1/A_{-1} \le x)
  }, && x < 0.
\end{align}
Formulas \eqref{eq:A1:tc:pos} to \eqref{eq:B1:tc:neg} remain valid when the time instances $1$ and $-1$ are interchanged.
\end{lemma}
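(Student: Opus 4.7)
The plan is to derive all four identities from a single application of the time-change formula~\eqref{eq:timechange}, taken with $(s, t, i) = (-1, 0, -1)$. For this choice the left-hand side evaluates the test function $f$ at $(\Theta_{s-i}, \Theta_{t-i}) = (\Theta_0, \Theta_1)$, while the right-hand side evaluates it at $(\sign \Theta_{-1}, \Theta_0 / \abs{\Theta_{-1}})$ against the weight $\abs{\Theta_{-1}}^\alpha \1\{\Theta_{-1} \ne 0\}$. Because $\Theta_0 \in \{-1, +1\}$ almost surely and, by \eqref{A1_prob}--\eqref{B-1_prob}, the conditional laws of $\Theta_{-1}$ given $\Theta_0 = +1$ and given $\Theta_0 = -1$ are $\law{A_{-1}}$ and $\law{-B_{-1}}$ respectively, indicator functions of simple sign events in $(\Theta_{-1}, \Theta_0)$ translate cleanly into expectations involving $A_{-1}$ or $-B_{-1}$.

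For \eqref{eq:A1:tc:pos} I would plug in $f(y_{-1}, y_0) = \1(y_{-1} = 1)\, \1(y_0 > x)$ with $x \ge 0$; the strict inequality $y_0 > x \ge 0$ ensures $f(y_{-1}, 0) = 0$, so the formula is admissible. The left side equals $p \cdot \prob{A_1 > x}$, while on the right the conditions $\sign \Theta_{-1} = 1$ and $\Theta_0 / \abs{\Theta_{-1}} > x \ge 0$ jointly force $\Theta_{-1} > 0$ and $\Theta_0 = 1$. Conditioning on $\Theta_0 = 1$ yields the factor $p \cdot \expec{A_{-1}^\alpha \1(1/A_{-1} > x)}$, and dividing by $p$ gives \eqref{eq:A1:tc:pos}. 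For \eqref{eq:A1:tc:neg} the same triple $(s, t, i)$ is used with $f(y_{-1}, y_0) = \1(y_{-1} = 1)\, \1(y_0 \le x)$ and $x < 0$; admissibility now follows because $y_0 \le x < 0$ excludes $y_0 = 0$. The right side then forces $\Theta_{-1} > 0$ and $\Theta_0 = -1$, and conditioning on $\Theta_0 = -1$ together with $\Theta_{-1} \mid \Theta_0 = -1 \eqd -B_{-1}$ and a short sign manipulation converts the event $\{-1/\Theta_{-1} \le x\}$ into $\{1/B_{-1} \le x\}$ and the weight into $(-B_{-1})^\alpha$, yielding \eqref{eq:A1:tc:neg}.

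The identities \eqref{eq:B1:tc:pos} and \eqref{eq:B1:tc:neg} follow verbatim upon replacing $\1(y_{-1} = 1)$ by $\1(y_{-1} = -1)$, or, conceptually, by applying the already-proven identities to the regularly varying series $(-X_t)_{t \in \ZZ}$, whose $A$- and $B$-increments are the swapped versions of those of $(X_t)_{t \in \ZZ}$. The closing assertion that the indices $1$ and $-1$ may be interchanged throughout is obtained by rerunning the same argument with $(s, t, i) = (0, 1, 1)$ in place of $(-1, 0, -1)$; the formula~\eqref{eq:timechange} is symmetric enough in $i$ that no new ingredient is required. I expect the main obstacle to be nothing deep but rather the careful sign bookkeeping required to identify which combinations of $\sign \Theta_{-1}$ and $\sign \Theta_0$ survive in each of the four cases, and in particular to verify that the boundary value $x = 0$ in \eqref{eq:A1:tc:pos} is handled correctly by the strict inequality $y_0 > x$.
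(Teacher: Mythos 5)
Your proposal is correct and follows essentially the same route as the paper: the paper also applies the time-change formula with $s=-1$, $t=0$, $i=-1$ and the test function $f(y_{-1},y_0)=\1(y_0/y_{-1}>x,\,y_{-1}=1)$ (which coincides with your factored $\1(y_{-1}=1)\,\1(y_0>x)$ on the only set where it is nonzero), and your sign bookkeeping for \eqref{eq:A1:tc:pos} and \eqref{eq:A1:tc:neg} matches the paper's computation. One small caveat: the claim that \eqref{eq:B1:tc:pos}--\eqref{eq:B1:tc:neg} follow ``verbatim'' by swapping $\1(y_{-1}=1)$ for $\1(y_{-1}=-1)$ is not quite right with your factored form of $f$ — since $\law{B_1}=\law{-\Theta_1\mid\Theta_0=-1}$, the condition on $y_0$ must also flip (use $\1(y_0<-x)$, or equivalently keep the ratio form $\1(y_0/y_{-1}>x)$ as the paper does), otherwise you land on the lower rather than the upper tail of $B_1$. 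This is harmless here because your alternative derivation via the series $(-X_t)_{t\in\ZZ}$, which swaps the $A$- and $B$-increments and replaces $p$ by $1-p$, is valid and yields the $B$-identities correctly; the closing remark on interchanging the time instances via $(s,t,i)=(0,1,1)$ is also fine.
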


Assume for the moment that $\alpha$ is known. Below, we will consider the more realistic situation that $\alpha$ is unknown. Lemma~\ref{lem:tc:AB} suggests the following \emph{backward estimator} of the cdf of $A_1$:
\begin{equation}
\label{backward_A1}
\CDFbA{x} =
\begin{cases}
1-\dfrac{\sum_{i=1}^{n}\left( \frac{X_{i-1}}{X_i}\right)^{\alpha}\1\left(X_i/ X_{i-1} >x,
\; X_i > u_n \right)}{\sum_{i=1}^{n}\1\left(  X_i > u_n\right)} &
\text{if $x \geq 0$}, \\[1em]
\dfrac{\sum_{i=1}^{n}\left( \frac{-X_{i-1}}{X_i}\right)^{\alpha}\1\left(X_i/ X_{i-1} \leq x,
\; X_i < -u_n \right)}{\sum_{i=1}^{n}\1\left(  X_i > u_n\right)} &
\text{if $x<0$.}
\end{cases}
\end{equation}
Similarly, we define the backward estimator of the cdf of $B_1$ as
\begin{equation}
\label{backward_B1}
\CDFbB{x} =
\begin{cases}
1-\dfrac{\sum_{i=1}^{n}\left( \frac{X_{i-1}}{X_i}\right)^{\alpha}\1\left(X_i/ X_{i-1} >x,
\; X_i < -u_n \right)}{\sum_{i=1}^{n}\1\left(  X_i < -u_n\right)} &
\text{if $x\geq 0$}, \\[1em]
\dfrac{\sum_{i=1}^{n}\left( \frac{-X_{i-1}}{X_i}\right)^{\alpha}\1\left(X_i/ X_{i-1} \leq x,
\; X_i > u_n \right)}{\sum_{i=1}^{n}\1\left(  X_i < -u_n\right)} &
\text{if $x<0$}.
\end{cases}
\end{equation}

For $\abs{x}$ large, the backward estimators usually have a smaller variance than the forward estimators. To see this, note that for negative $x$ with large modulus only very few summands in the numerator of \eqref{forward_A1} do not vanish, because $X_{i+1}$ must be even larger in absolute value than $|x|X_i>|x|u_n$, leading to a large variance of the numerator. In contrast, usually many more non-vanishing terms will be summed up in the numerator of \eqref{backward_A1}, while each of them gets a rather low weight $(-X_{i-1}/X_i)^\alpha\le |x|^{-\alpha}$, leading to a smaller variance. For large positive $x$ one may argue similarly by considering the corresponding estimators of the survival function. Indeed, we show in Remark~\ref{rem:xgeq1} that, provided $x \geq 1$, the backward estimator of the cdf of $A_1$ at $x$ has a smaller asymptotic variance than the forward estimator.

For well-chosen weights, convex combinations of the forward and backward estimators can achieve a lower asymptotic variance than each of the estimators individually. Unfortunately, the expression for the asymptotic covariance of the two estimators is intractable; see Corollary \ref{cor:asnorm}. It remains an open issue how to choose the mixture weights in order to minimize the asymptotic variance.

A pragmatic approach is to give more weight to the forward estimator for  small $\abs{x}$ and to give more weight to the backward estimator for large $\abs{x}$. To this end, define weights by
\[\lambda(x)=\text{max}\left(1-\abs{x},\;0\right).\]
The \emph{mixture estimator} for the cdf of $A_1$ is defined as
\begin{equation}
\label{mixture_A1}
\CDFmA{x} =
\begin{cases}
\lambda(x)\CDFfA{x} +\left[  1-\lambda(x)\right] \CDFbA{x} & \text{if $x \geq 0$,} \\[1ex]
\lambda(x)\CDFfA{x} +\left[ 1-\lambda(x)\right] \CDFbA{x} & \text{if $x <  0$.}
\end{cases}
\end{equation}
The mixture estimator for $B_1$ is defined by replacing $A_1$ in \eqref{mixture_A1} with $B_1$.

The backward and the mixture estimators require the value of the index $\alpha$ of regular variation, which is unknown in most applications. There are at least two approaches to deal with this issue:
\begin{enumerate}
\item
Estimate $\alpha$ separately, for instance, by the Hill--type estimator
\begin{equation}
\label{Hill}
  \hat{\alpha} = \frac{\sum_{i=1}^{n} \1\left( \abs{ X_i } > u_n\right)}{\sum_{i=1}^{n} \log \left(\abs{ X_{i} }/ u_n\right) \1\left(\abs{ X_i } > u_n\right)},
\end{equation}
and plug in the estimated value of $\alpha$ in \eqref{backward_A1}, \eqref{backward_B1} and \eqref{mixture_A1}.
\item
Employ an empirical version of the transformation in Lemma~\ref{lem:standardization} to ensure that, after transformation, $\alpha=1$.  The transformation in \eqref{eq:standardization} requires the tail function $\bar{F}_{\abs{X_0}}(u) = \prob{ \abs{X_0} > u }$. This function can be estimated, for instance, by
\begin{equation}
\label{emp_trans}
 \hat{\bar{F}}_{\abs{X_0},n}(u) = 1-\frac{1}{n+1} \sum_{j=1}^n \1(\abs{X_j} \leq u),
\end{equation}
where we divide by $n + 1$ rather than by $n$ in order to avoid division by zero later on. The transformed variable
\[ \hat{X}_{n,i}^* = \sign(X_i) / \hat{\bar{F}}_{\abs{X_0},n}(\abs{X_i}) \] is based on the sign of $X_i$ and the rank of $\abs{X_i}$ among $\abs{X_1}, \ldots, \abs{X_n}$.
\end{enumerate}

In the simulation study in Section~\ref{sec:simul}, the mixture estimator based on the rank-transformed data performs better than the plug-in version. Note, however, that the two approaches are not directly comparable: with the second approach, what we estimate is the tail process $(\Theta_t^*)_{t \in \ZZ}$ of the transformed series $(X_t^*)_{t \in \ZZ}$. From \eqref{eq:standardization:theta} and \eqref{A1_prob}, it follows that, if $(\Theta_t)_{t \in \ZZ}$ is a Markov spectral tail chain as in \eqref{eq:Markov:forward} and \eqref{eq:Markov:backward}, then so is $(\Theta_t^*)_{t \in \ZZ}$, with $A_t$ and $B_t$ to be replaced by $A_t^* = \sign(A_t) \, \abs{A_t}^\alpha$ and $B_t^* = \sign(B_t) \, \abs{B_t}^\alpha$, respectively. Combining the above two estimation approaches, one could even recover $A_t$ and $B_t$ via $A_t = \sign(A_t^*) \, \abs{A_t^*}^{1/\alpha}$ and $B_t = \sign(B_t^*) \, \abs{B_t^*}^{1/\alpha}$. Finally, as $(X_t^*)_{t\in\ZZ}$ may have a tail process even when the original time series has none, the second approach is more widely applicable.

\section{Large sample theory}
\label{sec:asym}

Under certain conditions, the standardized estimation errors of the forward and the backward estimators converge jointly to a centered Gaussian process. In order not to overload the presentation, we focus on nonnegative Markov chains. In that case, the distribution of $\Theta_1=A_1$ determines the distribution of the forward spectral tail process, and thus, via the time-change formula, together with $\alpha$, also the one of the backward spectral tail process. We distinguish between the cases where $\alpha$ is known (Section~\ref{sec:CLT:known}) and unknown (Section~\ref{sec:CLT:unknown}). In addition, we briefly indicate how the conditions and results must be modified in the real-valued case (Remark~\ref{rem:posneg}).

\subsection{Known index of regular variation}
\label{sec:CLT:known}

If the index of regular variation, $\alpha$, is known, all estimators under consideration can be expressed in terms of \emph{generalized tail array sums}, that is, statistics of the form $\sum_{i=1}^n \phi(X_{n,i})$, with
\begin{equation} \label{eq:Xnidef}
  X_{n,i} := \frac{(X_{i-1},X_i,X_{i+1})}{u_n} \1(X_i>u_n).
\end{equation}
\cite{drees2010limit} give conditions under which, after standardization, such statistics converge to a centered Gaussian process, uniformly over appropriate families of functions $\phi$. From these results we will deduce a functional central limit theorem for the processes of forward and backward estimators defined in \eqref{forward_A1} and \eqref{backward_A1}, respectively.

To ensure consistency, the threshold $u_n$ must tend to infinity such that
$$ v_n:=\prob{X_0>u_n}
$$
tends to 0, but the expected number $nv_n$ of exceedances tends to infinity.
Moreover, let
\[
  \beta_{n,k} := \sup_{1\le l\le n-k-1}
  \expec{\sup_{B\in\mathcal{B}_{n,l+k+1}^n} \abs{\prob{B \mid \mathcal{B}_{n,1}^l}-\prob{B}}}
\]
denote the $\beta$-mixing coefficients. Here $\mathcal{B}_{n,i}^j$ is the $\sigma$-field generated by $(X_{n,l})_{i\le l\le j}$. We assume that there exist sequences $l_n,r_n\to\infty$ and some $x_0\ge 0$ such that the following conditions hold:

\begin{description}
\item[(A($x_0$))]
The cdf, $F^{(A_1)}$, of $A_1$ is continuous on $[x_0,\infty)$.
\item[(B)]
\begin{enumerate}[(i)]
\item
As $n \to \infty$, we have $l_n\to\infty$, $l_n=o(r_n)$, $r_n=o((n v_n)^{1/2})$, $r_nv_n\to 0$;
\item
$\beta_{n,l_n}n/r_n\to 0$ as $n \to \infty$ and $\lim_{m\to\infty} \limsup_{n\to\infty} \beta_{n,m} = 0$.
\end{enumerate}
\end{description}
Condition (B) pose restrictions on the rate at which $v_n$ tends to 0 and thus on the rate at which $u_n$ tends to $\infty$.
Sufficient conditions to ensure that a Markov chain is $\beta$-mixing can be found in \citet[Section~2.4]{doukhan95}. Usually, $\beta_{n,k}=O(\eta^k)$ for some $\eta\in(0,1)$ and one may choose $l_n=O(\log n)$, and (B) is fulfilled for a suitably chosen $r_n$ if $(\log n)^2/n=o(v_n)$ and $v_n =o(1/\log n)$.
\begin{description}
\item[(C)]
For all $k\in\{0,\ldots,r_n\}$ there exists
\[
  s_n(k) \ge \prob{X_k>u_n\mid X_0>u_n}
\]
such that $\lim_{n\to\infty} \sum_{k=1}^{r_n} s_n(k) = \sum_{k=1}^\infty \lim_{n\to\infty} s_n(k)<\infty$.
\end{description}
Typically $s_n(k)$ will be of the form $b_n+c_k$ with $b_n=o(1/r_n)$ and $\sum_{k=1}^\infty c_k<\infty$. The interchangeability of the limit and the sum is then automatically fulfilled. For stochastic recurrence equations (Section~\ref{sec:examples:sre}), conditions~(B) and~(C) are verified in Example~\ref{exam:SRE} below.

Under these conditions, one can prove the asymptotic normality of relevant generalized tail array sums (see Proposition \ref{th:procconv} below) and thus the joint asymptotic normality of the forward and the backward estimator of $ F^{(A_1)}$ centered by their expectation. However, additional conditions are needed to ensure that their bias is asymptotically negligible:
\begin{eqnarray}
  \sup_{x\in [x_0,\infty)} \left\lvert \prob {\frac{X_1}{X_0} \le x\,\Big|\, X_0>u_n} -  F^{(A_1)}(x) \right\rvert
  & = & o\bigl((nv_n)^{-1/2}\bigr),  \label{eq:bias1}\\
  \sup_{y\in [y_0,\infty)} \left\lvert \expec{\Big(\frac{X_{-1}}{X_0}\Big)^\alpha \1(X_0/X_1>y)\,\Big|\, X_0>u_n} - \bar F^{(A_1)}(y) \right\rvert
  & = & o\bigl((nv_n)^{-1/2}\bigr). \label{eq:bias2}
\end{eqnarray}
Here $\bar F^{(A_1)}:=1-F^{(A_1)}$ denotes the survival function of $A_1$ (and hence of $\Theta_1$).
These conditions are fulfilled if $nv_n$ tends to $\infty$ sufficiently slowly, because by definition of the spectral tail process and by \eqref{eq:A1:tc:pos}, the left-hand sides in \eqref{eq:bias1}--\eqref{eq:bias2} tend to $0$ if $F^{(A_1)}$ is continuous on $[x_0,\infty)$.

\begin{theorem}
\label{cor:asnorm}
Let $(X_t)_{t\in\ZZ}$ be a stationary, regularly varying process with a Markov spectral tail chain. If (A($x_0$)), (B),  (C), \eqref{eq:bias1} and \eqref{eq:bias2}  are fulfilled for some $x_0\ge 0$ and $y_0\in [x_0,\infty)\cap (0,\infty)$, then
\begin{equation}
\label{eq:estconv}
  (nv_n)^{1/2}
  \begin{pmatrix}
    (\CDFfA{x}- F^{(A_1)}(x))_{x\in[x_0,\infty)} \\
    (\CDFbA{y}- F^{(A_1)}(y))_{y\in[y_0,\infty)}
  \end{pmatrix}
  \dto
  \begin{pmatrix}
    (Z^{(f,A_1)}(x))_{x\in[x_0,\infty)} \\
    (Z^{(b,A_1)}(y))_{y\in[y_0,\infty)}
  \end{pmatrix},
  \qquad n \to \infty,
\end{equation}
where the limit is a centered Gaussian process whose covariance function is given by
\begin{eqnarray*}
  \cov{Z^{(f,A_1)}(x), \, Z^{(f,A_1)}(y)} & = &  \bar F^{(A_1)}(\max (x,y))-\bar F^{(A_1)}(x)\bar F^{(A_1)}(y),\\
  \cov{Z^{(b,A_1)}(x), \, Z^{(b,A_1)}(y)} & = &  \expec{\Theta_1^{-1}\1(\Theta_1>\max(x,y))}-\bar F^{(A_1)}(x)\bar F^{(A_1)}(y),
\end{eqnarray*}
and
\begin{eqnarray*}
  \cov{Z^{(f,A_1)}(x), \, Z^{(b,A_1)}(y)} &=&
  \sum_{k=1}^\infty {(\Theta_{k-1}/\Theta_k)^\alpha
  \1(\Theta_1>x,\Theta_k/\Theta_{k-1}>y,Y_k>1)} \\
  && \mbox{} - \bar F^{(A_1)}(x) \sum_{k=1}^\infty \expec{(\Theta_{k-1}/\Theta_k)^\alpha\1(\Theta_k/\Theta_{k-1}>y,Y_k>1)}\\
  && \mbox{} - \bar F^{(A_1)}(y) \sum_{k=1}^\infty \prob{\Theta_1>x,Y_k>1}\\
  && \mbox{} + \bar F^{(A_1)}(x) \, \bar F^{(A_1)}(y) \sum_{k=1}^\infty \prob{Y_k>1}.
\end{eqnarray*}
\end{theorem}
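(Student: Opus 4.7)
The plan is to express $\CDFfA{x}$ and $\CDFbA{x}$ as ratios of generalized tail array sums based on the cluster variables $X_{n,i}$ from \eqref{eq:Xnidef} and to derive the joint functional limit from the cluster-functional FCLT of \cite{drees2010limit}, as recorded in Proposition~\ref{th:procconv}. To that end, for the nonnegative Markov chain, introduce the cluster functionals
\begin{align*}
  \phi_x^f(z_{-1},z_0,z_1) &= \1(z_1 \le x z_0)\,\1(z_0 > 1), \\
  \phi_x^b(z_{-1},z_0,z_1) &= (z_{-1}/z_0)^{\alpha}\, \1(z_0 > x z_{-1})\, \1(z_0 > 1), \\
  \phi_0(z_{-1},z_0,z_1) &= \1(z_0 > 1),
\end{align*}
and write $T_n(\phi) = \sum_{i=1}^n \phi(X_{n,i})$, so that $\CDFfA{x} = T_n(\phi_x^f)/T_n(\phi_0)$ and $1-\CDFbA{x} = T_n(\phi_x^b)/T_n(\phi_0)$. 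The law-of-large-numbers part of the FCLT, which holds under (B) and (C), gives $T_n(\phi_0)/(nv_n) \to 1$ in probability, so by Slutsky the claim reduces to the joint functional weak convergence of the centered sums
\[
(nv_n)^{-1/2}\bigl[ T_n(\phi_x^f) - F^{(A_1)}(x)\,T_n(\phi_0)\bigr]
\quad\text{and}\quad
-(nv_n)^{-1/2}\bigl[ T_n(\phi_y^b) - \bar F^{(A_1)}(y)\,T_n(\phi_0)\bigr],
\]
indexed by $x\in[x_0,\infty)$ and $y\in[y_0,\infty)$, respectively.

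This joint convergence is obtained by invoking Proposition~\ref{th:procconv} on the indexed class $\{\phi_x^f - F^{(A_1)}(x)\phi_0 : x\in[x_0,\infty)\} \cup \{\phi_y^b - \bar F^{(A_1)}(y)\phi_0 : y\in[y_0,\infty)\}$. Each member has envelope bounded by $2$, since on the support $\{z_0 > x z_{-1},\,z_0 > 1\}$ the weight $(z_{-1}/z_0)^{\alpha}$ is at most $z_0^{-\alpha}\le 1$. The mixing and block-size assumptions are precisely (B), the summable cluster-size structure is (C), and the continuity in (A($x_0$)) combined with the VC-subgraph nature of the underlying indicator families yields asymptotic equicontinuity in $x$ and $y$. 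The bias conditions \eqref{eq:bias1} and \eqref{eq:bias2} let the natural expectation-based centering be replaced by the target centering by $F^{(A_1)}$ uniformly, at a cost of $o(1)$.

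The limit covariance produced by Proposition~\ref{th:procconv} has the cluster form $c(\phi,\psi) = \lim_n \sum_{|k|\le r_n} v_n^{-1}\cov{\phi(X_{n,0}),\psi(X_{n,k})}$, and by joint regular variation each summand converges to an expectation under the law of the tail process $(Y_t)_{t\in\ZZ}$ with $Y_t = \abs{Y_0}\Theta_t$. For the forward--forward case the lag-zero term gives $F^{(A_1)}(\min(x,y))-F^{(A_1)}(x)F^{(A_1)}(y) = \bar F^{(A_1)}(\max(x,y)) - \bar F^{(A_1)}(x)\bar F^{(A_1)}(y)$, and for $|k|\ge 1$ the increment $\Theta_{k+1}/\Theta_k$ is, on $\{\Theta_k > 0\}$, distributed as $A_1$ and independent of the past, so the centered factor $\1(\Theta_{k+1}/\Theta_k \le y) - F^{(A_1)}(y)$ is conditionally mean-zero and the lag drops out. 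For backward--backward, the analogous collapse follows by combining \eqref{eq:A1:tc:pos} (which gives $\expec{\Theta_{-1}^{\alpha}\1(\Theta_{-1}<1/x)}=\bar F^{(A_1)}(x)$) with the independence of $\Theta_{-1}$ from $\{\Theta_j : j\ge 1\}$ in the spectral tail chain. For the forward--backward cross covariance this factorisation breaks down, and the lag-$k$ contributions for $k\ge 1$, expressed via the tail process and $Y_k = \abs{Y_0}\Theta_k$, reduce to the four displayed sums. The main technical obstacle will be verifying the hypotheses of Proposition~\ref{th:procconv} uniformly over the unbounded index sets $[x_0,\infty)$ and $[y_0,\infty)$: in particular, the weight $(X_{i-1}/X_i)^\alpha$ in the backward process needs an $L^2$ control against the tail measure, and asymptotic equicontinuity must be argued at $x = \infty$; once the FCLT is in force the covariance identification is essentially bookkeeping.
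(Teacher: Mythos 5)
Your proposal follows essentially the same route as the paper's own proof: both express the estimators as ratios of generalized tail array sums in the cluster variables $X_{n,i}$, reduce the claim via Slutsky and the bias conditions \eqref{eq:bias1}--\eqref{eq:bias2} to the joint functional CLT of Proposition~\ref{th:procconv}, and then identify the covariances by exploiting the independent increments of the Markov spectral tail chain together with the time-change formula, so that the lag-$k$ terms cancel in the forward--forward and backward--backward cases but not in the cross case. The only slip is your envelope bound --- $(z_{-1}/z_0)^{\alpha}\le z_0^{-\alpha}\le 1$ is false in general since it would require $z_{-1}\le 1$; the correct uniform bound on the support $\{z_0>xz_{-1}\}$ with $x\ge y_0>0$ is $(z_{-1}/z_0)^{\alpha}\le y_0^{-\alpha}$, which is exactly the constant $\max(1,\tilde x_0^{-2\alpha})$ the paper uses, so the argument is unaffected.
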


\begin{remark}
\label{rem:xgeq1}
For $x \ge 1$, we have
\begin{eqnarray*}
  \var(Z^{(b,A_1)}(x))
  & = & \expec{ \Theta_1^{-1}\mathbf{1}(\Theta_1>x)}-\big(\bar F^{(A_1)}(x)\big)^2\\
  & < & \prob{\Theta_1>x}-\big(\bar F^{(A_1)}(x)\big)^2\\
  & = & \var(Z^{(f,A_1)}(x)),
\end{eqnarray*}
provided $ \bar F^{(A_1)}(x)=\prob{A_1>x}=\prob{\Theta_1>x}>0$. Hence, for such $x$, when the tail index $\alpha$ is known, the backward estimator is asymptotically more efficient than the forward estimator.
\end{remark}

\begin{remark}
\label{rem:cont}
  While it is not too restrictive to assume that the cdf of $A_1$ is continuous on $(0,\infty)$, often $\mathcal{L}(A_1)$ has positive mass at 0; see Example~\ref{ex:tEVcopula}. In this case, one may prove a version of Theorem~\ref{cor:asnorm} where the first coordinate in \eqref{eq:estconv} is replaced with
  \[
    (nv_n)^{1/2} \left( w(x) \bigl( \CDFfA{x}-F^{(A_1)}(x) \bigr) \right)_{x\in[0,\infty)}
    \dto \left( w(x) \, Z^{(f,A_1)}(x) \right)_{x\in[0,\infty)}, \qquad n \to \infty,
  \]
  for a weight function $w(x)=h \bigl( F^{(A_1)}(x)-F^{(A_1)}(0) \bigr)$ and any nondecreasing, continuous function $h$ with $h(0)=0$.
\end{remark}

\begin{remark}\label{rem:markov}
  Note that a similar result also holds true without assuming the  Markovianity of the spectral tail chain, but then the formulas for the covariance function of the limiting process are more involved. The simple explicit formulas for the asymptotic variances obtained in Theorem \ref{cor:asnorm} can be used to construct pointwise confidence intervals for the cdfs of $A_1$ or $B_1$ by a plug-in approach. However, if one wants to derive uniform confidence bands or tests for these cdfs then a resampling procedure may be advisable. The same holds true if one cannot assume that the tail sequence has the Markov property. The analysis of such methods is left for future work.

\end{remark}

\subsection{Unknown index of regular variation}
\label{sec:CLT:unknown}

In most applications, the index of regular variation, $\alpha$, is unknown. In the definition of the backward estimator $\CDFbA{y}$, it must then be replaced with a suitable estimator. A popular estimator of $\alpha$ is the Hill-type estimator \eqref{Hill}. More generally, one may consider estimators that can be written in the form
\begin{equation} \label{eq:alphaest}
  \hat \alpha_n = \frac{\sum_{i=1}^n \1(X_i>u_n)}{\sum_{i=1}^n \tilde \psi(X_i/u_n)\1(X_i>u_n)+R_n}
\end{equation}
with a remainder term $R_n=o_P((nv_n)^{1/2})$ and a suitable function $\tilde \psi:[0,\infty)\to[0,\infty)$ which is a.s.\ continuous w.r.t.\ $\law{Y_k}$ for all $k\in\NN \cup \{0\}$ such that $\texpec{\tilde\psi(Y_0)}=1/\alpha$ and $\tilde\psi(x)=0$ for all $x\in[0,1]$. Obviously, the Hill-type estimator is of this form with $\tilde\psi(x) = \log(x) \, \1(x>1)$. Under weak dependence conditions, other well-known estimators like the maximum likelihood estimator in a generalized Pareto model examined by \cite{smith87} and the moment estimator suggested by \cite{dekketal89} can be written in this way too; see \citet[Example~4.1]{drees98b} and \citet[Example~4.1]{drees98a} for similar results in the case of i.i.d.\ sequences.

Estimators of type \eqref{eq:alphaest} can be approximated by the ratio of the generalized tail array sums corresponding to the functions $\phi_1(y_{-1},y_0,y_1):= \1(y_0>1)$ and $\psi(y_{-1},y_0,y_1):=\tilde\psi(y_0)\1(y_0>1)$, respectively, and their asymptotic behavior can hence be derived from Theorem 2.3 of \cite{drees2010limit}. To this end, we replace (C) with the following condition:
\begin{description}
\item[(C')]
For all $0\le k\le r_n$ there exists
\begin{equation} \label{eq:snkdef2}
s_n(k)\ge \expec{\max\Big(\Big|\tilde\psi\Big(\frac{X_0}{u_n}\Big)\Big|, \1(X_0>u_n)\Big) \max\Big(\Big|\tilde\psi\Big(\frac{X_k}{u_n}\Big)\Big|,\1(X_k>u_n)\Big) \, \Big| \, X_0>u_n}
\end{equation}
such that $\lim_{n\to\infty} \sum_{k=1}^{r_n} s_n(k)=\sum_{k=1}^\infty \lim_{n\to\infty} s_n(k)<\infty$.

Moreover,  there exists $\delta>0$ such that
\begin{equation}
\label{eq:psibdd}
  \sum_{k=1}^{r_n}
  \bigg(
    \expec{
      \Bigl| \tilde\psi \Bigl( \frac{X_0}{u_n}\Big)\tilde\psi\Big(\frac{X_k}{u_n} \Bigr) \Bigr|^{1+\delta}
      \, \Big| \,
      X_0 > u_n
    }
  \bigg)^{1/(1+\delta)}
  = O(1),
  \qquad n \to \infty.
\end{equation}
\end{description}

If $\tilde\psi$ is bounded, then (C') follows from condition (C), but in general it is more restrictive, though it can often be established by similar arguments. For example, condition (C') holds for the solutions to the stochastic recurrence equation studied in Example~\ref{exam:SRE} and the Hill estimator, i.e., for $\tilde\psi(x) = \log (x) \, \1(x>1)$.

The following result gives the asymptotic normality of $\hat\alpha_n$ centered at
\begin{equation}
\label{eq:alphan}
  \alpha_n := 1 / \texpec{\tilde\psi(X_0/u_n)\mid X_0>u_n}.
\end{equation}
This quantity tends to $\alpha$ as $n \to \infty$ by the assumptions on the function $\tilde\psi$ and condition (C').

\begin{lemma}
\label{lemma:alphahatconv}
  If $\hat\alpha_n$ is of the form \eqref{eq:alphaest} and if the conditions (B) and (C') hold, then
   \[
    (nv_n)^{1/2}(\hat\alpha_n-\alpha_n)=\alpha\tilde Z_n(\phi_1)-\alpha^2\tilde Z_n(\psi)+o_P(1)
    \dto \alpha\tilde Z(\phi_1)-\alpha^2\tilde Z(\psi),
    \qquad n \to \infty,
   \]
   for a centered Gaussian process $\tilde Z$ with covariance function given by \eqref{eq:tildeZcov}.
\end{lemma}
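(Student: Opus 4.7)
The plan is to rewrite $\hat\alpha_n$ as a ratio of two generalized tail array sums, apply the functional CLT of \cite{drees2010limit} to obtain joint asymptotic normality of the two standardized sums, and then finish by a delta-method expansion of the ratio.

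For the first step, set $T_n(\phi) := \sum_{i=1}^n \phi(X_{n,i})$ and
$\tilde Z_n(\phi) := (nv_n)^{-1/2}\bigl(T_n(\phi) - n\,\expec{\phi(X_{n,1})}\bigr)$.
With $\phi_1(y_{-1},y_0,y_1) = \1(y_0>1)$ and $\psi(y_{-1},y_0,y_1) = \tilde\psi(y_0)\1(y_0>1)$ one has $\expec{\phi_1(X_{n,1})} = v_n$ and, by the very definition of $\alpha_n$, $\expec{\psi(X_{n,1})} = v_n/\alpha_n$. Substituting into \eqref{eq:alphaest} and pulling out $nv_n/\alpha_n$ from the denominator yields
\[
  \hat\alpha_n \;=\; \alpha_n \cdot \frac{1 + (nv_n)^{-1/2}\tilde Z_n(\phi_1)}{1 + \alpha_n (nv_n)^{-1/2}\tilde Z_n(\psi) + \alpha_n R_n/(nv_n)},
\]
and the last term in the denominator is $o_P((nv_n)^{-1/2})$ because $R_n = o_P((nv_n)^{1/2})$.

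For the second step, I would invoke Theorem~2.3 of \cite{drees2010limit} applied to the two-element class $\{\phi_1,\psi\}$. The bounded function $\phi_1$ is handled by (B) and (C). For the potentially unbounded $\psi$, the assumed a.s.\ continuity of $\tilde\psi$ with respect to the law of $Y_0$, the mixing assumption (B), the covariance-summability bound on $s_n(k)$, and the Lyapunov-type moment bound \eqref{eq:psibdd} in (C') together supply exactly the hypotheses needed to verify the Drees--Rootz\'en CLT for $\psi$. This gives the joint convergence $(\tilde Z_n(\phi_1),\tilde Z_n(\psi)) \dto (\tilde Z(\phi_1),\tilde Z(\psi))$ with the covariance function in \eqref{eq:tildeZcov}; in particular both components are $O_P(1)$.

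The final step is a routine linearization. Using $(1+a)/(1+b) = 1 + (a-b) + O(ab + b^2)$ with $a,b = O_P((nv_n)^{-1/2})$,
\[
  \hat\alpha_n - \alpha_n \;=\; \alpha_n(nv_n)^{-1/2}\tilde Z_n(\phi_1) \;-\; \alpha_n^2(nv_n)^{-1/2}\tilde Z_n(\psi) \;+\; o_P\bigl((nv_n)^{-1/2}\bigr).
\]
Multiplying by $(nv_n)^{1/2}$ and replacing the leading $\alpha_n$'s by $\alpha$ (at cost $o_P(1)$, since $\alpha_n \to \alpha$ as noted after \eqref{eq:alphan}) gives the stated expansion, and the weak limit follows from the joint CLT together with the continuous mapping theorem.

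The main obstacle I anticipate is the verification of the Drees--Rootz\'en hypotheses for the potentially \emph{unbounded} function $\psi$: the Hill case $\tilde\psi(x) = \log(x)\1(x>1)$ already shows that one must carefully control integrability of $\psi$ and its products under the conditional tail distribution, and this is precisely the purpose of \eqref{eq:psibdd} and the refined covariance bound \eqref{eq:snkdef2} in (C'). Once those hypotheses are in place, both the CLT and the subsequent ratio expansion are essentially mechanical.
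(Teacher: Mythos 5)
Your proposal is correct and follows essentially the same route as the paper: write $\hat\alpha_n$ as a ratio of the tail array sums for $\phi_1$ and $\psi$, centre and normalize to express it through $\tilde Z_n(\phi_1)$ and $\tilde Z_n(\psi)$, invoke the cluster-functional CLT of \cite{drees2010limit} (with (B) and (C') supplying the hypotheses for the unbounded $\psi$), and linearize the ratio using $R_n=o_P((nv_n)^{1/2})$ and stochastic boundedness of the $\tilde Z_n$'s. The only cosmetic difference is that the paper extends Proposition~\ref{th:procconv} to the full class $\{\phi_1,\phi_{2,x},\phi_{3,y},\psi\}$ at once, whereas you apply the CLT only to the pair $\{\phi_1,\psi\}$, which suffices for this lemma.
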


Similarly as in \eqref{eq:bias1} and \eqref{eq:bias2}, we need an extra condition to ensure that the bias of $\hat\alpha_n$ is asymptotically negligible:
\begin{equation} \label{eq:bias3}
  \bigl\lvert
    \texpec{\tilde\psi(X_0/u_n)\mid X_0>u_n}- 1/\alpha
  \bigr\rvert
  = o \bigl( (nv_n)^{-1/2} \bigr),
  \qquad n \to \infty.
\end{equation}
Now we are ready to state the asymptotic normality of the backward estimator with estimated index $\alpha$, i.e.,
\[
  \CDFbAe{x}
  :=
  1 - \dfrac{\sum_{i=1}^{n}\left( \frac{X_{i-1}}{X_i}\right)^{\hat\alpha_n}\1\left(X_i/ X_{i-1} >x,
\; X_i > u_n \right)}{\sum_{i=1}^{n}\1\left(  X_i > u_n\right)}.
\]

\begin{theorem}
\label{cor:backwardestconv}
Suppose that the conditions of Corollary~\ref{cor:asnorm} and of Lemma~\ref{lemma:alphahatconv} are fulfilled and that~\eqref{eq:bias3} holds. Then
\begin{equation}
\label{eq:estconv2}
  (nv_n)^{1/2}
  \begin{pmatrix}
    (\CDFfA{x}- F^{(A_1)}(x))_{x\in[x_0,\infty)} \\
    (\CDFbAe{y}- F^{(A_1)}(y))_{y\in[y_0,\infty)}
  \end{pmatrix}
  \dto
  \begin{pmatrix}
    (Z^{(f,A_1)}(x))_{x\in[x_0,\infty)} \\
    (Z^{(\hat b,A_1)}(y))_{y\in[y_0,\infty)}
  \end{pmatrix},
  \qquad n \to \infty,
\end{equation}
with
\begin{align*}
  Z^{(f,A_1)}(x) &=  \tilde Z(\phi_{2,x})- \bar F^{(A_1)}(x)\tilde Z(\phi_1), \\
  Z^{(\hat b,A_1)}(y) &=  \tilde Z(\phi_{3,y})- \bar F^{(A_1)}(y)\tilde Z(\phi_1)+(\alpha^2\tilde Z(\psi)-\alpha\tilde Z(\phi_1))\expec{ \log (\Theta_1) \, \1(\Theta_1>y)}
\end{align*}
for a centered Gaussian process $\tilde Z$ with covariance function given by \eqref{eq:tildeZcov}.
\end{theorem}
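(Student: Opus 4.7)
The plan is to reduce Theorem~\ref{cor:backwardestconv} to Theorem~\ref{cor:asnorm}, Lemma~\ref{lemma:alphahatconv}, the time-change formula~\eqref{eq:timechange}, and a first-order Taylor linearization of $a \mapsto (X_{i-1}/X_i)^a$ at $a=\alpha$. The starting point is joint functional weak convergence of the vector of generalized tail array sums built from $\phi_1(y_{-1},y_0,y_1) = \1(y_0 > 1)$, the family $\{\phi_{2,x} : x \in [x_0,\infty)\}$ that produces $\CDFfA{x}$, the family $\{\phi_{3,y} : y \in [y_0,\infty)\}$ that produces $\CDFbA{y}$, and $\psi(y_{-1},y_0,y_1) = \tilde\psi(y_0) \1(y_0 > 1)$ that produces $\hat\alpha_n$. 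Under (A($x_0$)), (B), (C'), Theorem~2.3 of \cite{drees2010limit} delivers joint convergence to a single centered Gaussian process $\tilde Z$. This yields, in particular, the limits $Z^{(f,A_1)}$ and $Z^{(b,A_1)}$ of Theorem~\ref{cor:asnorm} and, via Lemma~\ref{lemma:alphahatconv} together with the bias condition~\eqref{eq:bias3}, the joint weak limit $(nv_n)^{1/2}(\hat\alpha_n - \alpha) \dto \alpha \tilde Z(\phi_1) - \alpha^2 \tilde Z(\psi)$.

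For the bridge between $\CDFbA{y}$ and $\CDFbAe{y}$, I would write
\[
(X_{i-1}/X_i)^{\hat\alpha_n} - (X_{i-1}/X_i)^{\alpha} = (\hat\alpha_n - \alpha)(X_{i-1}/X_i)^\alpha \log(X_{i-1}/X_i) + R_{n,i},
\]
with a quadratic remainder $R_{n,i}$. Substitution and summation then give
\[
\CDFbAe{y} - \CDFbA{y} = -(\hat\alpha_n - \alpha) \, T_n(y) + \mathrm{Rem}_n(y),
\]
where $T_n(y) := \sum_i (X_{i-1}/X_i)^\alpha \log(X_{i-1}/X_i) \1(X_i/X_{i-1} > y, X_i > u_n) / \sum_i \1(X_i > u_n)$. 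A further application of Theorem~2.3 of \cite{drees2010limit} to the class of functions $(y_{-1},y_0,y_1) \mapsto (y_{-1}/y_0)^\alpha \log(y_{-1}/y_0) \1(y_0/y_{-1} > y, y_0 > 1)$ indexed by $y \in [y_0,\infty)$ shows that $T_n(\cdot)$ converges in probability, uniformly in $y$, to $\expec{\Theta_{-1}^\alpha \log(\Theta_{-1}) \1(1/\Theta_{-1} > y)}$. The time-change formula~\eqref{eq:timechange} applied with $i = -1$, $s = t = 0$, and $g(\theta) = \log(\theta) \1(\theta > y)$ then reduces this deterministic limit to a scalar multiple of $\expec{\log(\Theta_1) \1(\Theta_1 > y)}$, of exactly the form appearing in the statement.

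Putting the pieces together through Slutsky's theorem, $(nv_n)^{1/2}(\CDFbAe{y} - \CDFbA{y})$ converges to a scalar multiple of $(\alpha \tilde Z(\phi_1) - \alpha^2 \tilde Z(\psi)) \expec{\log(\Theta_1) \1(\Theta_1 > y)}$ jointly with the marginal weak limits from Theorem~\ref{cor:asnorm}. Adding to $Z^{(b,A_1)}(y)$ produces $Z^{(\hat b,A_1)}(y)$ as stated, and joint convergence with the forward estimator is inherited from the joint functional CLT for the underlying tail array sums (which does not depend on $\alpha$). The main obstacle is uniform-in-$y$ control of the quadratic remainder $\mathrm{Rem}_n(y)$. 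On the high-probability event $\{|\hat\alpha_n - \alpha| \le \varepsilon\}$, $\mathrm{Rem}_n(y)$ is bounded by $(\hat\alpha_n - \alpha)^2$ times empirical averages of $(X_{i-1}/X_i)^{\alpha'} \log^2(X_{i-1}/X_i)$-type quantities with $\alpha' \in [\alpha - \varepsilon, \alpha + \varepsilon]$. These averages are $O_P(1)$ uniformly in $(y, \alpha')$, by invoking the same empirical-process machinery of \cite{drees2010limit} now indexed jointly by $(y, \alpha')$; since $(\hat\alpha_n - \alpha)^2 = O_P((nv_n)^{-1})$, the entire remainder is $o_P((nv_n)^{-1/2})$ uniformly. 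Verifying the bracketing-entropy hypotheses for this enlarged function class under conditions (B) and (C') is the most technical step.
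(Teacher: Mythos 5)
Your proposal follows essentially the same route as the paper: joint convergence of the tail array sums for $\phi_1$, $\phi_{2,x}$, $\phi_{3,y}$ and $\psi$, the linearization of $\hat\alpha_n$ from Lemma~\ref{lemma:alphahatconv}, a first-order Taylor expansion of $a\mapsto(X_{i-1}/X_i)^a$, uniform convergence of the empirical average of $(X_{i-1}/X_i)^\alpha\log(X_{i-1}/X_i)$ to its limit $\ell_y$, and the time-change formula to rewrite $\ell_y$ as $-\expec{\log(\Theta_1)\1(\Theta_1>y)}$. The one place you diverge is the quadratic remainder, which you flag as the most technical step and propose to control via the empirical-process machinery over a class indexed jointly by $(y,\alpha')$; this is far heavier than needed. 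On the event $\{X_i/X_{i-1}>y\}$ with $y\ge y_0>0$ one has $X_{i-1}/X_i\le 1/y_0$, and $z\mapsto z^{\tilde\alpha}(\log z)^2$ is \emph{deterministically} bounded on $(0,1/y_0]$ uniformly over $\tilde\alpha$ in a neighbourhood of $\alpha$; hence each remainder term is at most $C(\hat\alpha_n-\alpha)^2$, and summing over the $O_P(nv_n)$ exceedances gives $o_P((nv_n)^{1/2})$ directly, with no bracketing-entropy argument required --- this is exactly how the paper disposes of it in one line.
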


The covariance function of the limiting process can be calculated in the same way as in the proof of Theorem~\ref{cor:asnorm}. In general, the resulting expressions will involve sums over all $k \in \NN$. Moreover, it is no longer guaranteed that the backward estimator of $ F^{(A_1)}(y)$ at $y>1$ has a smaller variance than  the forward estimator.

\begin{remark}
\label{rem:posneg}
For Markovian time series which are not necessarily positive, the forward and backward estimators of $ F^{(A_1)}$ and $ F^{(B_1)}$ can be represented in terms of generalized tail array sums constructed from
\[
  X_{n,i} := \frac{(X_{i-1},X_i,X_{i+1})}{u_n} \1(\abs{X_i} > u_n).
\]
When $x< 0$, for example, the backward estimator $\CDFbB{x}$ equals the ratio of the generalized tail array sums pertaining to
\begin{align*}
  \phi_{4,x}(y_{-1},y_0,y_1) &:= (-y_{-1}/y_0)^\alpha \, \1(y_0/y_{-1}\le x,y_0>1), \\
  \phi_5 (y_{-1},y_0,y_1) &:= \1(y_0<-1).
\end{align*}
Hence their limit processes can be obtained by the same methods as in the case $X_t>0$ under obvious analogues to the conditions (A($x_0$)), (B) and (C) with $v_n:=\prob{\abs{X_0}>u_n}$.
\end{remark}

\section{Examples}
\label{sec:examples}

We first show how, for  regularly varying Markov chains, the distributions of $A_{\pm 1}$ and $B_{\pm 1}$ can be calculated from the copula of $(X_0,X_1)$ if this copula satisfies certain regularity conditions. Next, we focus on solutions to stochastic recurrence equations, for which the Markov spectral tail chain exists and the distributions of $A_{\pm 1}$ and $B_{\pm 1}$, as well as $p$, admit a simple representation.


\subsection{Copula-based Markov processes}
\label{sec:examples:copula}

For stationary Markov processes, the joint distribution of $(X_t)_{t \in \ZZ}$ is determined by the law of $(X_0, X_1)$. In view of Sklar's theorem, the latter is determined by a univariate marginal distribution function, $G$, and a copula, $C$, via
\begin{equation}
\label{eq:CG}
  \prob{ X_0 \le x_0, \, X_1 \le x_1 }
  = C \bigl( G(x_0), \, G(x_1) \bigr), \qquad (x_0, x_1) \in \reals^2.
\end{equation}
Copula-based Markov models arise if the law of the Markov process is described by specifying $G$ and $C$ directly \citep{Chen2006,Chen2009}.

The result below links the distributions of $A_1$ and $B_1$ in \eqref{A1_prob} and \eqref{B1_prob} to the margin $G$ and the copula $C$. 
The result is framed in terms of the limit behaviour of $\dot{C}_1(u, v) = \partial C(u, v) / \partial u$ as $u$ tends to $0$ or $1$. The function $\dot{C}_1$ is related to the conditional distribution of $X_1$ given $X_0$; see~\eqref{eq:dotC1} below. For $z \ge 0$, consider the limits (whose existence is an assumption)
\begin{align}
\label{eq:CA1pos}
  \lim_{s \searrow 0} \dot{C}_1(1 - s, 1 - sz)
  &= \eta_{1,1}(z), \\
\label{eq:CA1neg}
  \lim_{s \searrow 0} \dot{C}_1(1 - s, sz)
  &= \eta_{1,0}(z),
\end{align}
and
\begin{align}
\label{eq:CB1pos}
  \lim_{s \searrow 0} \dot{C}_1(s, 1 - sz)
  &= \eta_{0,1}(z), \\
\label{eq:CB1neg}
  \lim_{s \searrow 0} \dot{C}_1(s, sz)
  &= \eta_{0,0}(z),
\end{align}
covering the four corners of the unit square; see Figure~\ref{fig:etas}.

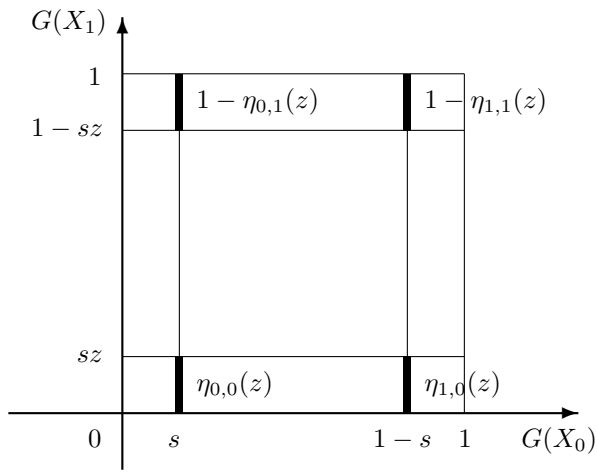
\begin{figure}
\setlength{\unitlength}{1.5cm}
\begin{center}
\begin{picture}(6,4.5)(-1,-0.5)
\thicklines
\put(0,-0.5){\vector(0,1){4}}
\put(-1,0){\vector(1,0){5}}
\thinlines
\put(3,0){\line(0,1){3}}
\put(0,3){\line(1,0){3}}
\put(0.5,0){\line(0,1){3}}
\put(0,0.5){\line(1,0){3}}
\put(2.5,0){\line(0,1){3}}
\put(0,2.5){\line(1,0){3}}
\linethickness{1mm}
\put(0.5,0){\line(0,1){0.5}}
\put(0.5,2.5){\line(0,1){0.5}}
\put(2.5,0){\line(0,1){0.5}}
\put(2.5,2.5){\line(0,1){0.5}}
\put(-0.3,-0.3){$0$}
\put(-0.3,2.9){$1$}
\put(2.95,-0.3){$1$}
\put(0.4,-0.3){$s$}
\put(2.2,-0.3){$1-s$}
\put(-0.4,0.45){$sz$}
\put(-0.8,2.45){$1-sz$}
\put(2.65,0.2){$\eta_{1,0}(z)$}
\put(2.65,2.7){$1-\eta_{1,1}(z)$}
\put(0.65,0.2){$\eta_{0,0}(z)$}
\put(0.65,2.7){$1-\eta_{0,1}(z)$}
\put(3.5,-0.3){$G(X_0)$}
\put(-0.8,3.4){$G(X_1)$}
\end{picture}
\end{center}
\caption{\label{fig:etas} \footnotesize
Viewing the functions $\eta_{0,0}$, $\eta_{0,1}$, $\eta_{1,0}$ and $\eta_{1,1}$ in terms of the conditional distribution of $G(X_1)$ conditionally on $G(X_0) = s$ or on $G(X_0) = 1-s$.}
\end{figure}

\begin{proposition}
\label{prop:copMarkov}
Let the distribution of $(X_0, X_1)$ be given by \eqref{eq:CG}, where $X_0$ has a Lebesgue density and satisfies the regular variation condition \eqref{eq:RV} and the tail-balance condition~\eqref{p_prob}. Assume that $C$ admits a continuous first-order partial derivative $\dot{C}_1$ on $(0, 1) \times [0, 1]$. If $p > 0$ and if the limits in \eqref{eq:CA1pos} and \eqref{eq:CA1neg} exist and are continuous on $ [0, \infty)$, then \eqref{A1_prob} holds and
\begin{equation}
\label{eq:A1:eta}
  \prob{ A_1 \le x } =
  \begin{cases}
    \eta_{1,1}(x^{-\alpha}) & \text{if $x > 0$,} \\[1ex]
    \eta_{1,0}( \frac{1-p}{p} \, \abs{x}^{-\alpha}) & \text{if $x < 0$.}
  \end{cases}
\end{equation}
Similarly, if $p < 1$ and if the limits in \eqref{eq:CB1pos} and \eqref{eq:CB1neg} exist and are continuous on $[0, \infty)$, then \eqref{B1_prob} holds and
\begin{equation}
\label{eq:B1:eta}
  \prob{ B_1 \le x } =
  \begin{cases}
    1-\eta_{0,0}(x^{-\alpha}) & \text{if $x > 0$,} \\[1ex]
    1-\eta_{0,1}( \frac{p}{1-p} \, \abs{x}^{-\alpha}) & \text{if $x < 0$.}
  \end{cases}
\end{equation}
\end{proposition}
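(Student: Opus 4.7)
The plan is to reduce the claim to a standard copula identity for the conditional distribution and then determine, in each sign combination, which of the four corner assumptions on $\dot{C}_1$ is triggered. The main technical obstacle will be justifying the interchange of limit and expectation at the end.

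The starting point is the identity $\dot{C}_1(u, v) = \prob{G(X_1) \le v \mid G(X_0) = u}$, which holds because the Lebesgue density assumption makes $G$ continuous and $G(X_0)$ uniform on $(0,1)$. Consequently, for $x \ne 0$,
\[
  \prob{X_1/X_0 \le x \mid X_0 = x_0}
  = \begin{cases} \dot{C}_1(G(x_0), G(xx_0)), & x_0 > 0, \\ 1 - \dot{C}_1(G(x_0), G(xx_0)), & x_0 < 0, \end{cases}
\]
the sign flip arising because multiplying through by $X_0 < 0$ reverses the inequality. Integrating over the conditional law of $X_0$ given $X_0 > u$ (for $A_1$) or $X_0 < -u$ (for $B_1$) reduces the task to computing the limit of $\dot{C}_1(G(x_0), G(xx_0))$ as $\abs{x_0} \to \infty$.

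Next I would identify which of \eqref{eq:CA1pos}--\eqref{eq:CB1neg} applies in each case. Writing the first argument of $\dot{C}_1$ in the form $1-s$ or $s$ (depending on whether $x_0 \to +\infty$ or $-\infty$) and the second as $1-sz$ or $sz$ (depending on the sign of $xx_0$), the ratio $z$ is evaluated from \eqref{eq:RV} and the tail-balance condition \eqref{p_prob}. For $A_1$ with $x > 0$ the relevant corner is $(1,1)$ and $z \to x^{-\alpha}$, producing $\eta_{1,1}(x^{-\alpha})$. For $A_1$ with $x < 0$ the corner is $(1,0)$, and the factorization
\[
  z = \frac{\prob{X_0 \le xx_0}}{\prob{\abs{X_0} > \abs{x}x_0}} \cdot \frac{\prob{\abs{X_0} > \abs{x}x_0}}{\prob{\abs{X_0} > x_0}} \cdot \frac{\prob{\abs{X_0} > x_0}}{\prob{X_0 > x_0}}
\]
together with $x_0 \to \infty$ gives $z \to \frac{1-p}{p}\abs{x}^{-\alpha}$, yielding $\eta_{1,0}(\frac{1-p}{p}\abs{x}^{-\alpha})$. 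The two cases for $B_1$ occupy the corners $(0,0)$ and $(0,1)$ and produce $1-\eta_{0,0}(x^{-\alpha})$ and $1-\eta_{0,1}(\frac{p}{1-p}\abs{x}^{-\alpha})$ by identical bookkeeping.

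The final step is the interchange of limit and expectation. Here I would use that $X_0/u$ given $X_0 > u$ converges weakly to the Pareto$(\alpha)$ law on $[1,\infty)$ by regular variation, and that $\dot{C}_1(G(tu), G(xtu))$ converges to the corner limit locally uniformly in $t$. Local uniformity of $z \to x^{-\alpha}$ in $t$ follows from Potter's bounds applied to the regularly varying tails, and local uniformity of $\dot{C}_1(1-s, 1-sz) \to \eta_{1,1}(z)$ in $z$ is automatic because $\dot{C}_1(u, \cdot)$ is a cdf, so the relevant map is monotone in $z$ and monotone pointwise convergence to a continuous limit is locally uniform. Combined with the uniform bound $\dot{C}_1 \le 1$, bounded convergence then delivers \eqref{eq:A1:eta} and \eqref{eq:B1:eta}.
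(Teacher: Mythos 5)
Your proposal is correct and follows essentially the same route as the paper's proof: express the conditional law of $X_1$ given $X_0=x_0$ through $\dot{C}_1(G(x_0),G(xx_0))$, identify the relevant corner and compute the argument $z$ from regular variation and tail balance, and upgrade the pointwise corner convergence to local uniformity via monotonicity in $z$ plus continuity of the limit. The only (immaterial) difference is the final interchange step, where you invoke weak convergence of $X_0/u$ to a Pareto law and bounded convergence, whereas the paper simply notes that averaging $\prob{X_1/X_0\le x\mid X_0=y}$ over $y\in(u,\infty)$ inherits the limit as $y\to\infty$.
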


Similarly, the distributions of $A_{-1}$ and $B_{-1}$ in \eqref{A-1_prob} and \eqref{B-1_prob} can be obtained via the limit behaviour of $\dot{C}_2(u, v) = \partial C(u, v) / \partial v$ as $v$ tends to $0$ or $1$. Some examples are worked out in Appendix~\ref{sec:additional}, whereas the proof of Proposition~\ref{prop:copMarkov} is given in Appendix~\ref{sec:proofs}.

\subsection{Stochastic recurrence equations}
\label{sec:examples:sre}

The stochastic recurrence equation
\begin{equation}\label{eq_SRE}
X_t=C_t X_{t-1} + D_t, \quad t\in\mathbb{Z},
\end{equation}
received some attention in time series analysis and extreme value theory. We focus on the case where $\left(C_t,D_t\right)$, $t\in\mathbb{Z}$, is an i.i.d.\ $\mathbb{R}^2$--valued sequence. Provided that $-\infty \leq \expec{\log\vert C_1\vert} <0$ and $\expec{\log^+ \vert D_1\vert} <\infty$, where $\log^+ x=\text{max}(\log x,0 )$, there exists a unique strictly stationary causal solution to \eqref{eq_SRE} \citep[Corollary~2.2]{basrak:davis:mikosch:2002}.

Results on regular variation of $X_0$ were first developed by \cite{Kesten73}. His Theorem~5 states that if there exists $\alpha>0$ such that $\expec{\vert C_1 \vert^\alpha} =1$, $\expec{\vert C_1\vert^\alpha \log^+ \vert C_1 \vert} < \infty$ and $\expec{\vert D_1 \vert^\alpha} < \infty$ and if some other conditions are satisfied, then $X_0$ is regularly varying; more specifically,
\begin{equation}\label{eq:rv_kesten}
  \left.
  \begin{array}{rcl}
    \Pr(X_0 > x) &=& c_+ x^{-\alpha} \, (1 + o(1)) \\[1ex]
    \Pr(X_0 \leq -x) &=& c_- x^{-\alpha} \, (1 + o(1))
  \end{array}
  \right\},
  \qquad x \to \infty,
\end{equation}
for constants $c_+$, $c_- \geq 0$ such that $c_+ + c_- >0$.
This result was extended by \citet[Lemma~2.2 and Theorem~4.1]{goldie1991}, who gave explicit expressions for $c_+$ and $c_-$. Regular variation of $X_0$ and iteration of \eqref{eq_SRE} gives joint regular variation of $(X_t)_{t \in \ZZ}$ with index $\alpha$.

The forward spectral tail process $(\Theta_t)_{t \ge 0}$ of $(X_t)_{t \in \ZZ}$ admits the representation
\begin{equation*}
  \Theta_t = \Theta_0 \prod_{h=1}^t \tilde{C}_h, \qquad t\in\NN,
\end{equation*}
where $\tilde{C}_t$, $t\in\NN$, are i.i.d.\ random variables with the same distribution as $C_1$ which are independent of $\Theta_0$ \citep[Example~6.1]{segersjanssen}. Hence, $(\Theta_t)_{t \geq 0}$ becomes a Markov spectral tail chain satisfying \eqref{eq:Markov:forward} with $\law{A_1} = \law{B_1} = \law{C_1}$. Moreover, by \citet[Theorem~4.1]{goldie1991},  if $\prob{ C_1 < 0 } > 0$ then necessarily $c_+ = c_-$ and therefore $p = \prob{ \Theta_0 = +1 } = 1/2$. This property also follows from the following result on more general tail spectral processes; its proof is given in Appendix~\ref{sec:proofs}.

\begin{lemma}
\label{lem:tailsign}
Let $(\Theta_t)_{t \in \ZZ}$ be the spectral tail process of an $\alpha$-regularly varying stationary time series $(X_t)_{t \in \ZZ}$. Suppose the following two conditions hold:
\begin{enumerate}[(i)]
\item
$\expec{\abs{\Theta_1}^\alpha} = 1$;
\item
$\Theta_1 / \Theta_0$ and $\Theta_0$ are independent.
\end{enumerate}
Then $\Theta_{-1} / \Theta_0$ and $\Theta_0$ are independent too.  Moreover, $\prob{ \Theta_1/\Theta_0 < 0 } > 0$ implies $p = \prob{ \Theta_0 = 1 } = 1/2$.
\end{lemma}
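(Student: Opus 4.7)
The natural tool is the time-change formula~\eqref{eq:timechange}, which converts quantities indexed by $-1$ on the left into quantities indexed by $1$ on the right (where independence of $\Theta_1/\Theta_0$ and $\Theta_0$ is assumed). Concretely, I would apply \eqref{eq:timechange} with $s=-1$, $t=0$, $i=1$ to test functions of the form $f(y_{-1},y_0) = g(y_{-1}/y_0)\,h(y_0)$, which vanish at $y_0=0$ so the formula is applicable. Using $|\Theta_0|=1$ a.s.\ (because $|Y_0|$ is Pareto$(\alpha)$ on $[1,\infty)$), and writing $U := \Theta_1/\Theta_0$ and $V := \Theta_0 \in \{-1,+1\}$, the right-hand side becomes
\[
  \expec{\,g(1/U)\, h(V\,\sign U)\, |U|^{\alpha}\,\1(U\ne 0)\,}.
\]

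Now condition (ii) says $U$ and $V$ are independent, so I can freeze $U$ and integrate $h$ against the law of $V$. Setting $p_{\pm} := \expec{|U|^{\alpha}\,\1(\pm U>0)}$, condition (i) gives $p_+ + p_- = \expec{|\Theta_1|^{\alpha}} = 1$. Taking $g\equiv 1$ and $h(y_0)=\1(y_0=1)$, the identity reduces to
\[
  p \;=\; p_+\, p \;+\; p_-\,(1-p),
\]
equivalently $p_-(1-2p) = 0$. Hence $p_->0$ (i.e.\ $\prob{\Theta_1/\Theta_0<0}>0$) forces $p=1/2$, which is the second assertion.

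For the first assertion, I would compare $\expec{g(\Theta_{-1}/\Theta_0)\,\1(\Theta_0=1)}$ with $p\,\expec{g(\Theta_{-1}/\Theta_0)}$. Splitting the right-hand side of the time-change identity on the sign of $U$ and using independence of $U$ and $V$,
\[
  \expec{g(\Theta_{-1}/\Theta_0)\,\1(\Theta_0=1)}
  \;=\; p\,\expec{g(1/U)|U|^{\alpha}\1(U>0)} + (1-p)\,\expec{g(1/U)|U|^{\alpha}\1(U<0)},
\]
while summing over $\Theta_0\in\{-1,+1\}$ and using $p_++p_-=1$ gives $\expec{g(\Theta_{-1}/\Theta_0)} = \expec{g(1/U)\,|U|^{\alpha}\1(U\ne 0)}$. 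Subtracting, the discrepancy equals $(1-2p)\,\expec{g(1/U)|U|^{\alpha}\1(U<0)}$, which vanishes in both cases: either $p_-=0$ so the expectation is zero, or $p_->0$ so $p=1/2$ by the previous step. This yields $\prob{\Theta_{-1}/\Theta_0\in\cdot,\Theta_0=1} = p\,\prob{\Theta_{-1}/\Theta_0\in\cdot}$, and symmetrically for $\Theta_0=-1$, establishing independence.

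The only mildly delicate point is the bookkeeping of signs in $\sign(\Theta_1) = V\,\sign(U)$ and the need to feed an $f$ into \eqref{eq:timechange} that vanishes at $y_0=0$; once this is set up, everything reduces to the independence in (ii) and the normalization in (i). The trivial boundary cases $p\in\{0,1\}$ make $\Theta_0$ almost surely constant and thus trivially independent of any other variable.
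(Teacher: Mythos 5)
Your overall route is the paper's: push $\Theta_{-1}$-quantities through the time-change formula onto $(\Theta_0,\Theta_1)$, factor using (ii), normalize using (i); both your identity $(1-2p)\,p_-=0$ and your comparison of $\expec{g(\Theta_{-1}/\Theta_0)\,\1(\Theta_0=1)}$ with $p\,\expec{g(\Theta_{-1}/\Theta_0)}$ appear in the paper's proof in essentially the same form. However, the point you call ``mildly delicate'' is exactly where there is a gap. The application you need is $s=0$, $t=1$, $i=1$ (not $s=-1$, $t=0$, $i=1$, which would place $\Theta_{-2},\Theta_{-1}$ on the left-hand side), and in that application the hypothesis of \eqref{eq:timechange} is that $f$ vanishes when the coordinate receiving $\Theta_{-1}$ on the left (equivalently $\Theta_0/\abs{\Theta_1}$ on the right) is zero --- \emph{not} the coordinate receiving $\Theta_0$, which is the one you check. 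For your choice $g\equiv 1$, $h=\1(\cdot=1)$ the hypothesis fails, and what the formula actually delivers on the left is $\prob{\Theta_0=1,\,\Theta_{-1}\ne 0}$, not $p$; likewise, in the independence step you need $g(0)=0$, which a general bounded $g$ does not satisfy.

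The repair is short and is precisely the paper's first move: applying the admissible functions $f_\pm(y_0,y_1)=\1(y_1=\pm 1)\,\1(y_0\ne 0)$ gives $\expec{(\Theta_1^{\pm})^\alpha}=\prob{\Theta_0=\pm 1,\,\Theta_{-1}\ne 0}$; summing and invoking (i) yields $\prob{\Theta_{-1}\ne 0}=\expec{\abs{\Theta_1}^\alpha}=1$. Only after this does $\prob{\Theta_0=1,\,\Theta_{-1}\ne 0}=p$, so that your identity $p=p\,p_+ +(1-p)\,p_-$ becomes legitimate --- note that this step uses hypothesis (i) in an essential way beyond the normalization $p_++p_-=1$. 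For the independence step the paper instead normalizes the test function: the target identity holds trivially for constant $g$, so one may assume $g(0)=0$ without loss of generality, after which the vanishing hypothesis is met. With either repair inserted, the remainder of your argument goes through verbatim.
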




Lemma~\ref{lem:tailsign} applies to the solution $(X_t)_{t \in \ZZ}$ of the stochastic recurrence equation \eqref{eq_SRE}. Above, we had already seen that $\law{A_1} = \law{B_1} = \law{C_1}$ and $\expec{\abs{C_1}^\alpha} = 1$. Hence, the spectral tail process $(\Theta_t)_{t \in \ZZ}$ satisfies the two conditions in Lemma~\ref{lem:tailsign}. We obtain that $\Theta_{-1}/\Theta_0$ is independent of $\Theta_0$ too. But since the backward spectral tail process admits the representation in \eqref{eq:Markov:backward}, we conclude that $\law{A_{-1}} = \law{B_{-1}}$. If $p = 0$ or $p = 1$, the previous statements simplify, since $A_1$ and $A_{-1}$ in \eqref{A1_prob}--\eqref{A-1_prob} are defined only if $p > 0$ whereas $B_1$ and $B_{-1}$ in \eqref{B1_prob}--\eqref{B-1_prob} are defined only if $p < 1$. For $p \in \{0, 1\}$, the law of $\Theta_{0}$ is degenerate at $-1$ or at $1$, respectively, so that the statements on independence are trivially satisfied.

\section{Monte Carlo simulations}
\label{sec:simul}

To compare the finite-sample performance of the forward, backward, and mixture estimators, we conducted a Monte Carlo simulation study. We use both approaches discussed at the end of Section~\ref{sec:estim} for dealing with the problem that $\alpha$ is unknown.

Pseudo-random samples are generated from two different time series models.
\begin{itemize}
\item
For the copula Markov model (Section~\ref{sec:examples:copula}), we choose the symmetric t-distribution with $\nu_1=2$ degrees of freedom as the margin $G$ and the t-copula $C^t_{\nu_2,\rho}(u,v)$ with $\nu_2=2.5$ and $\rho=0.2$ to model the temporal dependence structure. Hence, $\prob{\Theta_0=1}=\prob{\Theta_0=-1}=1/2$ and the index of regular variation is equal to $2$. The distribution function of $A_1$ is calculated in Example~\ref{ex:tcopula}. The simulation algorithm is described in \citet[Section~6.2]{Chen2009}.
\item
For the stochastic recurrence equation (Section~\ref{sec:examples:sre}), we let $C_t$ and $D_t$ be independent $N(1/3, 8/9)$ and $N(-10, 1)$ random variables, respectively. This choice ensures that $\expec{C_t^2}=1$ and that the sufficient conditions of Theorem~5 in \cite{Kesten73} hold. As a consequence, a stationary solution $\left(X_t\right)_{t\in \ZZ}$ exists which is regularly varying with index $\alpha=2$.
\end{itemize}

For each model, we generate time series $(X_i)_{1\le i\le n}$ of length $n = 2\,000$ and set the threshold $u_n$ to be the $97.5\%$ quantile of the absolute values of the sample, i.e., we use 50 extremes for estimation. Based on $1\,000$ Monte Carlo repetitions, we estimate the bias, the standard deviation ($\SD$) and the root mean squared error ($\RMSE$) of all estimators under consideration. Note that both models are symmetric in the sense that $\law{A_1} = \law{B_1}$. Moreover, the estimators of the cdf's of $A_1$ and $B_1$ are identically distributed for the copula model, and they behave similarly for the solutions to the stochastic recurrence equation. Therefore, we report the results only for $A_1$.

\begin{figure}[htp]
\begin{center}
\begin{tabular}{ccc}
\includegraphics[width=0.3\textwidth]{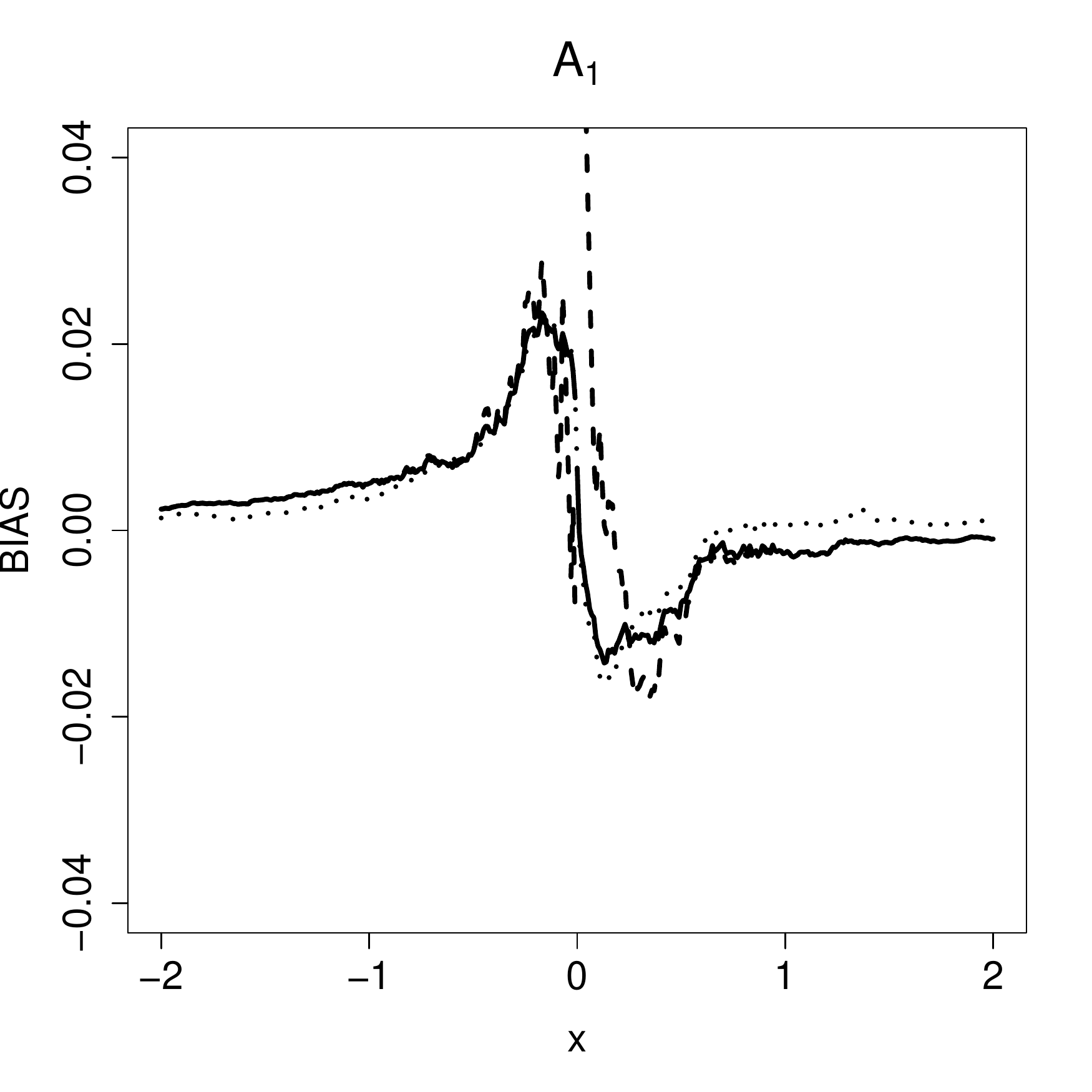}
&
\includegraphics[width=0.3\textwidth]{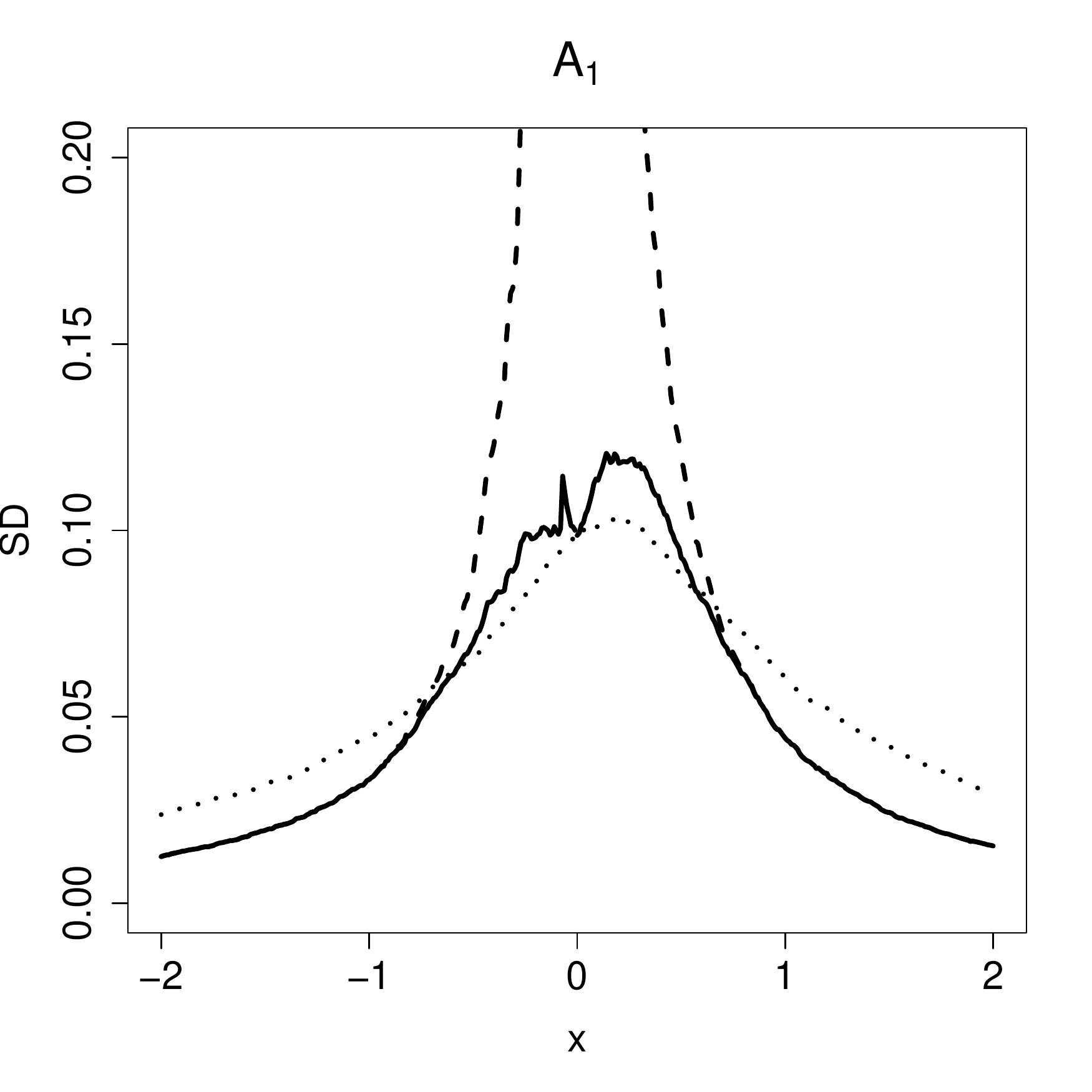}
&
\includegraphics[width=0.3\textwidth]{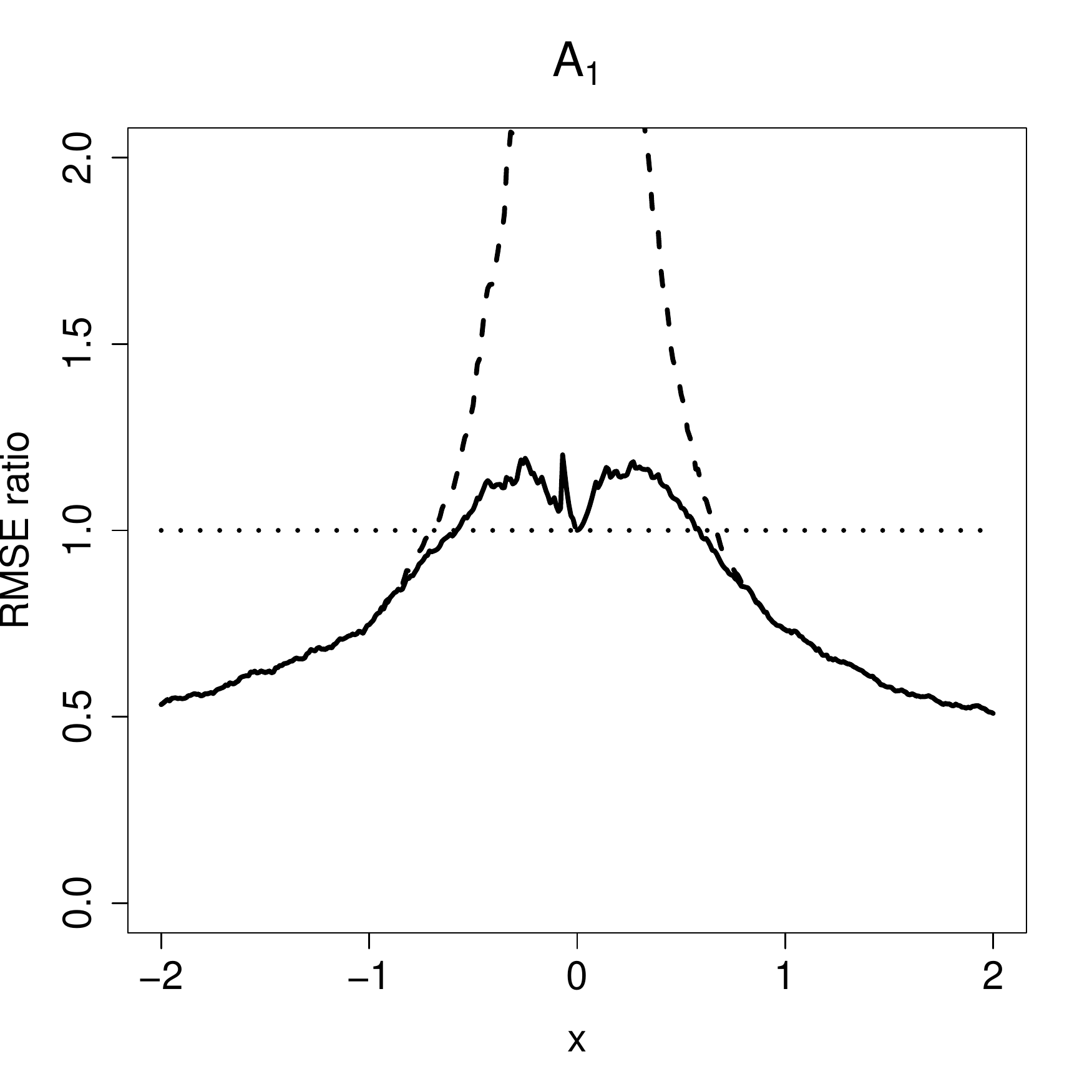}
\\
\includegraphics[width=0.3\textwidth]{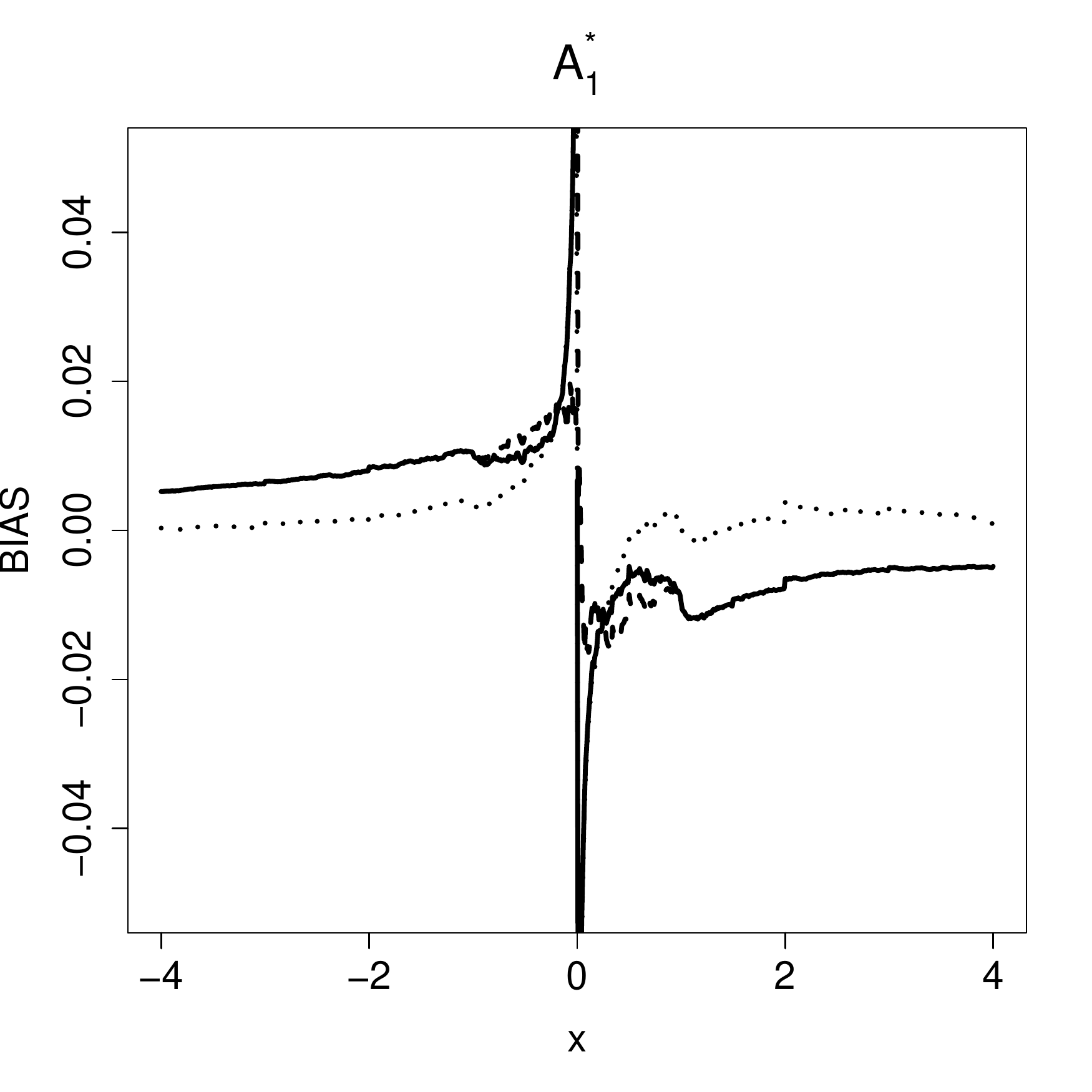}
&
\includegraphics[width=0.3\textwidth]{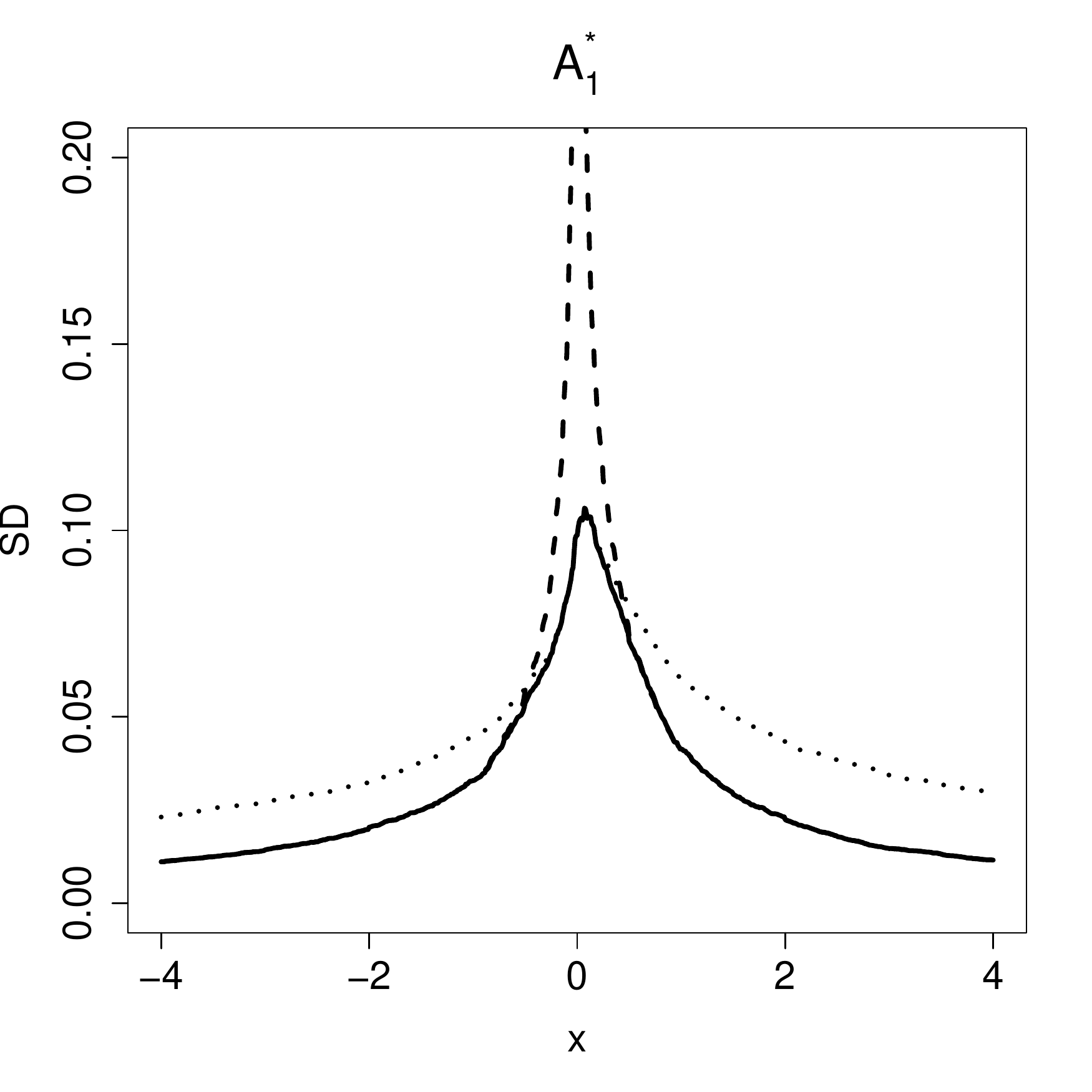}
&
\includegraphics[width=0.3\textwidth]{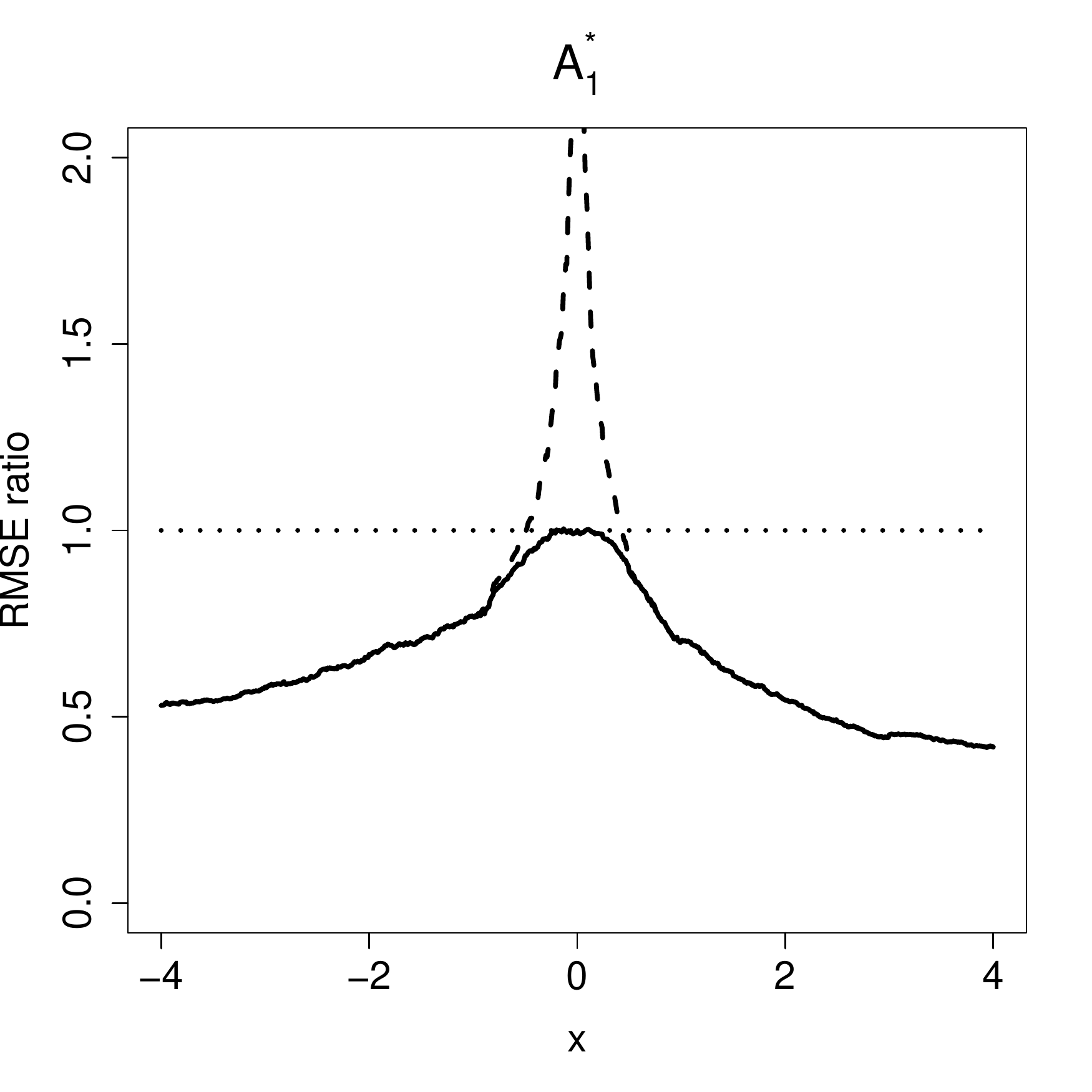}
\end{tabular}
\end{center}
\caption{\label{simu:CMM:a} \footnotesize
Simulation results for the copula Markov model. Top: Estimation of the cdf of $A_1$ using the plug-in estimator for $\alpha$. Bottom: Estimation of the cdf of $A_1^* = \sign(A_1) \abs{A_1}^\alpha$ with the signed rank transformation. Left: bias; middle: standard deviation; right: root mean squared error ratio with respect to the forward estimator. Solid line: mixture estimator; dashed line: backward estimator; dotted line: forward estimator.}
\end{figure}

\begin{figure}[htp]
\begin{center}
\begin{tabular}{ccc}
\includegraphics[width=0.3\textwidth]{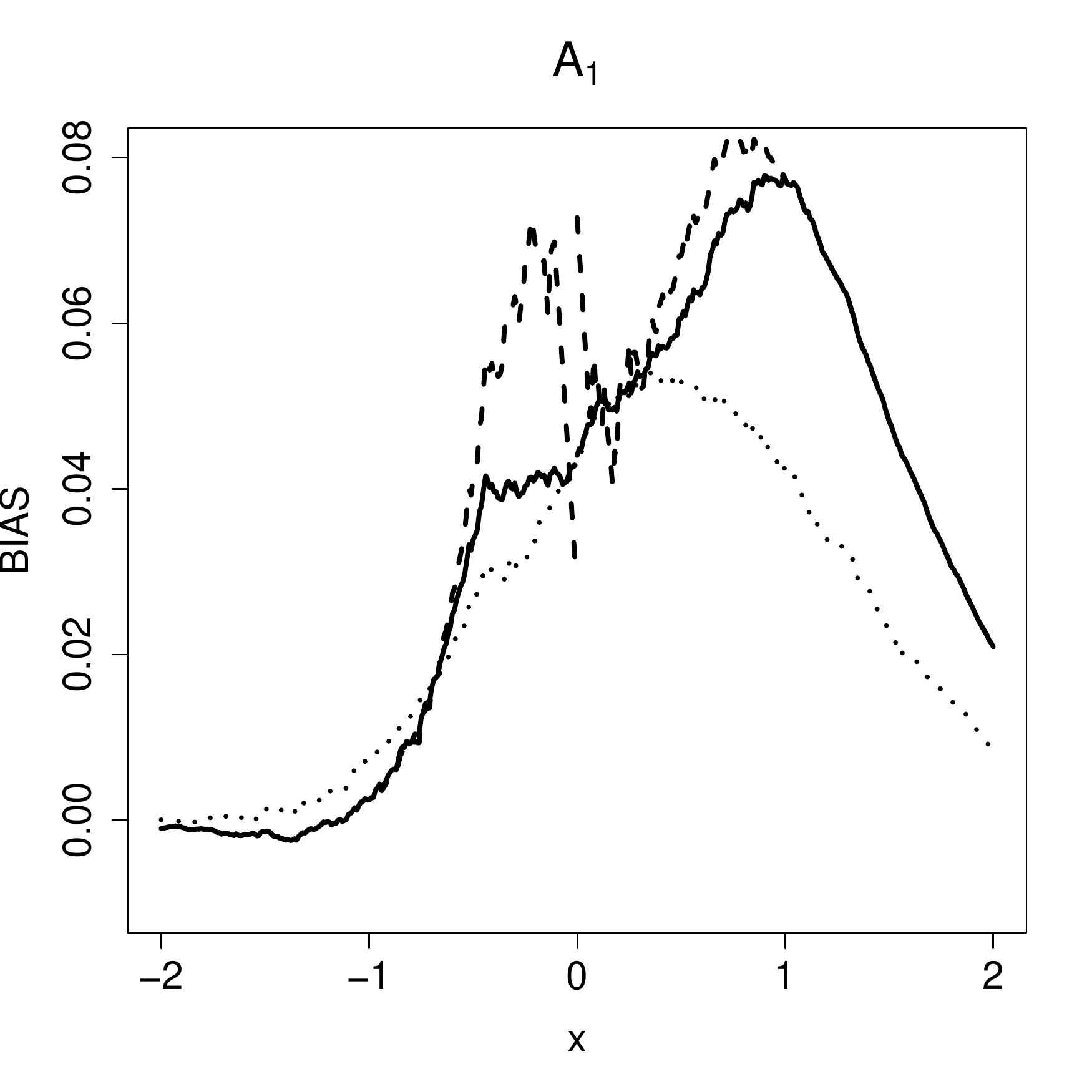}
&
\includegraphics[width=0.3\textwidth]{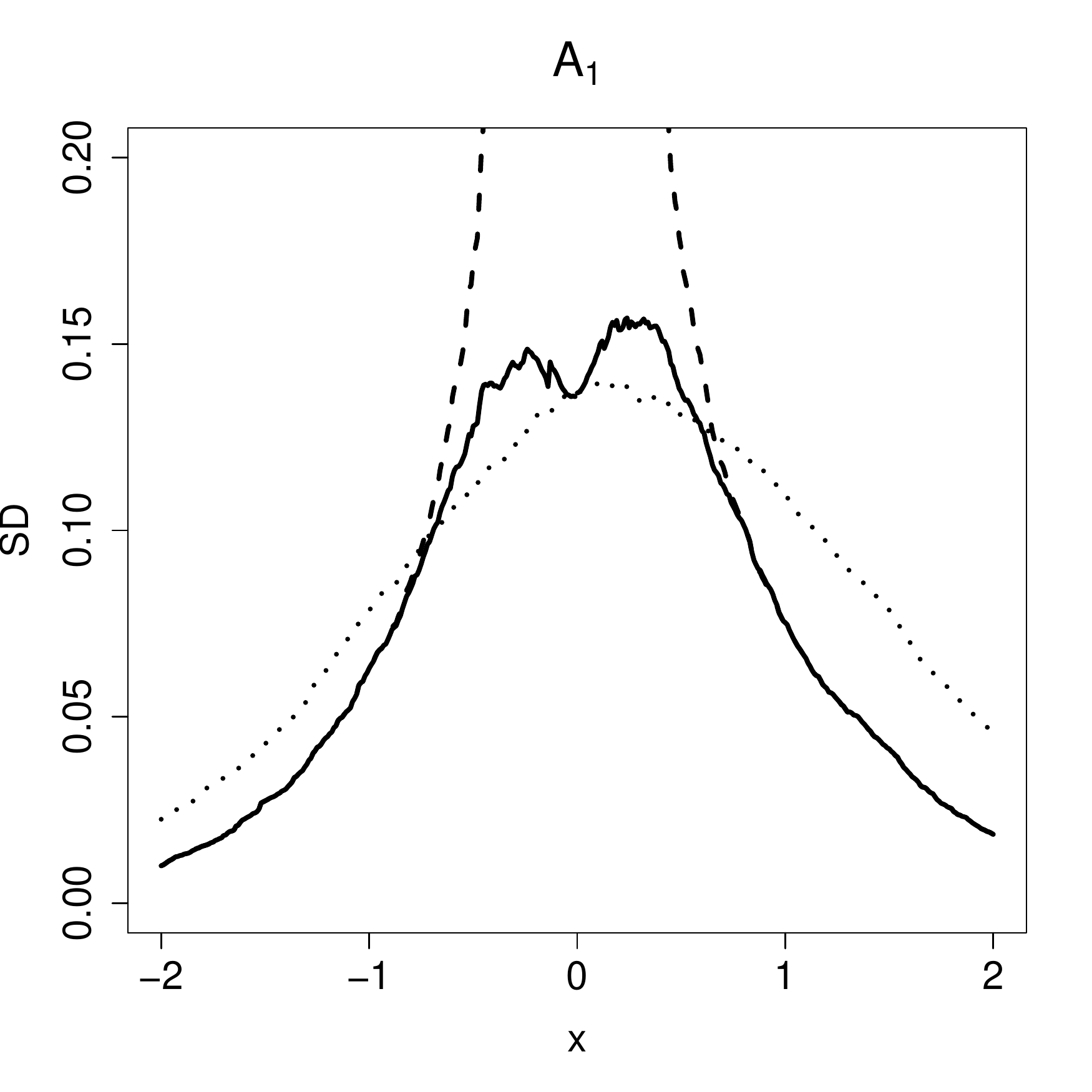}
&
\includegraphics[width=0.3\textwidth]{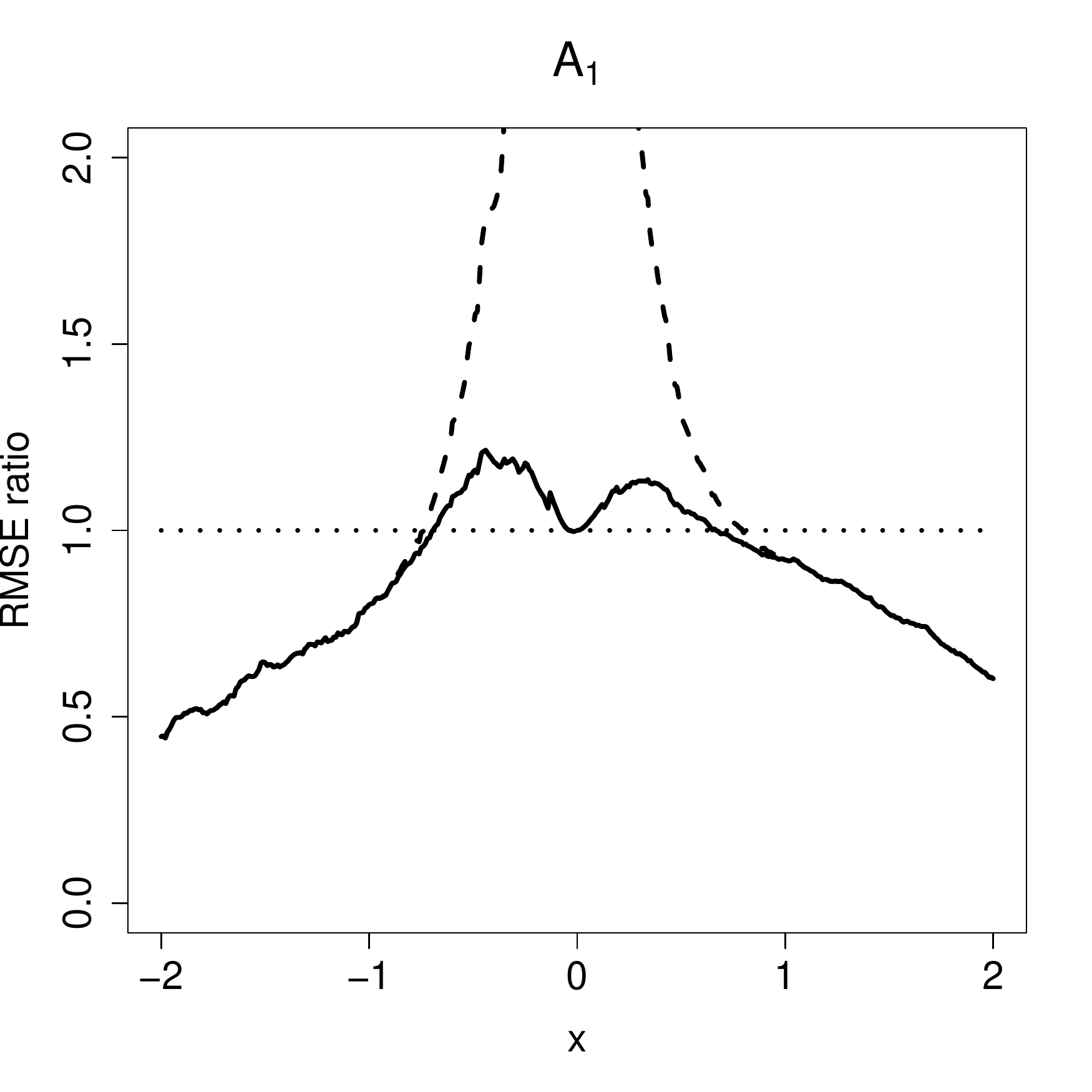}
\\
\includegraphics[width=0.3\textwidth]{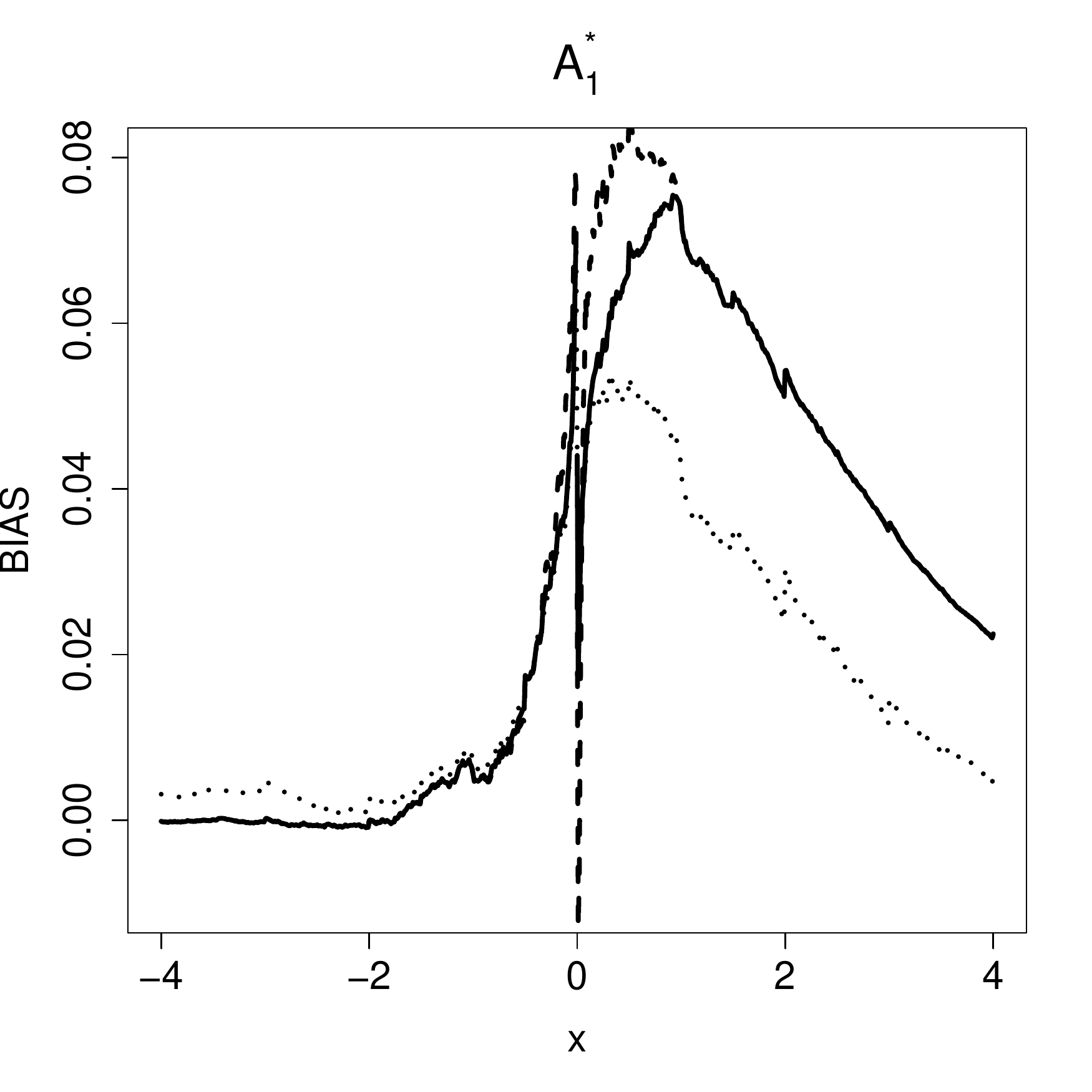}
&
\includegraphics[width=0.3\textwidth]{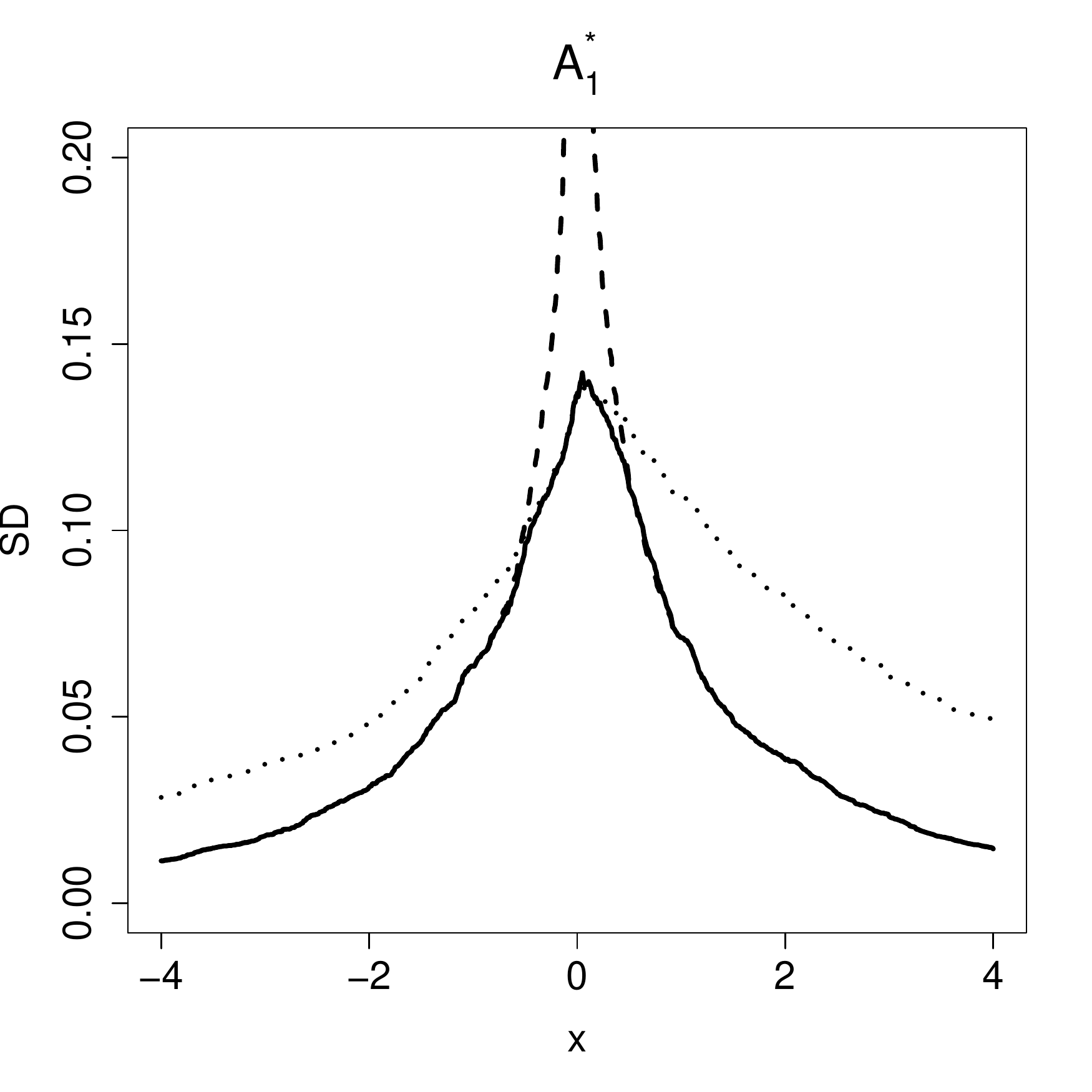}
&
\includegraphics[width=0.3\textwidth]{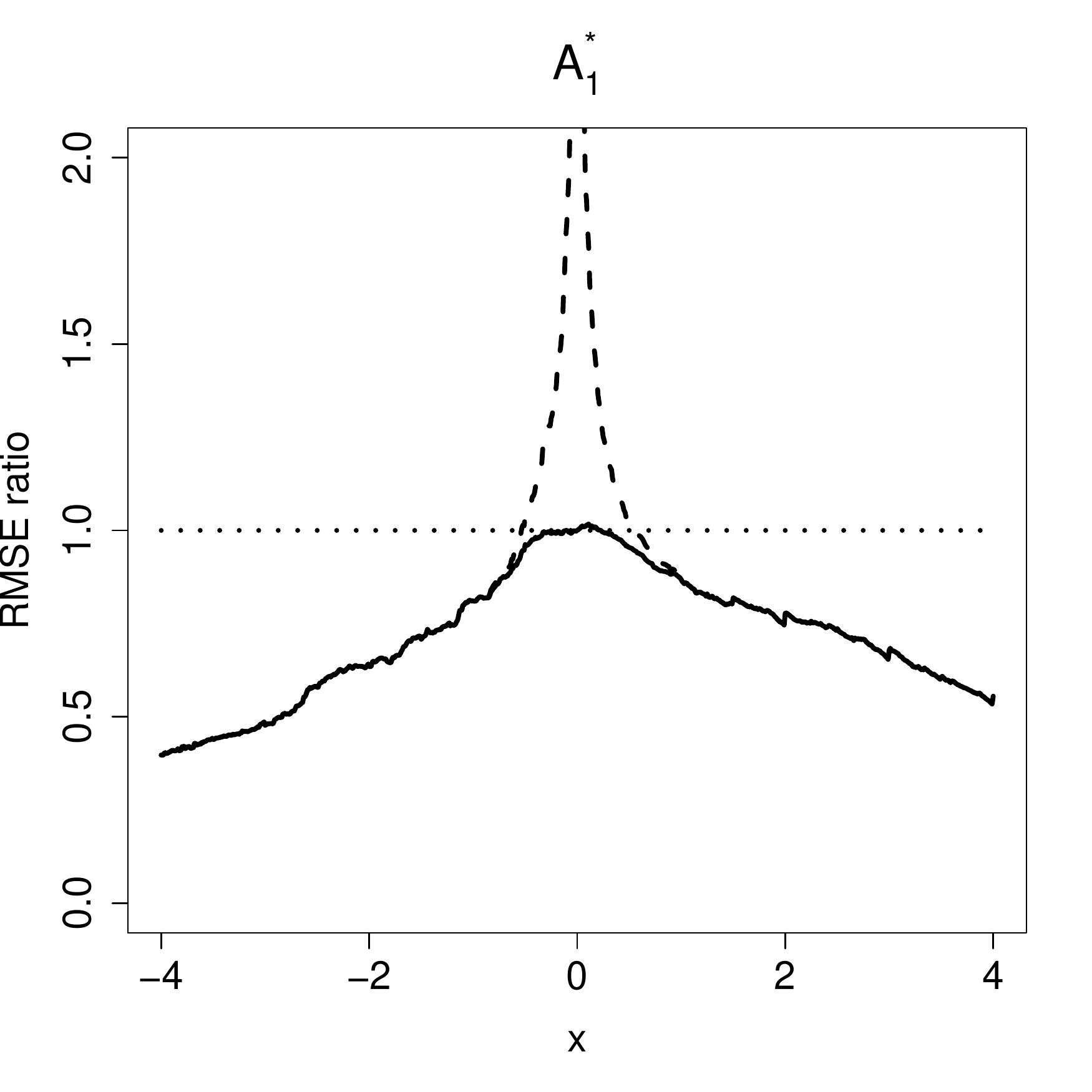}
\end{tabular}
\end{center}
\caption{\label{simu:SRE:a} \footnotesize
Simulation results for the stochastic recurrence equations. Top: Estimation of the cdf of $A_1$ using the plug-in estimator for $\alpha$. Bottom: Estimation of the cdf of $A_1^* = \sign(A_1) \abs{A_1}^\alpha$ with the signed rank transformation. Left: bias; middle: standard deviation; right: root mean squared error ratio with respect to the forward estimator. Solid line: mixture estimator; dashed line: backward estimator; dotted line: forward estimator.}
\end{figure}

Figure~\ref{simu:CMM:a} shows the results for the estimators of the cdf's of $A_1$ and $A_1^*$ in the copula Markov model. For this model, $\Bias(\hat{p})=0.001$, $\SD(\hat{p})=0.08$, and $\RMSE(\hat{p})=0.08$. In addition, for the simulations with estimated $\alpha$, we obtain $\Bias(\hat{\alpha})=0.077$, $\SD(\hat{\alpha})=0.407$, and $\RMSE(\hat{\alpha})=0.414$. In the top row, the Hill-type estimator \eqref{Hill} is used, whereas in the bottom row, we apply the signed rank transformation \eqref{emp_trans}. The plots on the left and in the middle show the bias and the standard deviation of the forward estimator as a dotted line, of the backward estimator as a dashed line and of the mixture estimator as a solid line. The plots on the right-hand side display the relative RMSE of the backward and the mixture estimator w.r.t.\ the forward estimator.

As expected, the backward estimator outperforms the forward estimator for $\abs{x}\ge 1$, whereas for arguments $x$ close to $0$ the performance of the backward estimator is poor and it is the forward estimator that has the lower RMSE. The mixture estimator performs much better than the backward estimator for small $\abs{x}$, while for large values of $\abs{x}$ its RMSE is similar to the one of the backward estimator. Indeed, in the approach using rank transforms, the RMSE of the mixture estimator is never much larger than that of the forward estimator, while the mixture estimator is almost twice as efficient as the forward estimator for $\abs{x} > 2$. Hence, it clearly outperforms both the forward and the backward estimator in this approach. In contrast, the mixture estimator performs worse than the forward estimator for small values of $\abs{x}$ if $\alpha$ is replaced with the Hill-type estimator.

Figure~\ref{simu:SRE:a} shows the analogous results for the solutions of the stochastic recurrence equation. We obtain $\Bias(\hat{p})=-0.209$, $\SD(\hat{p})=0.083$, $\RMSE(\hat{p})=0.225$, and $\Bias(\hat{\alpha})=0.173$, $\SD(\hat{\alpha})=0.526$, $\RMSE(\hat{\alpha})=0.553$. By and large, the relative performance of all estimators is similar to the one in the copula Markov model. The relative efficiency of the mixture estimator with respect to the forward estimator is a bit higher for small values of $\abs{x}$, while it is slightly worse for larger values. Overall, the absolute estimation errors of all estimators are about $20$ to $30\%$ larger.

\section{Case study}
\label{sec:applic}

The spectral tail process of a heavy tailed time series conveys important information on its serial extremal dependence. Such extremal dependence typically arises e.g.\ in financial time series which exhibit clustering of extremes. By estimating the joint distribution of $(\Theta_0,\Theta_1)$ and $(\Theta_0,\Theta_{-1})$, we gain insight into short-range dependence between extremes of consecutive observations, covering both lower and upper tails.

An interesting question is whether the distribution of a time series remains unaffected by reversing the direction of time. It is well understood how to test for such time reversibility regarding the bulk of the distribution \citep{beare2012,Chen_Ray_2000}. Clearly, if a time series does not look time reversible, it should not be modelled with processes having this property, such as, e.g., stationary Gaussian processes. Financial time series do not have this feature in general \citep{Chen_Kuan_2002}. This, however, does not imply that there can be no time reversibility at extreme levels. Formally, such extremal time reversibility would mean that $\law{\Theta_0,\Theta_{1},\ldots,\Theta_{t}} = \law{\Theta_0,\Theta_{-1},\ldots,\Theta_{-t}}$. In particular, it implies that $\law{\Theta_0,\Theta_{1}} = \law{\Theta_0,\Theta_{-1}}$ and, equivalently, $\law{A_1} = \law{A_{-1}}$ and $\law{B_1} = \law{B_{-1}}$, or $\law{A_1^*} = \law{A_{-1}^*}$ and $\law{B_1^*} = \law{B_{-1}^*}$. Comparison of the estimated distribution functions may be used to reject the extremal time reversibility hypothesis, and thus can serve as a diagnostic tool for model selection. 

Note that under Markov assumption $\law{\Theta_0,\Theta_{1},\ldots,\Theta_{t}} = \law{\Theta_0,\Theta_{-1},\ldots,\Theta_{-t}}$ is equivalent to $\law{\Theta_0,\Theta_{1}} = \law{\Theta_0,\Theta_{-1}}$, and hence inference about extremal time reversibility can be based only on the pairwise comparison, i.e., $\law{A_1^*} = \law{A_{-1}^*}$ and $\law{B_1^*} = \law{B_{-1}^*}$. However, Markovianity is rarely a realistic assumption  for financial time series.

We analyze daily log-returns of Google and UBS stock prices between 2005-01-03 and 2013-12-31 (taken from \url{www.google.com/finance}), leading to  $2279$ observations for Google and $2280$ for UBS. The thresholds are set at the $95\%$ quantiles, giving 114 extremes. For Google log-returns we obtain 53 positive extremes and 61 negative extremes, whereas for UBS it is 50 and 64 positive and negative extremes respectively. The estimated index of regular variation is equal to $2.88$ for Google and $2.51$ for UBS.

We discuss the results jointly for both stocks, as their extremal dependence structures exhibit similar patterns. The two data series are presented in Figure~\ref{app1}. The three rows show the daily closing prices (top), the daily log-returns $X_i$ (middle) and the pertaining rank transformed log-returns $X_{n,i}^*$ as in \eqref{emp_trans} (bottom).

\begin{figure}[htp]
\begin{center}
\includegraphics[scale=0.8]{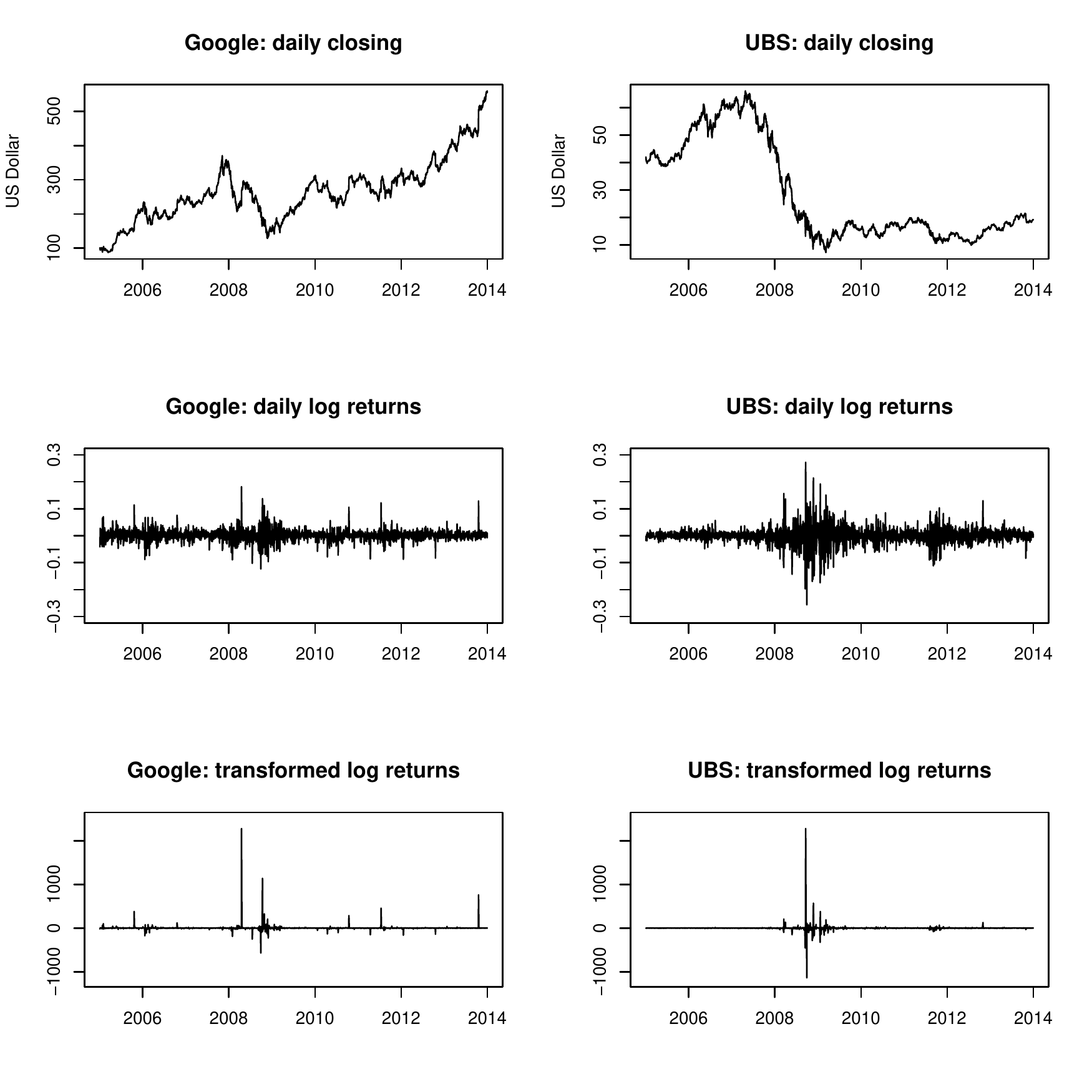}
\caption{\footnotesize{Google and UBS daily stock prices, daily log-returns and transformed log-returns.}\label{app1}}
\end{center}
\end{figure}
The distribution functions of $A_{-1}^*$ and $B_{-1}^*$ were estimated using the reversed time series. For $A_{1}^*$, $A_{-1}^*$, $B_{1}^*$, and $B_{-1}^*$, we applied a monotonized version of the mixture estimator \eqref{mixture_A1}, which otherwise is not necessarily monotone. The monotonized version is defined by the smallest increasing function larger than or equal to the mixture estimator, for $x\geq 0$, and the biggest increasing function smaller than or equal to the mixture estimator, for $x<0$. We display the estimated distribution functions of $A_{1}^*$, $A_{-1}^*$, $B_{1}^*$, and $B_{-1}^*$ for the rank transformed log-returns based on the monotonized mixture estimator in Figure~\ref{app6}.

Since the cdf's of $A_1^*$ and $A_{-1}^*$ as well as the cdf's of $B_1^*$ and $B_{-1}^*$ apparently differ on the negative real line, the extreme values of the log-returns exhibit no time reversibility. Moreover, most of the probability mass of $A_{1}^*$ and $B_{-1}^*$ is concentrated near the origin. This hints at asymptotic independence of $X_t^+$ and $X_{t+1}$, and of $X_t$ and $X^-_{t+1}$. In contrast, the laws of $A_{-1}^*$ and $B_1^*$ put considerable mass on the negative half-line. This indicates that with asymptotically non-negligible probability a large loss is succeeded by a large gain. Such an event can be interpreted as a correction to an overreaction by the stock market. Indeed, the existence of long--term and short--term overreaction behaviour in various stock markets is well documented \citep{bondt1985, Bowman1998, ali2010}.

\begin{figure}[htp]
\begin{center}
\begin{tabular}{cc}
\includegraphics[width=0.4\textwidth]{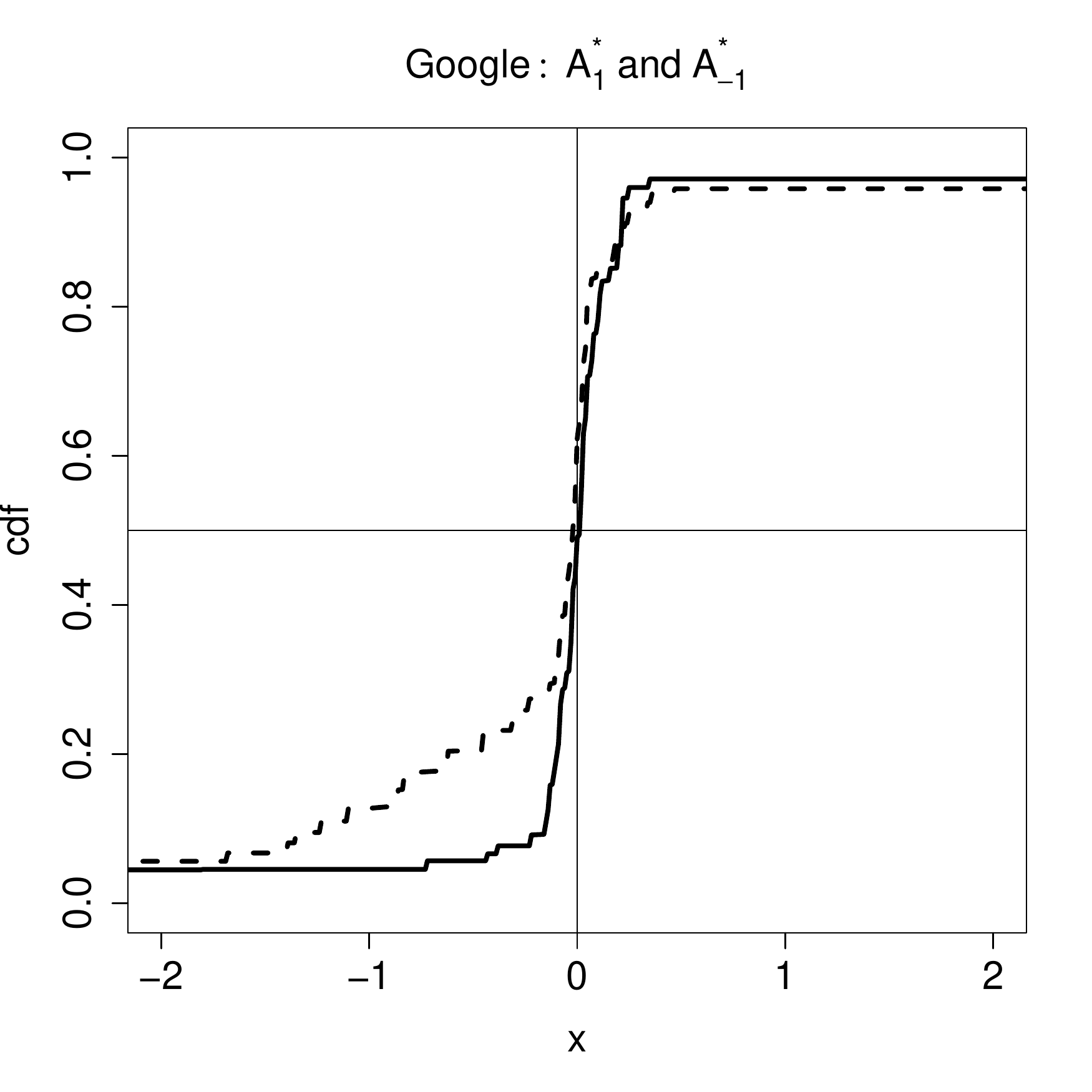}
&
\includegraphics[width=0.4\textwidth]{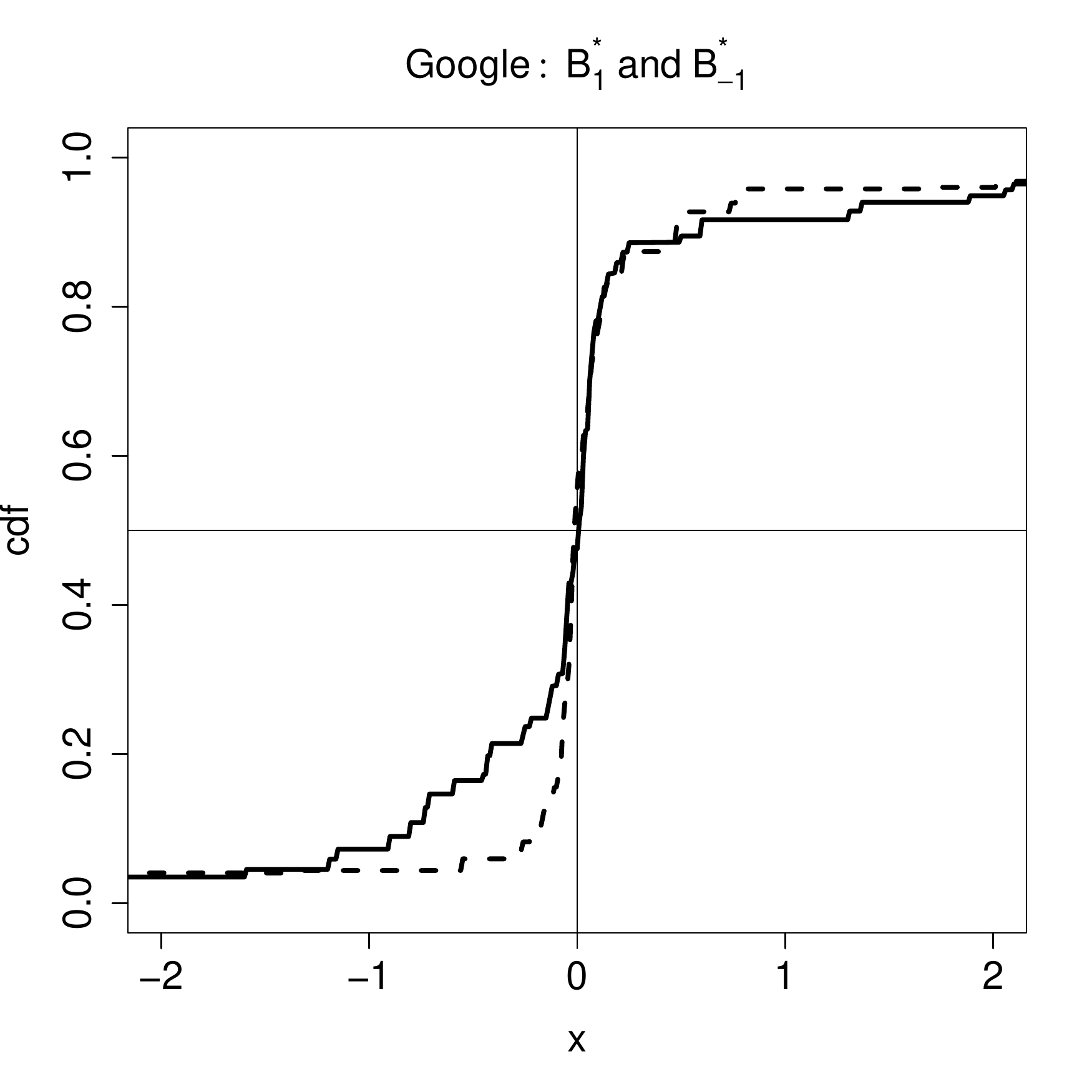}
\\
\includegraphics[width=0.4\textwidth]{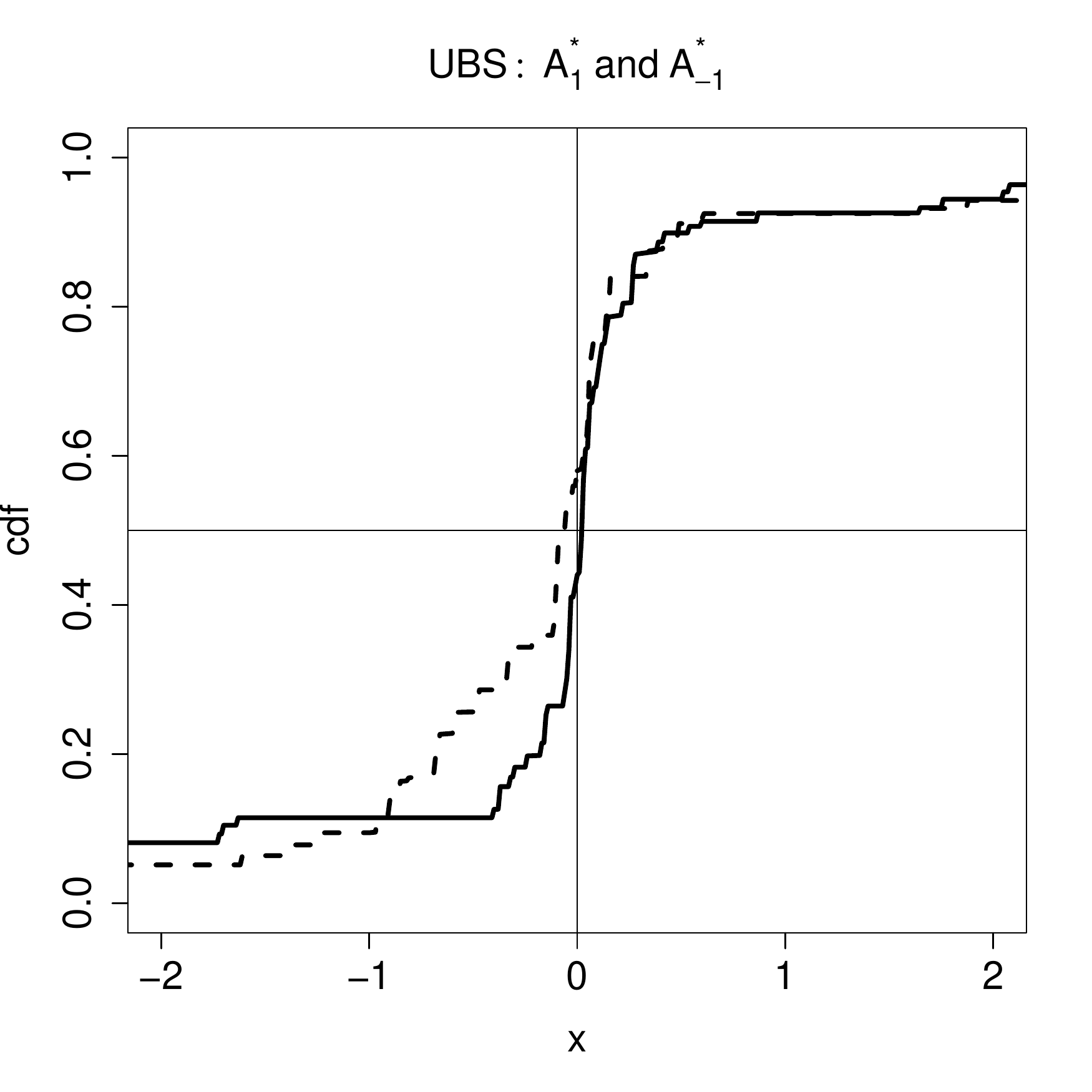}
&
\includegraphics[width=0.4\textwidth]{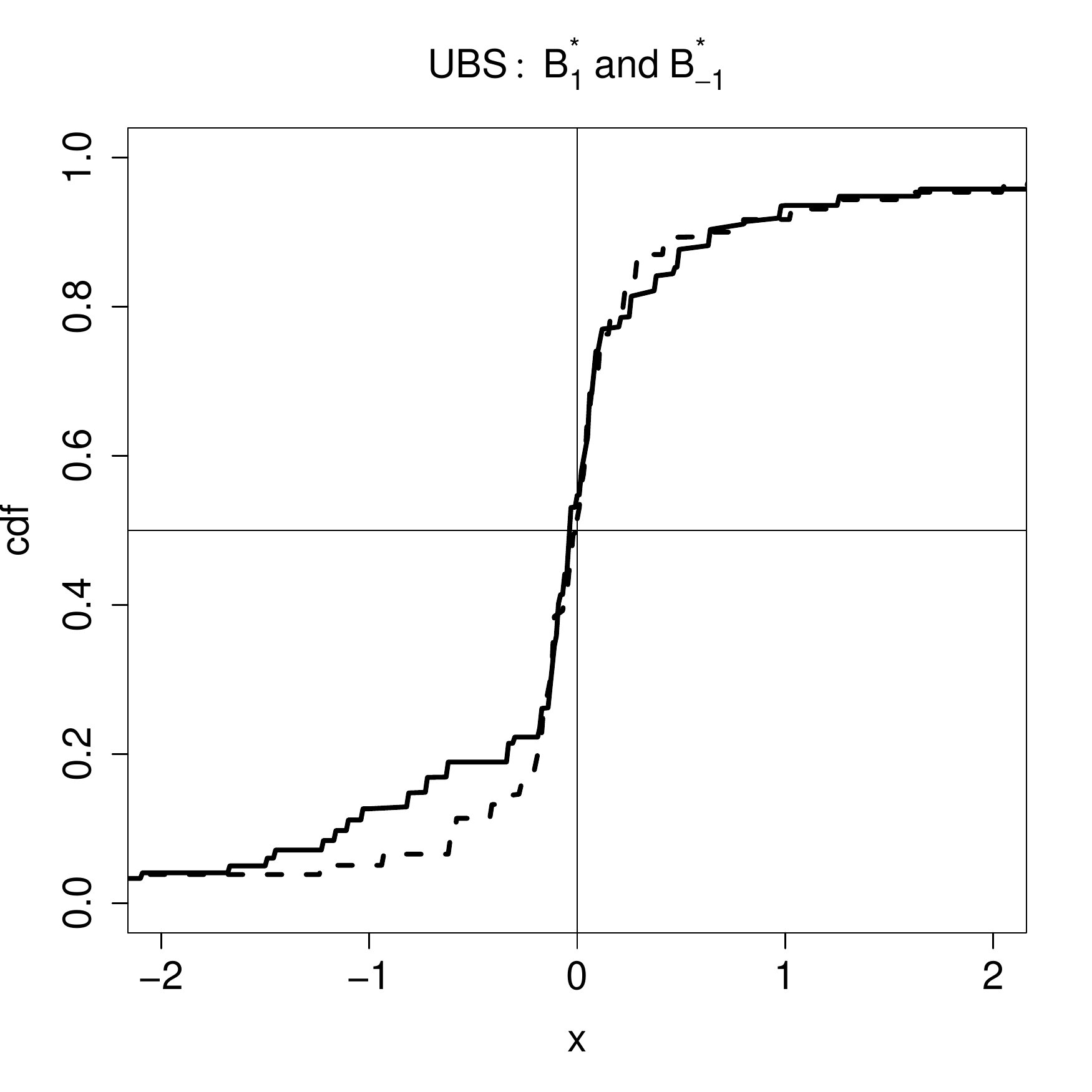}
\end{tabular}
\end{center}
\caption{\label{app6} \footnotesize
Estimated distribution functions of $A_1^*$ and $A_{-1}^*$ (left) and of $B_1^*$ and $B_{-1}^*$ (right) based on the monotonized mixture estimator for transformed log-returns on Google stock prices (top) and UBS stock prices (bottom). Solid lines: $A_1^*$ and $B_{1}^*$; dashed lines: $A_{-1}^*$ and $B_{-1}^*$.}
\end{figure}


\section{Appendix}
\label{appendix}

\subsection{Examples}
\label{sec:additional}

\begin{example}[The t-copula]\label{ex:tcopula}
In the context of the copula Markov model (Section~\ref{sec:examples:copula}), consider the bivariate t-copula defined by
\[
  C^t_{\nu,\rho}(u,v)
  = \int_{-\infty}^{t^{-1}_{\nu}(u)} \int_{-\infty}^{t^{-1}_{\nu}(v)}
  \frac{1}{2\pi (1 - \rho^2)^{1/2}}
  \left\lbrace
    1 + \frac{x^2-2\rho xy + y^2}{\nu \, (1-\rho^2)}
  \right\rbrace^{-(\nu+2)/2}
  \diff x \, \diff y,
\]
where $\rho\in(-1,1)$, $t_{\nu}$ is the cdf of the univariate t-distribution with $\nu>0$ degrees of freedom and $t^{-1}_{\nu}$ is the corresponding quantile function. Contrary to the Gaussian copula, the t-copula allows for an asymptotically non-negligible probability of joint extremes. The t-copula is exchangeable and radially symmetric. The partial derivative of $C^t_{\nu,\rho}(u,v)$ with respect to the first coordinate equals \citep[proof of Proposition~4]{Dematra2005}
\begin{equation}
\label{eq:cond:t:copula}
\frac{\partial}{\partial u} C^t_{\nu,\rho}(u,v)
=t_{\nu +1}\left(\left\lbrace \frac{t^{-1}_{\nu}\left(v\right)}{t^{-1}_{\nu}\left(u\right)}-\rho\right\rbrace\left(\frac{\nu +1}{1-\rho^2}\right)^{1/2}\left\lbrace 1+\frac{\nu}{\left(t^{-1}_{\nu}\left(u\right)\right)^2}\right\rbrace^{-1/2} \sign(t^{-1}_{\nu}\left(u\right))\right).
\end{equation}
The limits \eqref{eq:CA1pos} and \eqref{eq:CA1neg} can be calculated using symmetry of the t-distribution and the regular variation of $t_{\nu}$  with index $\nu$. Under the assumptions of Proposition~\ref{prop:copMarkov}, we obtain
\begin{equation}
\label{eq:A1:tcopula}
  \prob{ A_1 \le x } =
  \begin{cases}
    t_{\nu+1}\left(\left(x^{\alpha/\nu}-\rho\right)\left(\frac{\nu +1}{1-\rho^2}\right)^{1/2}\right) & \text{if $x \geq 0$,} \\[1em]
    t_{\nu+1}\left(\left(-(\frac{1-p}{p} \, \abs{x}^{-\alpha})^{-1/\nu}-\rho\right)\left(\frac{\nu +1}{1-\rho^2}\right)^{1/2}\right) & \text{if $x < 0$.}
  \end{cases}
\end{equation}
As the t-copula is exchangeable, we find $\law{A_1} = \law{A_{-1}}$ and $\law{B_1} = \law{B_{-1}}$. Moreover, by radial symmetry we have $\eta_{1,1} = 1 - \eta_{0,0}$ and $\eta_{1,0} = 1 - \eta_{0, 1}$, linking up the laws of $A_1$ and $B_1$ through \eqref{eq:A1:eta} and \eqref{eq:B1:eta}. In particular, if $p = 1/2$, then $\law{A_1} = \law{B_1}$.
\end{example}

\begin{example}[Extreme--value copulas]\label{ex:tEVcopula}
A bivariate copula  $C(u,v)$ is an extreme value copula if and only if it admits the representation
\[
  C(u,v)
  = \exp \left\{ D \left(\frac{\log v}{\log (uv)}\right) \log (uv)\right\},
  \qquad (u,v) \in (0,1]^2 \setminus \{(1,1)\}.
\]
The \emph{Pickands dependence function} $D : [0,1] \rightarrow [1/2,1]$ is convex and satisfies $\max(w, 1-w) \leq f(w) \leq 1$ for all $w \in [0,1]$. See~\citet{Gudendorf2010} for a survey of extreme-value copulas.

If $D$ is continuously differentiable with derivative $D'$, then, under the conditions of Proposition~\ref{prop:copMarkov},
\begin{equation*}
  \prob{ A_1 \le x } =
    \left\lbrace
      D\left(\frac{1}{x^{\alpha}+1}\right) -
      \frac{1}{x^{\alpha}+1} \, D'\left(\frac{1}{x^{\alpha}+1}\right)
    \right\rbrace
  \1(x \geq 0),
\end{equation*}
whereas $B_1$ and $B_{-1}$ are degenerate at $0$. In particular, $\prob{A_1 = 0} = 1 - D'(0)$. The law of $A_{-1}$ has the same form as the one of $A_1$ upon replacing $D$ by $w \mapsto D(1-w)$. If $D(w) = D(1-w)$ then $C$ is exchangeable and $\law{A_{-1}} = \law{A_1}$.

The following two parametric families are well known:
\begin{itemize}
\item
For the \emph{asymmetric logistic model} \citep{Tawn1988} with parameters $\theta\geq 1$ and $\psi_1,\psi_2\in (0,1]$, we have
\begin{align*}
  D(w)
  &= (1-\psi_1)w + (1-\psi_2)(1-w) +
  \{(\psi_1 w)^\theta+(\psi_2(1-w))^\theta\}^{1/\theta}, \\
  \prob{A_1 \le x}
  &= 1-\psi_2 +\psi_2\left\lbrace 1+\left(\psi_1 /(\psi_2 x^\alpha)\right)^{\theta}\right\rbrace^{(1-\theta)/\theta}, \qquad x \ge 0.
\end{align*}
The special case $\psi_1=\psi_2=1$ yields the Gumbel--Hougaard copula.
\item
For the \emph{asymmetric negative logistic model} \citep{Joe1990} with parameters $\theta>0$ and $\psi_1,\psi_2\in (0,1]$, we have
\begin{align*}
  D(w)
  &= 1-\{(\psi_1 w)^{-\theta}+(\psi_2(1-w))^{-\theta}\}^{-1/\theta}, \\
  \prob{A_1 \le x}
  &= 1-\psi_2\left\lbrace 1+\left(\psi_2 x^\alpha/\psi_1\right)^{\theta} \right\rbrace^{-(1+\theta)/\theta},
  \qquad x \ge 0.
\end{align*}
The special case $\psi_1 = \psi_2 = 1$ yields the Galambos copula.
\end{itemize}
\end{example}

\begin{example}[Stochastic recurrence equations] \label{exam:SRE}
We will show that stationary solutions to the stochastic recurrence equation \eqref{eq_SRE} with $(C_t,D_t)\in[0,\infty)^2$ satisfy the conditions (B) and (C) in Section~\ref{sec:CLT:known}. Asymptotic normality of the forward and backward estimators (Theorem~\ref{cor:asnorm}) follows if the cdf of $\law{A_1} = \law{C_1}$ is continuous on $[x_0,\infty)$.

We assume that the conditions in \citet{Kesten73} are fulfilled (see Section~\ref{sec:prelimin}). Then $(X_t)_{t\in\ZZ}$ is geometrically $\beta$-mixing, i.e., there exist constants $\eta\in(0,1)$ and $\tau>0$ such that $\beta_{n,k}\le \tau \eta^k$ \citep[Corollary 2.4.1]{doukhan95}. Therefore, condition (B) is satisfied with $l_n=2\log n/|\log\eta|$ and suitably chosen $r_n=o(\min\{(nv_n)^{1/2}, v_n^{-1}\})$, provided $v_n=o(1/\log n)$ and $(\log n)^2/n=o(v_n)$.

To establish condition (C), let  $\Pi_{i+1,j} := \prod_{k=i+1}^j C_k$ and $V_k:= \sum_{j=1}^k \Pi_{j+1,k} D_j$. Iterating~\eqref{eq_SRE} yields $X_k = V_k+\Pi_{1,k}X_0$. By independence of $(V_k, \Pi_{1,k})$ and $X_0$, one has
\begin{eqnarray*}
  \prob{ X_k>u_n\mid X_0>u_n}
  & \le & v_n^{-1} \prob{X_0>u_n, V_k>u_n/2} + v_n^{-1}\prob{X_0>u_n, \Pi_{1,k}X_0>u_n/2} \\
  & = & \prob{V_k>u_n/2} + v_n^{-1} \int_{u_n}^\infty \prob{\Pi_{1,k}> u_n/(2t)} \, \law{X_0}(\diff  t).
\end{eqnarray*}
There exists $\xi\in (0,\alpha)$ such that $\rho:=\texpec{C_1^\xi}<1$. Thus $\prob{\Pi_{1,k}>u_n/(2t)}\le \texpec{\Pi_{1,k}^\xi} (2t/u_n)^\xi = \rho^k  (2t/u_n)^\xi$, which in turn implies
\[
  v_n^{-1} \int_{u_n}^\infty \prob{\Pi_{1,k} >u_n/(2t)} \, \law{X_0}(\diff  t)
  \le \rho^k \expec{(2X_0/u_n)^\xi \mid X_0>u_n}
  \le 2^{\xi+1} \texpec{Y_0^\xi} \rho^k
\]
for all $k\in\NN$ and sufficiently large $n$. Because $\prob{V_k>u_n/2}\le \prob{X_k>u_n/2} \le 2^{1-\alpha} v_n$ for all $k\in\NN$ and sufficiently large $n$, one may conclude that, for some constant $c>0$,
\[
  \prob{X_k>u_n\mid X_0>u_n} \le c( v_n +  \rho^k) =: s_n(k) .
\]
Condition~(C) then follows from the fact that, as $n \to \infty$,
\[
  \sum_{k=1}^{r_n} s_n(k) = c r_n v_n + c \sum_{k=1}^{r_n} \rho^k \;\to\; c \sum_{k=1}^\infty \rho^k = \sum_{k=1}^\infty \lim_{n\to\infty} s_n(k) <\infty.
\]
\end{example}

\subsection{Proofs}
\label{sec:proofs}

\begin{proof}[Proof of Lemma~\ref{lem:standardization}]
Note that $\abs{X^*_t} \geq u$ if and only if $F_{\abs{X_0}}\left( \abs{X_t}\right)\geq 1-1 /u$ if and only if $\abs{X_t} \geq U(u) := F_{\abs{X_0}}^{\leftarrow}(1-1/u)$ with $F_{\abs{X_0}}^{\leftarrow}$ denoting the quantile function and $F_{\abs{X_0}}$ the distribution function of $\abs{X_t}$. Moreover, regular variation of $|X_0|$ implies regular variation of $U$ as well as $\prob{\abs{X_0} > u} / \prob{\abs{X_0} \ge u} \to 1$ and $u \, \bar{F}_{\abs{X_0}}(U(u))\to 1$ as $u\to\infty$. Thus, from weak convergence \eqref{eq:tailprocess} we may conclude
\begin{align*}
  \lefteqn{
    \law{ \left(\frac{X^*_i}{u}\right)_{s\le i\le t}\,\Big|\,\abs{X^*_0} \geq u }
  } \\
  &=
  \law{
    \left(
      \frac{X_i / \abs{X_i}}{u \, \bar{F}_{\abs{X_0}}( \abs{X_i} )}
      \cdot \1(X_i \ne 0)
    \right)_{s\le i\le t}
    \,\Big|\,
    \abs{ X_0} \geq U(u)
  } \\
  &=
  \law{
    \left(
      \frac{X_i /U(u)}{\abs{X_i /U(u)} } \cdot
      \1\left( \frac{X_i}{U(u)}\neq 0 \right)\cdot
      \frac{1}{u \, \bar{F}_{\abs{X_0}}(U(u))} \cdot
      \frac%
	{\bar{F}_{\abs{X_0}}(U(u))}%
	{\bar{F}_{\abs{X_0}}\left(U(u)\frac{ \abs{ X_i}}{U(u)}\right)}
    \right)_{s\le i\le t}
    \,\Bigg|\, \abs{X_0} \geq U( u)
  } \\
  &\dto
  \law{
    \left(
      \frac{Y_i}{\abs{Y_i}} \cdot
      \1(Y_i \neq 0) \cdot
      1 \cdot
      \frac{1}{\abs{ Y_i}^{-\alpha}}
    \right)_{s\le i\le t}
  },
  \qquad u\rightarrow\infty,
\end{align*}
where in the last step we applied the extended continuous mapping theorem 1.11.1 of \cite{vdVW96}.  Therefore, $(X_t^*)_{t \in \ZZ}$ has a tail process $(Y_t^*)_{t \in \ZZ}$ with $Y_t^*= \sign(Y_t)\abs{Y_t}^{\alpha}$ and is thus regular varying with index $\alpha^*=1$. Moreover, $\Theta_t^*= Y_t^*/|Y_0^*|= \sign(\Theta_t)\abs{\Theta_t}^{\alpha}$.
\end{proof}

\begin{proof}[Proof of Proposition~\ref{prop:copMarkov}.]
Let $g$ be the density function of $G$. For real $x$ and $u$,
\[
  \prob{ X_1/X_0 \le x \mid X_0 > u }
  =
  \int_u^\infty
    \prob{ X_1/X_0 \le x \mid X_0 = y } \,
    \frac{g(y)}{1-G(u)} \,
  \differ y.
\]
As a consequence, $\lim_{u \to \infty} \prob{ X_1/X_0 \le x \mid X_0 > u } = \lim_{y \to \infty} \prob{ X_1/X_0 \le x \mid X_0 =y}$, provided the latter limit exists.

We compute the conditional distribution of $X_1$ given $X_0 = y$. For $(x_0, x_1) \in \reals^2$, as $G(-\infty) = 0$ and $C(0, \, \cdot \, ) = 0$, we have
\begin{align*}
  \prob{ X_0 \le x_0, X_1 \le x_1 }
  &= C \bigl( G(x_0), G(x_1) \bigr) - C \bigl( G(-\infty), G(x_1) \bigr) \\
  &= \int_{-\infty}^{x_0} \dot{C}_1 \bigl( G(y), G(x_1) \bigr) \, g(y) \, \differ y \\
  &= \expec{ \1\{ X_0 \le x_0 \} \, \dot{C}_1 \bigl( G(X_0), G(x_1) \bigr) }.
\end{align*}
We find that a version of the conditional distribution of $X_1$ given $X_0 = y$ is given by
\begin{equation}
\label{eq:dotC1}
  \prob{ X_1 \le x_1 \mid X_0 = y } = \dot{C}_1 \bigl( G(y), \, G(x_1) \bigr),
  \qquad x_1 \in \reals.
\end{equation}

Assume $p > 0$. As $s \searrow 0$, the functions $[0, \infty) \to \reals : z \mapsto \dot{C}_1(1-s, 1-sz)$ are assumed to converge pointwise to a continuous limit. Since  these functions are monotone, the convergence holds locally uniformly, i.e., if $0 \le z(s) \to z$ as $s \searrow 0$, then $\lim_{s \searrow 0} \dot{C}_1(1 - s, 1 - s \, z(s)) = \lim_{s \searrow 0} \dot{C}_1(1 - s, 1 - s \, z) = \eta_{1,1}(z)$.

Moreover, the function $\bar{G} = 1 - G$ is regularly varying at infinity with index $-\alpha$, too. Indeed, as $u \to \infty$ we have $\prob{ X_0 > u } / \prob{ \abs{X_0} > u } = p > 0$, whereas the function $u \mapsto \prob{ \abs{X_0} > u }$ is regularly varying at infinity with index $-\alpha$.

For $x \in (0, \infty)$, we find
\begin{align*}
  \prob{ X_1/X_0 \le x \mid X_0 = y }
  &= \dot{C}_1 \bigl( G(y), \, G(xy) \bigr) \\
  &=
  \dot{C}_1
  \left(
    1 - \bar{G}(y), \, 1 - \bar{G}(y) \, \frac{\bar{G}(xy)}{\bar{G}(y)}
  \right) \\
  &\to \lim_{s \searrow 0} \dot{C}_1(1 - s, \, 1 - s \, x^{-\alpha})
  = \eta_{1,1}(x^{-\alpha}),
  \qquad y \to \infty.
\end{align*}
Similarly,  $\lim_{y \to \infty} G(-y)/\bar{G}(y) = (1-p)/p$ implies, for $x \in (-\infty, 0)$,
\begin{align*}
  \prob{ X_1/X_0 \le x \mid X_0 = y }
  &= \dot{C}_1 \bigl( G(y), \, G(-\abs{x}y) \bigr) \\
  &=
  \dot{C}_1
  \left(
    1 - \bar{G}(y), \,
    \bar{G}(y) \,
    \frac{G(-\abs{x}y)}{\bar{G}(\abs{x}y)} \,
    \frac{\bar{G}(\abs{x}y)}{\bar{G}(y)}
  \right) \\
  &\to
  \lim_{s \searrow 0}
  \dot{C}_1 \bigl( 1 - s, \, s \, [(1-p)/p] \, \abs{x}^{-\alpha} \bigr)
  = \eta_{1,0}([(1-p)/p] \, \abs{x}^{-\alpha}),
\end{align*}
as $y \to \infty$. We conclude that $\mathcal{L}(X_1/X_0 \mid X_0 = y)$ converges weakly, as $y \to \infty$, to the distribution $\mathcal{L}(A_1)$ given by~\eqref{eq:A1:eta}. By the argument at the beginning of the proof, the same then holds true for $\mathcal{L}(X_1/X_0 \mid X_0 > u)$ as $u \to \infty$.

The proof of \eqref{eq:B1:eta} is entirely similar.
\end{proof}

\begin{proof}[Proof of Lemma \ref{lem:tailsign}]
For real $x$, write $x^+ = \max(x, 0)$ and $x^- =  \max(-x, 0)$.
Recall the time-change formula, \eqref{eq:timechange}. Setting $s = 0$, $t = 1$, and $i = 1$ yields, for every measurable function $f : \reals^2 \to \reals$ such that $f(0, \, \cdot \,) = 0$ and for which the expectations below exist,
\begin{equation}
\label{eq:timechange:1}
  \expec{ f(\Theta_{-1}, \Theta_0) }
  = \expec{ f \left( \frac{\Theta_0}{\abs{\Theta_1}}, \frac{\Theta_1}{\abs{\Theta_1}} \right) \, \abs{\Theta_1}^\alpha \, \1( \Theta_1 \ne 0 ) }.
\end{equation}
Setting $s = -1$, $t = 0$ and $i = -1$ yields the same formula, but with the roles of $\Theta_{-1}$ and $\Theta_1$ interchanged.
By the time-change formula~\eqref{eq:timechange:1},
\begin{align*}
  \expec{ (\Theta_1^+)^\alpha }
  &= \prob{ \Theta_0 = + 1, \, \Theta_{-1} \ne 0 }, &
  \expec{ (\Theta_1^-)^\alpha }
  &= \prob{ \Theta_0 = - 1, \, \Theta_{-1} \ne 0 }.
\end{align*}
[The above equations even hold without conditions~(i) and~(ii).] Adding both identities yields, in view of~(i),
\[
  \prob{ \Theta_{-1} \ne 0 } = \expec{ \abs{\Theta_1}^\alpha } = 1
\]
and thus $\expec{ (\Theta_1^+)^\alpha } = \prob{ \Theta_0 = + 1 }=p$.

For $M := \Theta_1 / \Theta_0$, let $\mu_+ = \expec{ (M^+)^\alpha }$ and $\mu_- = \expec{ (M^-)^\alpha }$. Then
\[
  \mu_+ + \mu_- = \expec{ \abs{M}^\alpha } = \expec{ \abs{\Theta_1}^\alpha } = 1.
\]
By (ii), we have
\begin{align*}
  p
  = \expec{ (\Theta_1^+)^\alpha } 
  = \expec{ ((M \Theta_0)^+)^\alpha } 
  &= p \, \expec{ (M^+)^\alpha } + (1-p) \, \expec{ (M^-)^\alpha } \\
  &= p \, (1-\mu_-) + (1-p) \, \mu_-.
\end{align*}
After simplification, we find
\[
  (1 - 2p) \, \mu_- = 0,
\]
so that either $p = 1/2$ or $\mu_- = 0$, i.e., $\prob{ M < 0 } = 0$. This yields the first statement.

Second, we show that $\Theta_{-1}/\Theta_0$ is independent of $\Theta_0$, too. If $p \in \{0, 1\}$, then $\Theta_0$ is degenerate; without loss of generality, assume that $0 < p < 1$. We need to show that, for bounded measurable functions $g : \reals \to \reals$,
\[
  \expec{ g(\Theta_{-1}/\Theta_0) \mid \Theta_0 = +1 }
  =
  \expec{ g(\Theta_{-1}/\Theta_0) \mid \Theta_0 = -1 }.
\]
Equivalently, we need to show that, for such $g$,
\begin{equation}
\label{eq:indep:tocheck}
  (1-p) \, \expec{ g(\Theta_{-1}/\Theta_0) \, \1(\Theta_0 = +1) }
  =
  p \, \expec{ g(\Theta_{-1}/\Theta_0) \, \1(\Theta_0 = -1) }.
\end{equation}
The above formula clearly holds for constant functions $g$. Hence, it holds for a function $g$ as soon as it holds for the function $y \mapsto g(y) - g(0)$. Without loss of generality, we may therefore assume that $g(0) = 0$. But then, by the time-change formula \eqref{eq:timechange:1} applied to the two functions $f_\pm(\theta_0, \theta_1) = g(\theta_0/\theta_1) \, \1(\theta_1 = \pm 1)$ and using independence of $M = \Theta_1/\Theta_0$ and $\Theta_0$,
\begin{eqnarray*}
  \expec{ g(\Theta_{-1}/\Theta_0) \, \1(\Theta_0 = +1) }
  &=& \expec{ g(\Theta_0/\Theta_1) \, (\Theta_1^+)^\alpha } \\
  &=&
  \expec{ g(\Theta_0/\Theta_1) \, ((\Theta_1/\Theta_0)^+)^\alpha \, \1(\Theta_0 = +1) } \\
  && \mbox{} +
  \expec{ g(\Theta_0/\Theta_1) \, (-(\Theta_1/\Theta_0)^+)^\alpha \, \1(\Theta_0 = -1) }
  \\
  &=&
  p \, \expec{ g(1/M) \, (M^+)^\alpha) } +
  (1-p) \, \expec{ g(1/M) \, (M^-)^\alpha }, \\
  \expec{ g(\Theta_{-1}/\Theta_0) \, \1(\Theta_0 = -1) }
  &=&
  p \, \expec{ g(1/M) \, (M^-)^\alpha) } +
  (1-p) \, \expec{ g(1/M) \, (M^+)^\alpha }.
\end{eqnarray*}
Since either $p = 1/2$ or $\prob{M < 0} = 0$, equation~\eqref{eq:indep:tocheck} follows.
\end{proof}

\begin{proof}[Proof of Lemma~\ref{lem:tc:AB}.]
To show \eqref{eq:A1:tc:pos}, apply the time-change formula \eqref{eq:timechange} with $s=-1$, $t=0$, $i=-1$, and $f(y_{-1},y_0)=\1(y_0 / y_{-1} > x,\; y_{-1}=1)$, where $x \ge 0$. It follows that
\begin{align*}
  \prob{A_1>x}
  &= \expec{ \1\left(\frac{\Theta_1}{ \Theta_{0} } > x\right) \1\left(\Theta_{0}=1\right) } / \prob{ \Theta_0 = 1} \\
  &= \expec{ \1\left(\frac{\Theta_0}{ \Theta_{-1} } > x\right) \1\left(\Theta_{-1}>0\right)\vert \Theta_{-1}\vert^{\alpha} } / \prob{ \Theta_0 =1 } \\
  &= \expec{ A_{-1}^{\alpha} \, \1(1/A_{-1} > x)  }.
\end{align*}
To show \eqref{eq:A1:tc:neg}, apply the time-change formula to the function $f(y_{-1}, y_0) = \1(y_0 / y_{-1} \leq x, \; y_{-1}=1)$. The proofs of \eqref{eq:B1:tc:pos} and \eqref{eq:B1:tc:neg} are similar.
\end{proof}

Next we establish the asymptotic normality of the forward and backward estimator, using the limit theory for empirical processes of cluster functionals developed by \cite{drees2010limit}. to this end, for $x\ge 0$, define functions $\phi_1,\phi_{2,x},\phi_{3,x}:[0,\infty)^3\to [0,\infty)$ by
\begin{eqnarray*}
  \phi_1(y_{-1},y_0,y_1) & := & \1(y_0>1)\\
  \phi_{2,x}(y_{-1},y_0,y_1) & := & \1(y_1/y_0>x,y_0>1)\\
  \phi_{3,x}(y_{-1},y_0,y_1) & := & (y_{-1}/y_0)^\alpha \1(y_0/y_{-1}>x,y_{-1}>0,y_0>1).
\end{eqnarray*}
The forward and backward estimators of the cdf of $A_1$ can be written as
\begin{align*}
  \CDFfA{x} &= 1-\frac{\sum_{i=1}^n \phi_{2,x}(X_{n,i})}{\sum_{i=1}^n \phi_1(X_{n,i})}, &
  \CDFbA{x} &= 1-\frac{\sum_{i=1}^n \phi_{3,x}(X_{n,i})}{\sum_{i=1}^n \phi_1(X_{n,i})}
\end{align*}
with $X_{n,i}$ defined in \eqref{eq:Xnidef}.

Taking up the notation of \cite{drees2010limit}, we consider the empirical process $\tilde{Z}_n$ defined by
\[
  \tilde Z_n(\psi) := (nv_n)^{-1/2} \sum_{i=1}^n \bigl( \psi(X_{n,i})-\expec{\psi(X_{n,i})} \bigr),
\]
where $\psi$ is one of $\phi_1$, $\phi_{2,x}$ or $\phi_{3,x}$.

\begin{proposition}
\label{th:procconv}
  Suppose that $(X_t)_{t\in\ZZ}$ is a stationary, regularly varying time series and that the conditions (A($x_0$)), (B),  and (C) are fulfilled for some $x_0\ge 0$. Then, for all $y_0\in [x_0,\infty)\cap(0,\infty)$, the process $\big(\tilde Z_n(\phi_1),(\tilde Z_n(\phi_{2,x}))_{x\in [x_0,\infty)}, (\tilde Z_n(\phi_{3,y}))_{y\in [y_0,\infty)}\big)$ converges weakly to a centered Gaussian process $\tilde Z$ with covariance function given by
  \begin{multline}
          \cov{\tilde Z(\psi_1),\tilde Z(\psi_2)} \\
         \shoveleft =  \expec{\psi_1(Y_{-1},Y_0,Y_1) \, \psi_2(Y_{-1},Y_0,Y_1)} + \sum_{k=1}^\infty \Big(\expec{\psi_1(Y_{-1},Y_0,Y_1) \, \psi_2(Y_{k-1},Y_k,Y_{k+1})} \\
        +
        \expec{\psi_2(Y_{-1},Y_0,Y_1) \, \psi_1(Y_{k-1},Y_k,Y_{k+1})}\Big)
          \label{eq:tildeZcov}
  \end{multline}
  for all $\psi_1,\psi_2\in\{\phi_1, \phi_{2,x}, \phi_{3,y} : x\ge x_0, y\ge y_0\}$.
\end{proposition}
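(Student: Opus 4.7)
The plan is to deduce Proposition~\ref{th:procconv} from the general functional central limit theorem for empirical processes of cluster functionals established in \cite{drees2010limit} (Theorem~2.3 there). The task is therefore to verify the hypotheses of that theorem for the function class
\[
  \mathcal{F} := \{\phi_1\} \cup \{\phi_{2,x} : x \in [x_0,\infty)\} \cup \{\phi_{3,y} : y \in [y_0,\infty)\}
\]
and then to identify the covariance structure of the limit by a direct lag-by-lag computation. The mixing and blocking hypotheses needed in \cite{drees2010limit} are precisely condition~(B), so this part is immediate.

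First I would check that every function in $\mathcal{F}$ is uniformly bounded and vanishes when $y_0 \le 1$: indeed $\phi_1, \phi_{2,x} \in \{0,1\}$, and for $y \ge y_0 > 0$ the inequality $y_0 / y_{-1} > y$ inside $\phi_{3,y}$ forces $(y_{-1}/y_0)^\alpha < y^{-\alpha} \le y_0^{-\alpha}$, so $\phi_{3,y} \le y_0^{-\alpha}$. This boundedness together with the cluster-index condition~(C) yields the variance and Lindeberg-type bounds required for the finite-dimensional convergence, since the covariance of $\psi_1(X_{n,0})$ and $\psi_2(X_{n,k})$ is controlled by $\prob{X_k > u_n \mid X_0 > u_n}$ up to the constant $y_0^{-2\alpha}$.

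Second, I would establish the finite-dimensional convergence. Write
\[
  \cov{\tilde Z_n(\psi_1), \tilde Z_n(\psi_2)}
  = \frac{1}{v_n}\expec{\psi_1(X_{n,0})\psi_2(X_{n,0})}
  + \frac{1}{v_n}\sum_{k=1}^{n-1}\Big(1 - \tfrac{k}{n}\Big)\Big(\expec{\psi_1(X_{n,0})\psi_2(X_{n,k})} + \expec{\psi_2(X_{n,0})\psi_1(X_{n,k})}\Big)
\]
up to negligible boundary terms. By the definition of the tail process and the regular variation of $(X_t)_{t\in\ZZ}$, for each fixed $k$
\[
  v_n^{-1}\expec{\psi_1(X_{n,0})\psi_2(X_{n,k})}
  \longrightarrow \expec{\psi_1(Y_{-1},Y_0,Y_1)\psi_2(Y_{k-1},Y_k,Y_{k+1})},
\]
because $\psi_1, \psi_2$ are bounded and, when nonzero, force $|X_0|/u_n$ or $|X_k|/u_n$ to exceed one (so we may condition on $\abs{X_0} > u_n$ and apply \eqref{eq:tailprocess}). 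The continuity condition~(A($x_0$)) ensures the limit laws have no atoms at the thresholds, and hence that the integrands are a.s.\ continuous with respect to the law of $(Y_{-1},Y_0,Y_1)$. Condition~(C) supplies the summable dominating sequence $s_n(k)$, so dominated convergence lets us swap the outer sum with the limit and obtain the covariance formula \eqref{eq:tildeZcov}. The Cramér--Wold device combined with the standard large-block/small-block argument in \cite{drees2010limit}, driven by condition~(B), then gives Gaussian finite-dimensional limits.

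Third, and this will be the main obstacle, I need the asymptotic equicontinuity of the process, that is, a bracketing-type entropy bound for $\mathcal{F}$ under the rescaled covariance semimetric. The sub-family $\{\phi_{2,x}\}_{x \ge x_0}$ is a class of indicators of half-lines, hence VC of index two, and under (A($x_0$)) the map $x \mapsto \bar F^{(A_1)}(x)$ is continuous, so one can build $\varepsilon$-brackets by dividing $[x_0,\infty)$ along the quantiles of $F^{(A_1)}$. The harder family is $\{\phi_{3,y}\}_{y \ge y_0}$: here the weight $(y_{-1}/y_0)^\alpha$ is not itself monotone in $y$, but the set-indicator part $\1(y_0/y_{-1}>y,\, y_0>1)$ is, so $y \mapsto \phi_{3,y}$ is monotone decreasing. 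Combined with the uniform bound $y_0^{-\alpha}$, standard entropy estimates for monotone function classes give bracketing numbers of order $O(1/\varepsilon)$, which comfortably meet the entropy integrability assumption of \cite{drees2010limit}. The required smoothness of the limiting covariance in $(x,y)$ then follows again from (A($x_0$)) and the continuity of $\bar F^{(A_1)}$ on $[x_0,\infty)$, completing the verification of all hypotheses and yielding the claimed weak convergence in $\ell^\infty(\{\phi_1\} \cup [x_0,\infty) \cup [y_0,\infty))$.
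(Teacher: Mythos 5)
Your overall route coincides with the paper's: both deduce the result from the cluster-functional limit theory of \cite{drees2010limit}, establish the finite-dimensional convergence by letting $v_n^{-1}\expec{\psi_1(X_{n,0})\psi_2(X_{n,k})}$ converge lag by lag via the definition of the tail process, and use condition~(C) to justify interchanging the limit with the sum over $k$ (the paper invokes Pratt's lemma rather than plain dominated convergence, because the dominating sequence $s_n(k)$ itself depends on $n$; the interchange requirement built into (C) is exactly what makes this legitimate). Your verification of boundedness, of the role of (A($x_0$)) for a.s.\ continuity of the integrands, and of the variance bound via $\prob{X_k>u_n\mid X_0>u_n}$ matches the paper's verification of conditions (C1)--(C3).

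The gap is in the tightness step. The asymptotic equicontinuity criterion in \cite{drees2010limit} (Theorem~2.10, condition~(D3)) is not a classical bracketing-entropy condition on the marginal law; it requires controlling the second moment of the \emph{within-block sum} of increments, namely
\begin{equation*}
  \lim_{\delta\downarrow 0}\limsup_{n\to\infty}
  \sup_{\substack{y>x\ge x_0\\ F^{(A_1)}(y)-F^{(A_1)}(x)\le\delta}}
  \frac{1}{r_nv_n}\,
  \expec{\Bigl(\textstyle\sum_{i=1}^{r_n}\bigl(\phi_{2,y}(X_{n,i})-\phi_{2,x}(X_{n,i})\bigr)\Bigr)^2}=0 ,
\end{equation*}
and the analogue for $\phi_{3,y}$. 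Expanding the square produces cross terms of the form $\prob{X_1/X_0\in(x,y],\,X_{k+1}/X_k\in(x,y],\,X_k>u_n\mid X_0>u_n}$ summed over $k$ up to $r_n$, so the quantity to be made small involves the joint behaviour at all lags within a block, not just the marginal increment $F^{(A_1)}(y)-F^{(A_1)}(x)$. Your argument — VC/monotone bracketing of the index sets combined with ``smoothness of the limiting covariance'' — only controls the $k=0$ term and says nothing about the lagged contributions; the entropy bound $O(1/\varepsilon)$ for monotone classes is not by itself one of the hypotheses of the theorem being applied. The paper closes this by a truncation in the lag: for $k\le M$ each cross term is bounded by $\prob{X_1/X_0\in(x,y]\mid X_0>u_n}$, which converges \emph{uniformly} in $x<y$ to $F^{(A_1)}(y)-F^{(A_1)}(x)\le\delta$ by (A($x_0$)), while the tail $\sum_{k>M}$ is bounded by $\sum_{k>M}s_n(k)$ and made small via condition~(C). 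The monotonicity of $x\mapsto\phi_{2,x}$ and $y\mapsto\phi_{3,y}$ that you correctly identify is used, but only to verify the separate condition (D6); it does not substitute for (D3). Without this block-level modulus bound the chaining argument does not go through, so this step needs to be added.
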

\begin{proof}
  We argue similarly as  in the proof of Corollary~3.6 and Remark~3.7 of \cite{drees2010limit}: we first establish weak convergence of all finite dimensional distributions using Theorem~2.3 of that paper, and then the asymptotic equicontinuity of the processes $(\tilde Z_n(\phi_{2,x}))_{x \ge x_0}$ and $(\tilde Z_n(\phi_{3,y}))_{y \ge y_0}$ by applying Theorem~2.10, from which the assertion follows.

  First we verify that conditions (C1)--(C3) of \cite{drees2010limit} are fulfilled so that Theorem~2.3 on the convergence of finite-dimensional distributions applies. As in the proof of Corollary~3.6, (C1) can be derived from condition~(3.5) of that paper, which in turn is an easy consequence of condition~(C), because by the stationarity of $(X_t)_{t\in\ZZ}$, we have
  \begin{eqnarray*}
    \expec{\left(\sum_{i=1}^{r_n} \1(X_{n,i}\ne 0)\right)^2}
    & = & \sum_{i=1}^{r_n} \sum_{j=1}^{r_n} \prob{X_i>u_n,X_j>u_n} \\
    & \le & 2 r_nv_n \sum_{k=0}^{r_n-1} \Big(1-\frac{k}{r_n}\Big) \prob{X_k>u_n\mid X_0>u_n} \\
    & \le & 2r_nv_n \sum_{k=0}^{r_n-1} s_n(k)=O(r_nv_n), \qquad n \to \infty.
  \end{eqnarray*}
  Since all functions $\phi_1$, $\phi_{2,x}$ and $\phi_{3,y}$ are bounded and since $r_n=o((nv_n)^{1/2})$, condition~(C2) is obviously fulfilled.

  For the convergence of all finite-dimensional marginal distributions it remains to establish condition (C3) of \cite{drees2010limit}, i.e.,
  \begin{equation}  \label{eq:covconv}
    \frac 1{r_nv_n} \cov{\sum_{i=1}^{r_n} \psi_1(X_{n,i}),\sum_{j=1}^{r_n} \psi_2(X_{n,j})}
    \;\to\; \cov{\tilde Z(\psi_1),\tilde Z(\psi_2)},
    \qquad n \to \infty,
  \end{equation}
  for all $\psi_1,\psi_2\in\{\phi_1, \phi_{2,x}, \phi_{3,y} : x \ge x_0, y\ge y_0\}$. Similarly as above, by the stationarity of $(X_t)_{t\in\ZZ}$, the left-hand side equals
  \begin{multline*}
    \frac 1{r_n v_n}
    \expec{\sum_{i=1}^{r_n} \sum_{j=1}^{r_n} \psi_1(X_{n,i}) \, \psi_2(X_{n,j})}
    + O(r_nv_n)\\
    \shoveleft = \expec{ \psi_1(X_{n,0}) \, \psi_2(X_{n,0})\mid X_0>u_n}\\
    +
     \sum_{k=1}^{r_n-1} \Bigl(1-\frac{k}{r_n}\Bigr)
    \Bigl(
      \expec{ \psi_1(X_{n,0}) \, \psi_2(X_{n,k}) \mid X_0 > u_n} +
      \expec{ \psi_1(X_{n,k}) \, \psi_2(X_{n,0}) \mid X_0 > u_n}
    \Bigl).
  \end{multline*}
 By assumption (A($x_0$)), all functions under consideration are a.s.\ continuous and bounded. The definition of the tail process then yields
 \[
  \expec{ \psi_1(X_{n,0}) \, \psi_2(X_{n,k})\mid X_0>u_n}\;\to\; \expec{\psi_1(Y_{-1},Y_0,Y_1) \, \psi_2(Y_{k-1},Y_k,Y_{k+1})},
 \qquad n \to \infty,
 \]
 for all $k\ge 0$ and all $\psi_1,\psi_2\in\{\phi_1,\phi_{2,x}, \phi_{3,y} : x\ge x_0, y\ge y_0\}$. Thus \eqref{eq:covconv} follows by Pratt's lemma \citep{pratt60} and condition~(C), since
 \begin{eqnarray*}
  \Big(1-\frac{k}{r_n}\Big) \expec{ \psi_1(X_{n,0}) \, \psi_2(X_{n,k})\mid X_0>u_n} &\le & \max(1,\tilde x_0^{-2\alpha}) \prob{X_k>u_n\mid X_0>u_n}\\
  &\le & \max(1,\tilde x_0^{-2\alpha}) \, s_n(k).
  \end{eqnarray*}

 In the second step, the asymptotic equicontinuity of $(\tilde Z_n(\phi_{2,x}))_{x\in [x_0,\infty)}$ and $(\tilde Z_n(\phi_{3,y}))_{y\in [y_0,\infty)}$ follows from Theorem 2.10 of \cite{drees2010limit} if their conditions (D1), (D2'), (D3), (D5) and (D6) are verified. Note that (D1) is obvious, that (D5) is an immediate consequence of the separability of the processes, and that  (D2') follows from $r_n=o((nv_n)^{1/2})$ and the boundedness of all functionals $\phi_{2,x}$ and $\phi_{3,y}$. Moreover, because the maps $x\mapsto \phi_{2,x}(y_{-1},y_0,y_1)$ and $x\mapsto \phi_{3,x}(y_{-1},y_0,y_1)$ are decreasing, condition (D6) can be concluded in the same way as in the case $d=1$ of Example~3.8 in \cite{drees2010limit}.

 It remains to establish the continuity condition (D3) for the semi-norm generated by the cdf $F^{(A_1)}$, i.e.,
 \begin{equation}
  \label{eq:contphi2}
   \lim_{\delta\downarrow 0} \limsup_{n\to\infty}
    \sup_{\substack{y>x\ge x_0 \\ F^{(A_1)}(y)-F^{(A_1)}( x)\le\delta}}
    \frac{1}{r_n v_n}
    \expec{
      \left(
	\sum_{i=1}^{r_n} \big(\phi_{2,y}(X_{n,i})-\phi_{2, x}(X_{n,i})\big)
      \right)^2
    }
   = 0
 \end{equation}
 and an analogous condition for $\phi_{3,y}$. By the usual stationarity argument, the expectation on the left-hand side can be bounded by a multiple of
 \[
    r_nv_n \sum_{k=0}^{r_n-1}
    \prob {X_1/ X_0\in(x,y], \, X_{k+1}/ X_k \in (x,y], \, X_k > u_n \mid X_0>u_n}.
  \]
 Now, for all fixed $M>0$, as $n \to \infty$,
 \begin{multline}
  \sum_{k=0}^M \prob {\frac{X_1}{X_0}\in(x,y],\frac{X_{k+1}}{X_k}\in(x,y], X_k>u_n\;\Big|\; X_0>u_n} \le  (M+1) \prob{\frac{X_1}{X_0}\in(x,y] \;\Big|\; X_0>u_n}
  \\
    \to  (M+1) \big(F^{(A_1)}(y)-F^{(A_1)}(x)\big)\le (M+1)\delta,
     \label{eq:bound1}
 \end{multline}
 uniformly for all $y> x\ge x_0$ by condition~(A($x_0$)). Moreover, condition~(C) implies
 \begin{multline}
  \sum_{k=M+1}^{r_n} \prob{\frac{X_1}{X_0}\in(x,y],\frac{X_{k+1}}{X_k}\in(x,y], X_k>u_n\;\Big|\; X_0>u_n}
  \\
   \le  \sum_{k=M+1}^{r_n} s_n(k)
   \;\to \; \sum_{k=M+1}^{\infty} \lim_{n\to\infty} s_n(k) <\infty.
    \label{eq:bound2}
  \end{multline}
  By choosing $M$ sufficiently large, the right-hand side of \eqref{eq:bound2} can be made arbitrarily small. Given such $M$, by choosing $\delta$ small, the right-hand side of \eqref{eq:bound1} can be made arbitrarily small too. Equation~\eqref{eq:contphi2} follows. The convergence statement~\eqref{eq:contphi2} involving the functions $(\phi_{3,y})_{y\ge y_0}$ can be concluded in a similar way.
\end{proof}

\begin{proof}[Proof of Theorem \ref{cor:asnorm}]
  For notational simplicity, we write $\bar F$ for $\bar F^{(A_1)}=1-F^{(A_1)}$ and use the notations $\FfA{x}:=1-\CDFfA{x}$ and $\FbA{y}:=1-\CDFbA{y}$ for the estimators of the survival function.
  Check that
  \begin{eqnarray*}
   \lefteqn{
    (nv_n)^{1/2}
    \left(
      \FfA{x} - \prob{X_1/X_0>x\mid X_0>u_n}
    \right)
    }\\
   & = & (nv_n)^{1/2}
  \left(
    \frac{\sum_{i=1}^n \phi_{2,x}(X_{n,i})}{\sum_{i=1}^n \phi_1(X_{n,i})}
    - \frac{\prob{X_1/X_0>x, X_0>u_n} }{v_n}
  \right) \\
   & = & (nv_n)^{1/2}
  \left(
    \frac{n \expec{ \phi_{2,x}(X_{n,0})}+(nv_n)^{1/2}\tilde Z_n(\phi_{2,x}) }{nv_n +(nv_n)^{1/2}\tilde Z_n( \phi_1)}
    -
    \frac{n \expec{ \phi_{2,x}(X_{n,0})}}{nv_n}
  \right) \\
   & = & \frac{\tilde Z_n(\phi_{2,x})- \prob{X_1/X_0>x\mid X_0>u_n}\tilde Z_n( \phi_1)}{1+(nv_n)^{-1/2} \tilde Z_n( \phi_1)},
  \end{eqnarray*}
  and likewise
  \begin{multline*}
   (nv_n)^{1/2}
  \left(
    \FbA{y}  - \expec{(X_{-1}/X_0)^\alpha \, \1(X_0/X_{-1}>y)\mid X_0>u_n}
  \right)
  \\
   =
  \frac{\tilde Z_n(\phi_{3,y})-  \expec{(X_{-1}/X_0)^\alpha \, \1(X_0/X_{-1}>y)\mid X_0>u_n}\tilde Z_n( \phi_1)}{1+(nv_n)^{-1/2} \tilde Z_n( \phi_1)}.
  \end{multline*}
  Hence, in view of \eqref{eq:bias1} and  \eqref{eq:bias2}, Proposition \ref{th:procconv} implies \eqref{eq:estconv} with
  $Z^{(f,A_1)}(x) =  \bar F(x)\tilde Z(\phi_1)-\tilde Z(\phi_{2,x})$ and $Z^{(b,A_1)}(y) =  \bar F(y)\tilde Z(\phi_1)-\tilde Z(\phi_{3,y})$.

  The covariance structure of the limiting process follows by direct calculations. Nonnegativity of $X_0$ implies $Y_0 > 1$ a.s. Moreover, as $(\Theta_t)_{t \in \ZZ}$ is a Markov spectral tail chain, the random variables $Y_k/Y_{k-1}=\Theta_k/\Theta_{k-1}$, $k\in\ZZ$, are independent; for $k \in \NN$, their common survival function is $\bar F$. From Proposition~\ref{th:procconv},  we obtain
  \begin{eqnarray*}
    \var(\tilde Z(\phi_1)) & = & 1+ 2\sum_{k=1}^\infty \prob{Y_k>1}\\
    \cov{\tilde Z(\phi_1),\tilde Z(\phi_{2,x})} & = & \prob{\Theta_1>x} + \sum_{k=1}^\infty \big( \prob{\Theta_{k+1}/\Theta_k>x,Y_k>1}+ \prob{\Theta_1>x,Y_k>1}\big)\\
    & = & \bar F(x)\sum_{k=0}^\infty \prob{Y_k>1}+ \sum_{k=1}^\infty\prob{\Theta_1>x,Y_k>1}, \\
    \cov{\tilde Z(\phi_{2,x}),\tilde Z(\phi_{2,y})}
  & = &
      \prob{\Theta_1>\max(x,y)} + \sum_{k=1}^\infty
    \big(\prob{\Theta_1>x,\Theta_{k+1}/\Theta_k>y,Y_k>1} \\
  & & \hspace*{4cm} \mbox{} +\prob{\Theta_1>y,\Theta_{k+1}/\Theta_k>x,Y_k>1}\big)\\
    & = & \bar F(\max(x,y)) + \sum_{k=1}^\infty \big(\prob{\Theta_1>x,Y_k>1}\bar F(y) \\
  & & \hspace*{4cm} \mbox{} + \prob{\Theta_1>y,Y_k>1}\bar F(x)\big).
  \end{eqnarray*}
  Similarly,
  \begin{align*}
    \cov{\tilde Z(\phi_1),\tilde Z(\phi_{3,x})}
    & =
    \bar F(x) \sum_{k=0}^\infty \prob{Y_k>1} + \sum_{k=1}^\infty \expec{(\Theta_{k-1}/\Theta_k)^\alpha \1(\Theta_k/\Theta_{k-1}>x,Y_k>1)}\\
     \cov{\tilde Z(\phi_{3,x}),\tilde Z(\phi_{3,y})}
    & =
     \bar F(\max(x,y)) + \sum_{k=1}^\infty \Big(\bar F(x) \expec{(\Theta_{k-1}/\Theta_k)^\alpha \1(\Theta_k/\Theta_{k-1}>y,Y_k>1)}\\
      &  \hspace*{3cm} \mbox{} + \bar F(y) \expec{(\Theta_{k-1}/\Theta_k)^\alpha \1(\Theta_k/\Theta_{k-1}>x,Y_k>1)} \Big)\\
    \cov{\tilde Z(\phi_{2,x}),\tilde Z(\phi_{3,y})}
    & =
     \bar F(x)\bar F(y)\sum_{k=0}^\infty \prob{Y_k>1} \\
     &  \hspace*{1cm} \mbox{} +
      \sum_{k=1}^\infty  \expec{(\Theta_{k-1}/\Theta_k)^\alpha \1(\Theta_1>x,\Theta_k/\Theta_{k-1}>y,Y_k>1)}.
  \end{align*}
  The asymptotic covariance functions of the forward and the backward estimators can then be derived as follows:
   \begin{eqnarray*}
    \lefteqn{\cov{Z^{(f,A_1)}(x) ,Z^{(b,A_1)}(y)}}\\
     & = & \cov{\tilde Z(\phi_{2,x}),\tilde Z(\phi_{3,y})}
    - \bar F(x) \cov{\tilde Z(\phi_1),\tilde Z(\phi_{3,y})}
    - \bar F(y) \cov{\tilde Z(\phi_1),\tilde Z(\phi_{2,x})} \\
    & & \mbox{} + \bar F(x) \, \bar F(y) \var(\tilde Z(\phi_1))\\
    & = &  \bar F(x) \, \bar F(y) \sum_{k=1}^\infty \prob{Y_k>1} - \bar F(x)
     \sum_{k=1}^\infty \expec{(\Theta_{k-1}/\Theta_k)^\alpha \, \1(\Theta_k/\Theta_{k-1}>y,Y_k>1)} \\
     & & \mbox{} - \bar F(y)   \sum_{k=1}^\infty \prob{\Theta_1>x,Y_k>1} + \sum_{k=1}^\infty
    \expec{
      (\Theta_{k-1}/\Theta_k)^\alpha \,
      \1(\Theta_1>x,\Theta_k/\Theta_{k-1}>y,Y_k>1)
    }.
    \end{eqnarray*}
    The other covariances can be calculated in a similar way.
  \end{proof}

\begin{proof}[Proof of Lemma~\ref{lemma:alphahatconv}]
  By similar arguments as used in the proof of Proposition~\ref{th:procconv}, one can show that under the present conditions the conclusion of Proposition~\ref{th:procconv} remain valid if the family of functions is extended to $\{\phi_1, \phi_{2,x}, \phi_{3,y}, \psi : x\ge x_0, y\ge y_0\}$. Hence the assertion follows from
  \begin{eqnarray*}
    \hat\alpha_n -\alpha_n & = & \frac{nv_n+(nv_n)^{1/2} \tilde Z_n(\phi_1)}{n \texpec{\tilde\psi(X_0/u_n) \, \1(X_0>u_n)}+(nv_n)^{1/2} \tilde Z_n(\psi)+R_n}-\alpha_n\\
    & = & \alpha_n \frac{1+(nv_n)^{-1/2} \tilde Z_n(\phi_1)}{1+ \alpha_n ((nv_n)^{-1/2} \tilde Z_n(\psi)+(nv_n)^{-1}R_n)}-\alpha_n\\
    & = & \alpha_n  (nv_n)^{-1/2} \frac{\tilde Z_n(\phi_1)-\alpha_n \, \{ Z_n(\psi)+(nv_n)^{-1/2}R_n\}}{1+ \alpha_n \, \{(nv_n)^{-1/2} \tilde Z_n(\psi)+(nv_n)^{-1}R_n\}}\\
    & = & (nv_n)^{-1/2} \bigl(\alpha \tilde Z_n(\phi_1)-\alpha^2 Z_n(\psi)+o_P(1)\bigr) \bigl(1+o_P(1)\bigr), \qquad n \to \infty.
  \end{eqnarray*}
  In the last step we have used stochastic boundedness of $\tilde Z_n(\psi)$ and $\tilde Z_n(\phi_1)$, which follows from their weak convergence, and the assumption that $R_n=o_P((nv_n)^{1/2})$ as $n \to \infty$.
\end{proof}

\begin{proof}[Proof of Theorem~\ref{cor:backwardestconv}]
  A Taylor expansion of the function $t\mapsto z^t$ yields
  \[
    z^{\hat\alpha_n}-z^\alpha=z^\alpha\log(z) \, (\hat\alpha_n-\alpha)+\frac 12 z^{\alpha+\lambda(\hat\alpha_n-\alpha)} (\log z)^2(\hat\alpha_n-\alpha)^2
  \]
  for some (random) $\lambda=\lambda_{z,\alpha}\in(0,1)$. Because $z^{\tilde\alpha}(\log z)^2$ is bounded for all $\tilde\alpha$ in a neighborhood of $\alpha$ and all $z\le 1/y_0$, it follows that on the event $\{X_i/X_{i-1}\ge y_0\}$, we have
  \[ \bigg|\Big(\frac{X_{i-1}}{X_i}\Big)^{\hat\alpha_n}
- \Big(\frac{X_{i-1}}{X_i}\Big)^\alpha - (nv_n)^{-1/2}\Big(\frac{X_{i-1}}{X_i}\Big)^\alpha\log\Big(\frac{X_{i-1}}{X_i}\Big) \big(\alpha\tilde Z_n(\phi_1)-\alpha^2 \tilde Z_n(\psi)\big)\bigg|\le C(\hat\alpha_n-\alpha)^2
  \]
  for some constant $C$ depending only on $\alpha$ and $y_0$ (but not on $i$ or $n$).
Hence, by Lemma~\ref{lemma:alphahatconv}, as $n \to \infty$,
  \begin{eqnarray*}
    \lefteqn{(nv_n)^{-1/2}\sum_{i=1}^n\bigg(\Big(\frac{X_{i-1}}{X_i}\Big)^{\hat\alpha_n}
- \Big(\frac{X_{i-1}}{X_i}\Big)^\alpha\bigg) \, \1\Big(\frac{X_{i-1}}{X_i}>y,X_i>u_n\Big)}\\
  & = & \frac{\alpha\tilde Z_n(\phi_1)-\alpha^2 \tilde Z_n(\psi)}{nv_n} \sum_{i=1}^n \Big(\frac{X_{i-1}}{X_i}\Big)^\alpha \log\Big(\frac{X_{i-1}}{X_i}\Big) \, \1\Big(\frac{X_i}{X_{i-1}}>y,X_i>u_n\Big)\\
  & & \hspace*{3cm} \mbox{} + o_P\Big( (nv_n)^{-1}\sum_{i=1}^n
   \1(X_i>u_n)\Big) \\
  & = &  \frac{\alpha\tilde Z_n(\phi_1)-\alpha^2 \tilde Z_n(\psi)}{nv_n}  \sum_{i=1}^n \Big(\frac{X_{i-1}}{X_i}\Big)^\alpha \log\Big(\frac{X_{i-1}}{X_i}\Big) \, \1\Big(\frac{X_i}{X_{i-1}}>y,X_i>u_n\Big) +o_P(1).
  \end{eqnarray*}
  The last equality follows from the weak convergence of $\tilde Z_n(\phi_1)$.

  As in the proof of Proposition~\ref{th:procconv}, one may establish weak convergence of
  \begin{multline*}
    \Biggl(
    (nv_n)^{-1/2}
    \sum_{i=1}^n
    \Big(\frac{X_{i-1}}{X_i}\Big)^\alpha
    \log \Big(\frac{X_{i-1}}{X_i}\Bigr) \,
    \1\Big( \frac{X_{i-1}}{X_i}>y,X_i>u_n\Big)\\
    -
    \expec{
      \Big(\frac{X_{i-1}}{X_i}\Big)^\alpha
      \log \Big( \frac{X_{i-1}}{X_i} \Big) \,
      \1\Big( \frac{X_{i-1}}{X_i}>y,X_i>u_n\Big)
    }
    \Biggr)_{y\ge y_0}
  \end{multline*}
  to a centered Gaussian process. In particular, as $n \to \infty$,
  \begin{eqnarray*}
    \lefteqn{(nv_n)^{-1} \sum_{i=1}^n \Big(\frac{X_{i-1}}{X_i}\Big)^\alpha \log\Big(\frac{X_{i-1}}{X_i}\Big) \, \1\Big( \frac{X_{i-1}}{X_i}>y,X_i>u_n\Big)}\\
    & = & \expec{ \Big(\frac{X_{-1}}{X_0}\Big)^\alpha \log  \Big(\frac{X_{-1}}{X_0}\Big) \, \1\Big( \frac{X_0}{X_{-1}}>y\Big) \, \Big| \, X_0>u_n}+o_P(1)\\
    & \dto & \expec{ \Theta_{-1}^\alpha \log( \Theta_{-1}) \, \1( 1/ \Theta_{-1}>y)}=:\ell_y,
  \end{eqnarray*}
  uniformly for $y\ge y_0$. By the time-change formula \eqref{eq:timechange} with $i=-1$, $s=t=0$ and $f(x)=-\log(x) \, \1_{(y,\infty)}(x)$, one has
  $\ell_y=-\expec{\log(\Theta_1) \, \1(\Theta_1>y)}$.

  It follows that
  \begin{multline*}
    (nv_n)^{-1/2}\sum_{i=1}^n \left( \Big(\frac{X_{i-1}}{X_i}\Big)^{\hat\alpha_n}
 \1\Big( \frac{X_i}{X_{i-1}}>y,X_i>u_n\Big)
  - \expec{\Big(\frac{X_{i-1}}{X_i}\Big)^\alpha
 \1\Big( \frac{X_i}{X_{i-1}}>y,X_i>u_n\Big)}\right)\\
   =  \tilde Z_n(\phi_{3,y})+\ell_y \, \big(\alpha Z_n(\phi_1) -\alpha^2\tilde Z_n(\tilde\psi)\big)+o_P(1), \qquad n \to \infty.
  \end{multline*}
  Proceed as in the proof of Proposition~\ref{th:procconv} to arrive at the assertion.
\end{proof}

\section*{Acknowledgments}

H.~Drees was supported by the ``Deutsche Forschungsgemeinschaft'', project JA2160/1.
J.~Segers was supported by contract ``Projet d'Act\-ions de Re\-cher\-che Concert\'ees'' No.\ 12/17-045 of the ``Communaut\'e fran\c{c}aise de Belgique'' and by IAP research network Grant P7/06 of the Belgian government (Belgian Science Policy). M.~Warcho\l{} was funded by a ``mandat d'aspirant'' of the ``Fonds de la Recherche Scientifique'' (FNRS).
 We thank two anonymous referees whose comments helped to improve the presentation of the results.

\small
\bibliographystyle{chicago}
\bibliography{mybib}

\begin{thebibliography}{}

\bibitem[\protect\citeauthoryear{Basrak, Davis, and Mikosch}{Basrak
  et~al.}{2002}]{basrak:davis:mikosch:2002}
Basrak, B., R.~A. Davis, and T.~Mikosch (2002).
\newblock Regular variation of {GARCH} processes.
\newblock {\em Stochastic Processes and their Applications\/}~{\em 99\/}(1), 95
  -- 115.

\bibitem[\protect\citeauthoryear{Basrak and Segers}{Basrak and
  Segers}{2009}]{basrak:segers:2009}
Basrak, B. and J.~Segers (2009).
\newblock Regularly varying multivariate time series.
\newblock {\em Stochastic Processes and Their Applications\/}~{\em 119\/}(4),
  1055--1080.

\bibitem[\protect\citeauthoryear{Beare and Seo}{Beare and
  Seo}{2014}]{beare2012}
Beare, B.~K. and J.~Seo (2014, 4).
\newblock Time irreversible copula-based markov models.
\newblock {\em Econometric Theory\/}~{\em FirstView}, 1--38.

\bibitem[\protect\citeauthoryear{Bortot and Coles}{Bortot and
  Coles}{2000}]{bortot2000}
Bortot, P. and S.~Coles (2000).
\newblock A sufficiency property arising from the characterization of extremes
  of markov chains.
\newblock {\em Bernoulli\/}~{\em 6\/}(1), 183--190.

\bibitem[\protect\citeauthoryear{Bowman and Iverson}{Bowman and
  Iverson}{1998}]{Bowman1998}
Bowman, R.~G. and D.~Iverson (1998).
\newblock Short-run overreaction in the {N}ew {Z}ealand stock market.
\newblock {\em Pacific-Basin Finance Journal\/}~{\em 6\/}(5), 475--491.

\bibitem[\protect\citeauthoryear{Chen and Fan}{Chen and Fan}{2006}]{Chen2006}
Chen, X. and Y.~Fan (2006).
\newblock Estimation of copula-based semiparametric time series models.
\newblock {\em Journal of Econometrics\/}~{\em 130\/}(2), 307--335.

\bibitem[\protect\citeauthoryear{Chen, Wu, and Yi}{Chen
  et~al.}{2009}]{Chen2009}
Chen, X., W.~B. Wu, and Y.~Yi (2009).
\newblock Efficient estimation of copula-based semiparametric markov models.
\newblock {\em The Annals of Statistics\/}~{\em 37\/}(6B), 4214--4253.

\bibitem[\protect\citeauthoryear{Chen, Chou, and Kuan}{Chen
  et~al.}{2000}]{Chen_Ray_2000}
Chen, Y.-T., R.~Y. Chou, and C.-M. Kuan (2000).
\newblock Testing time reversibility without moment restrictions.
\newblock {\em Journal of Econometrics\/}~{\em 95\/}(1), 199--218.

\bibitem[\protect\citeauthoryear{Chen and Kuan}{Chen and
  Kuan}{2002}]{Chen_Kuan_2002}
Chen, Y.-T. and C.-M. Kuan (2002).
\newblock Time irreversibility and {EGARCH} effects in {US} stock index
  returns.
\newblock {\em Journal of Applied Econometrics\/}~{\em 17\/}(5), 565--578.

\bibitem[\protect\citeauthoryear{Davis and Mikosch}{Davis and
  Mikosch}{2009}]{davis:mikosch:2009}
Davis, R. and T.~Mikosch (2009).
\newblock The extremogram: a correlogram for extreme events.
\newblock {\em Bernoulli\/}~{\em 15\/}(4), 977--1009.

\bibitem[\protect\citeauthoryear{Davis, Mikosch, and Zhao}{Davis
  et~al.}{2013}]{Davis:Mikosch:Zhao:2013}
Davis, R.~A., T.~Mikosch, and Y.~Zhao (2013).
\newblock Measures of serial extremal dependence and their estimation.
\newblock {\em Stochastic Processes and their Applications\/}~{\em 123\/}(7),
  2575--2602.

\bibitem[\protect\citeauthoryear{De~Bondt and Thaler}{De~Bondt and
  Thaler}{1985}]{bondt1985}
De~Bondt, W. F.~M. and R.~Thaler (1985).
\newblock Does the stock market overreact?
\newblock {\em The Journal of Finance\/}~{\em 40\/}(3), 793--805.

\bibitem[\protect\citeauthoryear{Dekkers, Einmahl, and de~Haan}{Dekkers
  et~al.}{1989}]{dekketal89}
Dekkers, A., J.~Einmahl, and L.~de~Haan (1989).
\newblock A moment estimator for the index of an extreme-value distribution.
\newblock {\em Annals of Statistics\/}~{\em 17\/}(4), 1833--1855.

\bibitem[\protect\citeauthoryear{Demarta and McNeil}{Demarta and
  McNeil}{2005}]{Dematra2005}
Demarta, S. and A.~J. McNeil (2005).
\newblock The t copula and related copulas.
\newblock {\em International Statistical Review\/}~{\em 73\/}(1), 111--129.

\bibitem[\protect\citeauthoryear{Doukhan}{Doukhan}{1995}]{doukhan95}
Doukhan, P. (1995).
\newblock {\em Mixing. Properties and Examples}.
\newblock New York: Springer.

\bibitem[\protect\citeauthoryear{Drees}{Drees}{1998a}]{drees98b}
Drees, H. (1998a).
\newblock A general class of estimators of the extreme value index.
\newblock {\em Journal of Statistical Planning and Inference\/}~{\em 66\/}(1),
  95--112.

\bibitem[\protect\citeauthoryear{Drees}{Drees}{1998b}]{drees98a}
Drees, H. (1998b).
\newblock On smooth statistical tail functionals.
\newblock {\em Scandinavian Journal of Statistics\/}~{\em 25\/}(1), 187--210.

\bibitem[\protect\citeauthoryear{Drees and Rootz{\'e}n}{Drees and
  Rootz{\'e}n}{2010}]{drees2010limit}
Drees, H. and H.~Rootz{\'e}n (2010).
\newblock Limit theorems for empirical processes of cluster functionals.
\newblock {\em The Annals of Statistics\/}~{\em 38\/}(4), 2145--2186.

\bibitem[\protect\citeauthoryear{Goldie}{Goldie}{1991}]{goldie1991}
Goldie, C.~M. (1991).
\newblock Implicit renewal theory and tails of solutions of random equations.
\newblock {\em The Annals of Applied Probability\/}~{\em 1\/}(1), 126--166.

\bibitem[\protect\citeauthoryear{Gudendorf and Segers}{Gudendorf and
  Segers}{2010}]{Gudendorf2010}
Gudendorf, G. and J.~Segers (2010).
\newblock Extreme-value copulas.
\newblock In {\em Copula Theory and Its Applications}, Lecture Notes in
  Statistics. Springer Berlin Heidelberg.

\bibitem[\protect\citeauthoryear{Jan{\ss}en and Drees}{Jan{\ss}en and
  Drees}{2013}]{janssendrees13}
Jan{\ss}en, A. and H.~Drees (2013).
\newblock A stochastic volatility model with flexible extremal dependence
  structure.
\newblock Available at http://arxiv.org/abs/1310.4621.

\bibitem[\protect\citeauthoryear{Jan{\ss}en and Segers}{Jan{\ss}en and
  Segers}{2014}]{segersjanssen}
Jan{\ss}en, A. and J.~Segers (2014).
\newblock Markov tail chains.
\newblock {\em Advances in Applied Probability, forthcoming\/}.

\bibitem[\protect\citeauthoryear{Joe}{Joe}{1990}]{Joe1990}
Joe, H. (1990).
\newblock Families of min-stable multivariate exponential and multivariate
  extreme value distributions.
\newblock {\em Statistics \& Probability Letters\/}~{\em 9\/}(1), 75--81.

\bibitem[\protect\citeauthoryear{Kesten}{Kesten}{1973}]{Kesten73}
Kesten, H. (1973).
\newblock Random difference equations and renewal theory for products of random
  matrices.
\newblock {\em Acta Mathematica\/}~{\em 131\/}(1), 207--248.

\bibitem[\protect\citeauthoryear{Kulik and Soulier}{Kulik and
  Soulier}{2013}]{kuliksoulier13}
Kulik, R. and P.~Soulier (2013).
\newblock Heavy tailed time series with extremal independence.
\newblock Available at http://arxiv.org/abs/1307.1501.

\bibitem[\protect\citeauthoryear{Larsson and Resnick}{Larsson and
  Resnick}{2012}]{Resnic2012}
Larsson, M. and S.~I. Resnick (2012).
\newblock Extremal dependence measure and extremogram: the regularly varying
  case.
\newblock {\em Extremes\/}~{\em 15\/}(2), 231--256.

\bibitem[\protect\citeauthoryear{Leadbetter}{Leadbetter}{1983}]{Leadbetter1983}
Leadbetter, M. (1983).
\newblock Extremes and local dependence in stationary sequences.
\newblock {\em Zeitschrift für Wahrscheinlichkeitstheorie und Verwandte
  Gebiete\/}~{\em 65\/}(2), 291--306.

\bibitem[\protect\citeauthoryear{Ledford and Tawn}{Ledford and
  Tawn}{1996}]{ledfordtawn96}
Ledford, A. and J.~Tawn (1996).
\newblock Statistics for near independence in multivariate extreme values.
\newblock {\em Biometrika\/}~{\em 83\/}(1), 169--187.

\bibitem[\protect\citeauthoryear{Ledford and Tawn}{Ledford and
  Tawn}{2003}]{ledfordtawn03}
Ledford, A. and J.~Tawn (2003).
\newblock Diagnostics for dependence within time series extremes.
\newblock {\em Journal of the Royal Statistical society, Series B\/}~{\em
  65\/}(2), 521--543.

\bibitem[\protect\citeauthoryear{Norli}{Norli}{2010}]{ali2010}
Norli, A. (2010).
\newblock Short run stock overreaction: Evidence from {B}ursa {M}alaysia.
\newblock {\em Journal of Economics and management\/}~{\em 4\/}(2), 319--333.

\bibitem[\protect\citeauthoryear{Perfekt}{Perfekt}{1997}]{Perfekt1997}
Perfekt, R. (1997).
\newblock Extreme value theory for a class of {Markov} chains with values in
  $\mathbb{R}^d$.
\newblock {\em Advances in Applied Probability\/}~{\em 29\/}(1), 138--164.

\bibitem[\protect\citeauthoryear{Pratt}{Pratt}{1960}]{pratt60}
Pratt, J.~W. (1960).
\newblock On interchanging limits and integrals.
\newblock {\em Annals of Mathematical Statistics\/}~{\em 31\/}(1), 74--77.

\bibitem[\protect\citeauthoryear{Segers}{Segers}{2007}]{segersmarkov}
Segers, J. (2007).
\newblock Multivariate regular variation of heavy-tailed markov chains.
\newblock Technical report, Universit\'e catholique de Louvain.
\newblock Available at http://arxiv.org/abs/math/0701411.

\bibitem[\protect\citeauthoryear{Smith}{Smith}{1987}]{smith87}
Smith, R. (1987).
\newblock Estimating tails of probability distributions.
\newblock {\em Annals of Statistics\/}~{\em 15\/}(3), 1174--1207.

\bibitem[\protect\citeauthoryear{Smith}{Smith}{1992}]{smith:1992}
Smith, R.~L. (1992).
\newblock The extremal index for a {Markov} chain.
\newblock {\em Journal of Applied Probability\/}~{\em 29\/}(1), 37--45.

\bibitem[\protect\citeauthoryear{Tawn}{Tawn}{1988}]{Tawn1988}
Tawn, J.~A. (1988).
\newblock Bivariate extreme value theory: Models and estimation.
\newblock {\em Biometrika\/}~{\em 75\/}(3), 397--415.

\bibitem[\protect\citeauthoryear{van~der Vaart and Wellner}{van~der Vaart and
  Wellner}{1996}]{vdVW96}
van~der Vaart, A.~W. and J.~A. Wellner (1996).
\newblock {\em Weak Convergence of Empirical Processes}.
\newblock New York: Springer.

\bibitem[\protect\citeauthoryear{Yun}{Yun}{2000}]{yun2000}
Yun, S. (2000).
\newblock The distributions of cluster functionals of extreme events in a
  dth-order markov chain.
\newblock {\em Journal of Applied Probability\/}~{\em 37\/}(1), 29--44.

\end{thebibliography}

\end{document}